\newtheorem{thm}{Theorem}[section]
\newtheorem{lem}[thm]{Lemma}
\newtheorem{prop}[thm]{Proposition}
\theoremstyle{definition}
\newtheorem{defn}[thm]{Definition}
\newtheorem{rem}[thm]{Remark}
\newtheorem{example}[thm]{Example}
\numberwithin{equation}{section}
\newcommand{\be}{\begin{equation}}
\newcommand{\ee}{\end{equation}}
\newcommand{\ba}{\begin{aligned}}
\newcommand{\ea}{\end{aligned}}
\newcommand{\cF}{\mathcal{F}}
\newcommand{\bF}{\mathbb{F}}
\newcommand{\R}{\mathbb{R}}
\newcommand{\N}{\mathbb{N}}
\newcommand{\EE}{\mathbb{E}}
\newcommand{\PP}{P}
\newcommand{\cB}{\mathcal{B}}
\newcommand{\cC}{\mathcal{C}}
\newcommand{\cL}{\mathcal{L}}
\newcommand{\cI}{\mathcal{I}}
\newcommand{\cP}{\mathcal{P}}
\newcommand{\cU}{\mathcal{U}}
\newcommand{\cS}{\mathcal{S}}
\newcommand{\cD}{\mathcal{D}}
\newcommand{\cR}{\mathcal{R}}
\newcommand{\cV}{\mathcal{V}}
\newcommand{\cY}{\mathcal{Y}}
\newcommand{\ind}{\mathbf{1}}
\newcommand{\strat}{\Theta}
\newcommand{\strathat}{\widehat{\strat}}
\newcommand{\proj}{{\rm p}}
\newcommand{\ud}{\mathrm{d}}
\DeclareMathOperator{\essinf}{ess\,inf}
\DeclareMathOperator{\esssup}{ess\,sup}
\title[Arbitrage concepts under trading restrictions in discrete-time]{Arbitrage concepts under trading restrictions in discrete-time financial markets}
\author[C. Fontana]{Claudio Fontana}
\author[W. J. Runggaldier]{Wolfgang J. Runggaldier}
\address{Department of Mathematics ``Tullio Levi - Civita'', University of Padova, Italy.}
\email{fontana@math.unipd.it; runggal@math.unipd.it}
\date{\today}
\keywords{Trading constraints; market viability; arbitrage of the first kind; num\'eraire portfolio.}
\thanks{{\em JEL classification}: C02, C61, G11, G13.\\
%We are thankful to an anonymous Reviewer and the Editor-in-Chief for constructive comments and suggestions. 
Financial support from the University of Padova (research programme BIRD190200/19) and the Europlace Institute of Finance is gratefully acknowledged. Declarations of interest: none.}
\begin{document}

\maketitle

\begin{abstract}
In a discrete-time setting, we study arbitrage concepts in the presence of convex trading constraints.
We show that solvability of portfolio optimization problems is equivalent to absence of arbitrage of the first kind, a condition weaker than classical absence of arbitrage opportunities. 
We center our analysis on this characterization of market viability and derive versions of the fundamental theorems of asset pricing based on portfolio optimization arguments.
By considering specifically a discrete-time setup, we simplify existing results and proofs that rely on semimartingale theory, thus allowing for a clear understanding of the foundational economic concepts involved.
We exemplify these concepts, as well as some unexpected situations, in the context of one-period factor models with arbitrage opportunities under borrowing constraints.
\end{abstract}

\section{Introduction}		\label{sec:intro}

The notions of arbitrage, market viability and state-price deflators are deeply connected and play a foundational role in financial economics and mathematical finance.
Starting from the seminal works \cite{Ross77,Ross78}, the connections between these three concepts represent the essence of the fundamental theorem of asset pricing.\footnote{We refer the reader to \cite{Schachermayer10} for an excellent overview of the main steps in the development of discrete-time and continuous-time versions of the fundamental theorem of asset pricing. The terminology {\em fundamental theorem of asset pricing} has been introduced in \cite{DybvigRoss87}.}
In frictionless discrete-time financial markets, if no trading restrictions are imposed, the appropriate no-arbitrage concept takes the classical form of absence of arbitrage opportunities ({\em no classical arbitrage}). By the fundamental theorem of asset pricing of \cite{HarrisonKreps79,HarrisonPliska81} (extended to general probability spaces in \cite{DMW90}), this is equivalent to the existence of an equivalent martingale measure, whose density acts as a state-price deflator.
Moreover, always in the absence of trading restrictions, the results of \cite{RasonyStettner06} imply that no classical arbitrage is equivalent to market viability, intended as the solvability of portfolio optimization problems.
No classical arbitrage thus represents the minimal economically meaningful no-arbitrage requirement for a frictionless discrete-time financial market.

In the presence of trading restrictions, these results continue to hold true as long as the set of constrained strategies is a cone, provided that equivalent martingale measures are replaced by equivalent supermartingale measures (see \cite[Theorem 9.9]{FollmerSchied} and Theorem \ref{thm:arb} below). 
However, many practically relevant trading restrictions, such as borrowing constraints or the possibility of limited short sales, correspond to {\em convex non-conic} constraints.
In this case, as it will be shown below, market viability is no longer equivalent to no classical arbitrage, but rather to the weaker condition of {\em no arbitrage of the first kind} (NA$_1$). 
Under convex trading restrictions, NA$_1$ represents therefore the minimal economically meaningful concept of no-arbitrage and is equivalent to the existence of a {\em num\'eraire portfolio} or, more generally, a {\em supermartingale deflator}.

The NA$_1$ condition, introduced under this terminology in \cite{Kardaras10}, corresponds to the absence of positive payoffs that can be super-replicated with an arbitrarily small initial capital and is equivalent to the no unbounded profit with bounded risk condition studied in the seminal work \cite{KaratzasKardaras07} (see also \cite{Fontana15} for an analysis of no-arbitrage conditions equivalent to NA$_1$).
In continuous-time, a complete theory based on NA$_1$ has been developed in a general semimartingale setting starting with \cite{KaratzasKardaras07}, also allowing for convex (non-conic) constraints. The connection between NA$_1$ and market viability has been characterized in \cite{ChoulliDengMa15} in an unconstrained semimartingale setting (see also \cite{CCFM17} for further results in this direction).

Scarce attention has, however, been specifically paid to NA$_1$ in discrete-time models, despite their widespread use in economic theory.
This is also due to the fact that, for discrete-time markets with conic constraints, there is no distinction between NA$_1$ and no classical arbitrage (see Remark \ref{rem:compare_arb} below). 
To the best of our knowledge, the only works that specifically address discrete-time models by relying on no-arbitrage requirements weaker than no classical arbitrage are \cite{ElsingerSummer01} and \cite{KS09}. 
In a one-period model on a finite probability space, \cite{ElsingerSummer01} show that limited forms of arbitrage may coexist with market equilibrium under convex constraints (see Remark \ref{rem:ElsingerSummer} below for a more detailed discussion). 
Closer to our setting, \cite{KS09} derive the central results of \cite{KaratzasKardaras07} on the num\'eraire portfolio in a one-period setting.

The present paper intends to fill this gap in the literature, in the framework of general discrete-time models with convex (not necessarily conic) constraints. Compared to \cite{ElsingerSummer01,KS09}, we develop a complete theory of asset pricing based on NA$_1$, also in the case of multi-period models with random convex constraints.
We prove that market viability is equivalent to NA$_1$, thereby showing that no classical arbitrage may pose unnecessary restrictions in the case of non-conic constraints. 
Building our analysis on this central result, we derive versions of the fundamental theorem of asset pricing, study the valuation of contingent claims and discuss non-trivial examples of our theory in the context of general factor models.
We make a systematic effort to provide direct and self-contained proofs based on portfolio optimization arguments.
The simplicity of the discrete-time structure allows for a clear understanding of the economic concepts involved, avoiding the technicalities of the continuous-time semimartingale setup.

%Scarce attention has, however, been specifically paid to NA$_1$ in discrete-time models, despite their widespread use in economic theory.
%This is also due to the fact that, for discrete-time markets with conic constraints, there is no distinction between NA$_1$ and no classical arbitrage (see Remark \ref{rem:compare_arb} below). 
%To the best of our knowledge, the only work that specifically addresses discrete-time models from the viewpoint of NA$_1$ is \cite{KS09}, which derives the central results of \cite{KaratzasKardaras07} in a one-period setting.
%The present paper intends to fill this gap in the literature, in the framework of general discrete-time models with convex (not necessarily conic) constraints.
%\textcolor{blue}{We show that no classical arbitrage may pose unnecessary restrictions under convex non-conic constraints, whereas a complete theory of asset pricing can be developed on the basis of NA$_1$.}
%Centering our analysis on the equivalence between NA$_1$ and market viability, we make a systematic effort to provide direct and self-contained proofs based on portfolio optimization arguments.
%The simplicity of the discrete-time structure allows for a clear understanding of the economic concepts involved, avoiding the technicalities of the continuous-time semimartingale setup.

The paper is divided into three sections, whose contents and contributions can be outlined as follows.
In Section \ref{sec:1p}, we consider a general one-period setting. Extending the analysis of \cite{KS09}, we prove the equivalence between NA$_1$ and the solvability of portfolio optimization problems (market viability), thus establishing the minimality of NA$_1$ from an economic standpoint. 
%In Section \ref{sec:1p}, we present a complete theory based on NA$_1$ in a general one-period setting, extending the analysis of \cite{KS09}. We prove the equivalence between NA$_1$ and the solvability of portfolio optimization problems (market viability), thus establishing the minimality of NA$_1$ from an economic standpoint. 
This enables us to obtain a direct proof of the characterization of NA$_1$ in terms of the existence of the num\'eraire portfolio or, more generally, a deflator. 
We show that NA$_1$ leads to a dual representation of super-hedging values and a characterization of attainable claims, and permits to rely on several well-known hedging approaches in constrained incomplete markets, even in the presence of arbitrage opportunities.
Besides its pedagogical value, the one-period setting introduces several techniques that will be important for the analysis of the multi-period case.

Section \ref{sec:arbitrage_model} illustrates the theory in the context of factor models with borrowing constraints. 
We introduce a general factor model, where a single factor is responsible of potential arbitrage opportunities. 
In this setting, the NA$_1$ condition and the set of arbitrage opportunities admit explicit descriptions in terms of the factor loadings. When NA$_1$ holds but no classical arbitrage does not, we show the existence of a maximal arbitrage strategy.
These results can be easily visualized in a two-dimensional setting, which enables us to provide examples of situations where, despite the existence of arbitrage opportunities, it is not necessarily optimal to invest in them.
The analysis of this section clarifies the interplay between the support of the asset returns distribution, their dependence structure and the borrowing constraints.

Finally, Section \ref{sec:multiperiod} generalizes the central results of Section \ref{sec:1p} to a multi-period setting with random convex constraints. 
We derive several new characterizations of NA$_1$, showing that it holds globally if and only if it holds in each single trading period, and prove its equivalence to market viability. 
The most general result on the solvability of portfolio optimization problems in discrete-time was obtained in \cite{RasonyStettner06}, relying on no classical arbitrage. Our Theorem \ref{thm:utility_mp} extends this result by introducing trading restrictions and weakening the no-arbitrage requirement to the minimal condition of NA$_1$ (in turn, our proofs of Theorems \ref{thm:utility} and \ref{thm:utility_mp} are inspired from \cite{RasonyStettner06}).
By generalizing the one-period analysis, we then give an easy proof of the equivalence between NA$_1$, the existence of the num\'eraire portfolio and the existence of a supermartingale deflator, for general discrete-time models with random convex constraints.

We close this introduction by briefly reviewing some related literature,  limiting ourselves to selected contributions that are specifically connected with the present discussion.
Relying on the concept of no classical arbitrage, the fundamental theorem of asset pricing with constraints on the amounts invested in the risky assets is proved in \cite{PhamTouzi99} in the case of conic constraints (see also \cite{KoehlPham00,Pham00} for valuation and hedging problems in that setting) and in \cite{Brannath97} in the case of convex constraints. 
The specific case of short-sale constraints is treated in the earlier work \cite{JouiniKallal95}. General forms of conic constraints have been considered in \cite{Napp03}, extending the analysis of \cite{PhamTouzi99}.
In the case of convex constraints on the fractions of wealth invested, as considered in the present work, versions of the fundamental theorem of asset pricing based on the usual notion of no classical arbitrage  are given in \cite{CPT01,EST04,Rokhlin05}. In comparison to the latter contributions, we choose to work with the weaker concept of NA$_1$, due to its equivalence to market viability.
In an unconstrained setting, the connection between no classical arbitrage and market viability is studied in \cite{RasonyStettner05,RasonyStettner06}, generalized in \cite{Nutz16} under model uncertainty.
In the presence of model uncertainty and convex portfolio constraints, \cite{BayraktarZhou17} prove a version of the fundamental theorem of asset pricing based on a robust generalization of the notion of no classical arbitrage.
Finally, we mention the recent work \cite{BCL19}, where super-hedging has been studied under a weak no-arbitrage condition, called absence of immediate profits. However, the latter condition does not suffice to ensure market viability.

\section{The single-period setting}		\label{sec:1p}

We consider a general financial market in a one-period economy, where $d$ risky assets are traded, together with a riskless asset with constant price equal to one. We assume that asset prices are discounted with respect to a baseline security and are represented by the vector $S_t=(S^1_t,\ldots,S^d_t)^{\top}\in\R^d_+$, for $t=0,1$, expressed in terms of returns as
\[
S^i_1 = S^i_0(1+R^i),
\qquad\text{ for all }i=1,\ldots,d,
\]
where $R=(R^1,\ldots,R^d)^{\top}$ is a $d$-dimensional random vector on a given probability space $(\Omega,\cF,\PP)$ such that $R^i\geq-1$ a.s., for all $i=1,\ldots,d$.
We denote by $\cS$ the support of the distribution of $R$, namely the smallest closed set $A\subset\R^d$ such that $\PP(R\in A)=1$ (see \cite[Proposition 1.45]{FollmerSchied}). We also denote by $\cL$ the smallest linear subspace of $\R^d$ containing $\cS$ and by $\cL^{\perp}$ its orthogonal complement in $\R^d$. The orthogonal projection of a vector $x\in\R^d$ on $\cL$ is denoted by $\proj_{\cL}(x)$.

\subsection{Trading restrictions}	\label{sec:constraints}

Trading strategies are denoted by vectors $\pi\in\R^d$. We write $V^{\pi}_t(v)$ for the wealth at time $t$ generated by strategy $\pi$ starting from initial capital $v>0$, with
\[
V^{\pi}_0(v) = v
\qquad\text{ and }\qquad
V^{\pi}_1(v) = v(1+\langle\pi,R\rangle),
\]
where $\langle\cdot,\cdot\rangle$ denotes the scalar product in $\R^d$.
With this notation, a trading strategy $\pi$ represents fractions of wealth held in the $d$ risky assets, with the remaining fraction $1-\langle\pi,\mathbf{1}\rangle$ being held in the riskless asset. 
For $i=1,\ldots,d$, a negative value of $\pi_i$ corresponds to a short position in the $i$-th risky asset, whenever short sales are allowed. Similarly, $\langle\pi,\mathbf{1}\rangle<1$ corresponds to a positive investment in the riskless asset, while $\langle\pi,\mathbf{1}\rangle>1$ corresponds to borrowing from the riskless asset.

Note that $V^{\pi}_t(v)=vV^{\pi}_t(1)$, for all $v>0$ and $t=0,1$.
In the following, we shall use the notation $V^{\pi}_t:=V^{\pi}_t(1)$. A trading strategy $\pi$ is said to be {\em admissible} if $V^{\pi}_1\geq0$ a.s. Denoting by $\strat_{\rm adm}$ the set of all admissible trading strategies, it holds that (see, e.g., \cite[Lemma 4.3]{KS99})
\[
\strat_{\rm adm} = \{\pi\in\R^d : \langle\pi,z\rangle\geq-1\text{ for all }z\in\cS\}.
\]
In the terminology of \cite{KaratzasKardaras07,KS09}, the set $\strat_{\rm adm}$ corresponds to the {\em natural constraints} ensuring non-negative wealth. 
Observe that, with the present parametrization, the notion of admissibility does not depend on the initial capital.

Besides the natural constraints, we assume that market participants face additional trading restrictions, represented by a convex closed set $\strat_{\rm c}\subseteq\R^d$. Realistic examples of trading restrictions include the following situations (see also \cite[Section 4]{CPT01} for additional examples):
\begin{enumerate}[(i)]
\item {\em prohibition of short-selling}: $\strat_{\rm c}=\R^d_+$;
\item {\em prohibition of short-selling and borrowing}: $\strat_{\rm c}=\Delta^d$, where $\Delta^d:=\{\pi\in\R^d_+ : \langle\pi,\mathbf{1}\rangle\leq 1\}$;
\item {\em limits to borrowing}: $\strat_{\rm c}=\{\pi\in\R^d : \langle\pi,\mathbf{1}\rangle\leq c\}$, for some $c\geq1$;
\item {\em limited positions in the risky assets}: $\strat_{\rm c}=\prod_{i=1}^d[-\alpha_i,\beta_i]$, for some $\alpha_i,\beta_i>0$, $i=1,\ldots,d$.
\end{enumerate}
Market participants are restricted to choose strategies  belonging to the set $\strat:=\strat_{\rm adm}\cap\strat_{\rm c}$. We refer to such strategies as {\em allowed strategies}.
Observe that, as illustrated by examples (ii) and (iii) above, trading restrictions on the riskless asset can be enforced by  specifying through the set $\Theta_{\rm c}$ suitable restrictions on the total fraction of wealth held in the $d$ risky assets.

In general, the financial market may contain {\em redundant assets}, meaning that different combinations of assets may generate identical portfolio returns. This happens whenever $\cL^{\perp}$ is strictly bigger than $\{0\}$. Indeed, $\rho\in\cL^{\perp}$ if and only if $\langle\pi,R\rangle=\langle\pi+\rho,R\rangle$ a.s. for every $\pi\in\strat$.
In other words, investing according to a strategy $\rho\in\cL^{\perp}$ does not produce any loss or profit and, therefore, does not alter the outcome of any other allowed strategy $\pi$. For this reason, we shall assume that investors are always allowed to choose trading strategies in the set $\cL^{\perp}$, meaning that $\cL^{\perp}\subset\strat_{{\rm c}}$. In turn, this implies that $\cL^{\perp}\subset\strat$. 

To the convex closed set $\strat$, we associate its {\em recession cone} $\strathat$, defined as the set of all vectors $y\in\R^d$ such that $\pi+\lambda y\in\strat$ for every $\lambda\geq0$ and $\pi\in\strat$ (see \cite[Chapter 8]{Rockafellar}). The set $\strathat$ has a clear financial interpretation: it represents the set of all allowed  strategies that can be arbitrarily scaled and added to any other strategy $\pi\in\strat$ without violating admissibility and trading restrictions.
The cone $\strathat$ is closed and, by \cite[Corollary 8.3.2]{Rockafellar}, it holds that
\[
\strathat = \bigl\{\pi\in\R^d : a^{-1}\pi\in\strat \text{ for all }a>0\bigr\} = \bigcap_{a>0}a\strat.
\]
As a consequence of the fact that $\cL^{\perp}$ is a linear subspace of $\strat$, it holds that $\cL^{\perp}\subseteq\strathat$. In turn, the latter property can be easily seen to imply that $\proj_{\cL}(\strat)\subseteq\strat$, i.e., $\proj_{\cL}(\pi)\in\strat$ for all $\pi\in\strat$.

\subsection{Arbitrage concepts}	\label{sec:arbitrage_concepts}

We proceed to recall two important notions of arbitrage. First, we define the set\footnote{The definition of the set $\cI_{\rm arb}$ is coherent with the classical definition of arbitrage (see \cite[Definition 1.2]{FollmerSchied}). 
Indeed,
%it is straightforward to check that 
$\cI_{\rm arb}\neq\emptyset$ if and only if there exists a portfolio $(\vartheta_0,\vartheta)\in\R\times\R^d$ such that $\vartheta_0+\langle\vartheta,S_0\rangle=0$, $\vartheta_0+\langle\vartheta,S_1\rangle\geq0$ a.s. and $P(\vartheta_0+\langle\vartheta,S_1\rangle>0)>0$, with $\vartheta_0$ and $\vartheta$ denoting respectively the number of shares of the riskless asset and of the $d$ risky assets. 
Assuming without loss of generality that $S^i_0>0$, for all $i=1,\ldots,d$, this equivalence follows in a straightforward way by setting  $\vartheta_i=\pi_i/S^i_0$, for $i=1,\ldots,d$, and $\vartheta_0=-\langle\pi,\mathbf{1}\rangle$.
This shows that absence of arbitrage can be equivalently understood as a property of the returns $R$ or of the price couple $(S_0,S_1)$.}
\[
\cI_{\rm arb} :=\bigl\{\pi\in\R^d : \langle\pi,z\rangle\geq0\text{ for all }z\in\cS\bigr\}\setminus\cL^{\perp}.
\]
Under trading restrictions, the set of {\em arbitrage opportunities} is given by $\cI_{\rm arb}\cap\strat$ and consists of all allowed strategies $\pi$ that generate a non-negative and non-null return (see also \cite[Definition 3.5]{KS09}). We say that {\em no classical arbitrage} holds if $\cI_{\rm arb}\cap\strat=\emptyset$.

We now recall a second and stronger notion of arbitrage (see \cite[Definition 1]{Kardaras10}). To this effect, we define as follows the super-hedging value $v(\xi)$ of a non-negative random variable $\xi$:
\be	\label{eq:superhedge_price}
v(\xi) := \inf\bigl\{v>0 : \exists\;\pi\in\strat \text{ such that }v(1+\langle\pi,R\rangle)\geq\xi\text{ a.s.}\bigr\}.
\ee
In the next definition, we denote by $L^0_+$  the family of non-negative  random variables on $(\Omega,\cF)$.

\begin{defn}	\label{def:NA1}
A random variable $\xi\in L^0_+$ with $\PP(\xi>0)>0$ is an {\em arbitrage of the first kind} if $v(\xi)=0$.  We say that {\em no arbitrage of the first kind} (NA$_1$) holds if $v(\xi)=0$ implies $\xi=0$ a.s.
\end{defn}

An arbitrage of the first kind consists of a non-negative non-null payoff that can be super-replicated starting from an arbitrarily small initial capital. 
Observe that NA$_1$ is weaker than no classical arbitrage, as will be explicitly illustrated by the examples considered in Section \ref{sec:arbitrage_model}. The next proposition provides three equivalent formulations of the NA$_1$ condition.

\begin{prop}	\label{prop:NA1}
The following are equivalent:
\begin{enumerate}[(i)]
\item the {\em NA}$_1$ condition holds;
\item $\cI_{\rm arb}\cap\strathat=\emptyset$;
\item $\strathat=\cL^{\perp}$;
\item the set $\strat\cap\cL$ is bounded (and, hence, compact).
\end{enumerate}
\end{prop}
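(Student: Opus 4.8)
The plan is to establish the cycle of implications (i) $\Rightarrow$ (ii) $\Rightarrow$ (iii) $\Rightarrow$ (iv) $\Rightarrow$ (i), separating the purely geometric passages among (ii), (iii), (iv) — a matter of convex analysis — from the two implications that bridge them with the analytic condition (i). The one preliminary fact I will use repeatedly is that every $y\in\strathat$ satisfies $\langle y,z\rangle\geq0$ for all $z\in\cS$: since $\strathat=\bigcap_{a>0}a\strat$ we have $y/a\in\strat\subseteq\strat_{\rm adm}$ for all $a>0$, hence $\langle y,z\rangle\geq-a$ for all $a>0$, and we let $a\downarrow0$. Combined with the standing inclusion $\cL^{\perp}\subseteq\strathat$, this gives $\strathat\cap\cI_{\rm arb}=\strathat\setminus\cL^{\perp}$, so (ii) forces $\strathat\subseteq\cL^{\perp}$ and hence (iii). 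For (iii) $\Rightarrow$ (iv) I will use that, since $0\in\strat\cap\cL$, the recession cone of the closed convex set $\strat\cap\cL$ equals the intersection of the recession cones of $\strat$ and of $\cL$, that is $\strathat\cap\cL$ (cf.\ \cite[Chapter 8]{Rockafellar}); under (iii) this reduces to $\cL^{\perp}\cap\cL=\{0\}$, and a nonempty closed convex set whose recession cone is trivial is bounded (\cite[Theorem 8.4]{Rockafellar}), so $\strat\cap\cL$ is compact.

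To close the loop with (i), both remaining implications are proved by contraposition. For (i) $\Rightarrow$ (ii): given $y\in\cI_{\rm arb}\cap\strathat$, set $\xi:=\langle y,R\rangle$. By the preliminary fact $\xi\geq0$ a.s., and $\xi$ is a.s.\ finite; moreover $y\notin\cL^{\perp}$ forces $\langle y,z_0\rangle>0$ for some $z_0\in\cS$, and since $\{z:\langle y,z\rangle>0\}$ is an open neighbourhood of the support point $z_0$, it has positive probability, so $\PP(\xi>0)>0$. For every $v>0$ we have $y/v\in\strat$ (as $y\in\strathat$) and $v\bigl(1+\langle y/v,R\rangle\bigr)=v+\xi\geq\xi$ a.s., whence $v(\xi)\leq v$; letting $v\downarrow0$ gives $v(\xi)=0$, so $\xi$ is an arbitrage of the first kind and NA$_1$ fails.

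For (iv) $\Rightarrow$ (i): if NA$_1$ fails, choose $\xi\in L^0_+$ with $\PP(\xi>0)>0$ and $v(\xi)=0$, and pick $v_n\downarrow0$ with $\pi_n\in\strat$ such that $v_n\bigl(1+\langle\pi_n,R\rangle\bigr)\geq\xi$ a.s. Since $\proj_{\cL}(\pi_n)\in\strat$ (by the stated property $\proj_{\cL}(\strat)\subseteq\strat$) and $\langle\proj_{\cL}(\pi_n),R\rangle=\langle\pi_n,R\rangle$ a.s.\ (the discarded component lies in $\cL^{\perp}$ while $R\in\cL$ a.s.), I may assume $\pi_n\in\strat\cap\cL$, so by (iv) $|\pi_n|\leq C$ for some $C<\infty$. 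Then, by Cauchy--Schwarz, $0\leq\xi\leq v_n\bigl(1+\langle\pi_n,R\rangle\bigr)\leq v_n(1+C|R|)$ a.s., and the right-hand side $\to0$ a.s.\ since $|R|<\infty$ a.s.; hence $\xi=0$ a.s., a contradiction. I expect the main — indeed the only real — difficulty to lie in these two bridging steps: NA$_1$ is an ``at infinity'' property of $\strat$, so one must on the one hand convert a recession direction into a genuinely nonzero super-replicable payoff, which is where the support argument is essential, and on the other hand project onto $\cL$ before the compactness in (iv) can be exploited, since $\strat$ itself may be unbounded even under NA$_1$.
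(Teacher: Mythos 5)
Your proof is correct and follows essentially the same route as the paper: the same cycle $(i)\Rightarrow(ii)\Rightarrow(iii)\Rightarrow(iv)\Rightarrow(i)$, the same use of $\strathat\subseteq\bigcap_{a>0}a\strat_{\rm adm}$ to get nonnegativity of returns along recession directions, the same appeal to \cite[Theorem 8.4]{Rockafellar} via $\widehat{\strat\cap\cL}=\strathat\cap\cL$, and the same projection onto $\cL$ in the last step. The only differences are cosmetic: you spell out why $\PP(\langle y,R\rangle>0)>0$ via the support, and you phrase $(iv)\Rightarrow(i)$ as a squeeze $\xi\leq v_n(1+C\|R\|)\to0$ rather than as a direct contradiction with boundedness.
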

\begin{proof}
$(i)\Rightarrow(ii)$: by way of contradiction, suppose that NA$_1$ holds and there exists $\pi\in\cI_{\rm arb}\cap\strathat$. Then $\xi:=\langle\pi,R\rangle\in L^0_+$ and $\PP(\xi>0)>0$. For every $v>0$, it holds that  $\pi/v\in\strat$ and $v(1+\langle\pi/v,R\rangle)>\xi$ a.s. This implies that $v(\xi)=0$, yielding a contradiction to NA$_1$.\\
$(ii)\Rightarrow(iii)$: we already know that $\cL^{\perp}\subseteq\strathat$. 
Conversely, since $\strathat\subseteq\bigcap_{a>0}a\strat_{\rm adm}$, every element $\pi\in\strathat$ satisfies $\langle\pi,R\rangle\geq0$ a.s.
Condition {\em (ii)} then implies $\langle\pi,R\rangle=0$ a.s., so that $\pi\in\cL^{\perp}$.\\
$(iii)\Rightarrow(iv)$: the set $\strat\cap\cL$ is non-empty, closed and convex. Hence, by \cite[Theorem 8.4]{Rockafellar}, $\strat\cap\cL$ is bounded if and only if its recession cone $\widehat{\strat\cap\cL}$ consists of the zero vector alone. 
Since $\widehat{\strat\cap\cL}=\strathat\cap\cL$, condition $(iii)$ implies that $\widehat{\strat\cap\cL}=\{0\}$, thus establishing property {\em (iv)}.\\
$(iv)\Rightarrow(i)$: by way of contradiction, let $\xi\in L^0_+$ with $\PP(\xi>0)>0$ and suppose that, for all $n\in\N$, there exists $\pi^n\in\strat$ such that $n^{-1}(1+\langle\pi^n,R\rangle)\geq\xi$ a.s. In this case, it holds that $1+\langle\proj_{\cL}(\pi^n),R\rangle\geq n\xi$ a.s., for all $n\in\N$. Since $\PP(\xi>0)>0$ and $\proj_{\cL}(\pi^n)\in\strat\cap\cL$, for every $n\in\N$, this contradicts the boundedness of the set $\strat\cap\cL$.
\end{proof}

The properties stated in Proposition \ref{prop:NA1} admit natural and direct interpretations, which can be formulated as follows:
\begin{enumerate}
\item[{\em (ii)}] there do not exist arbitrage opportunities that can be arbitrarily scaled;
\item[{\em (iii)}] all allowed strategies that can be arbitrarily scaled reduce to trivial strategies;
\item[{\em (iv)}] all allowed strategies not containing degeneracies are bounded.
\end{enumerate}
As shown in Sections \ref{sec:ftaps} and \ref{sec:hedging} below, the compactness property {\em (iv)} is fundamental, since it allows solving optimal portfolio and hedging problems under NA$_1$, even when no classical arbitrage fails to hold.
The equivalence $(i)\Leftrightarrow(iv)$ is therefore the most important novel insight provided by Proposition \ref{prop:NA1}.
The condition $\cI_{\rm arb}\cap\strathat=\emptyset$ appears in \cite{KaratzasKardaras07,KS09} under the name {\em no unbounded increasing profit} (NUIP), where the unboundedness refers to the fact that the arbitrage profit generated by an element of $\cI_{\rm arb}\cap\strathat$ can be scaled to arbitrarily large values.

\begin{rem}	\label{rem:compare_arb}
Under {\em conic} trading restrictions, no classical arbitrage holds if and only if there are no arbitrages of the first kind. This simply follows from the observation that, if $\strat_{\rm c}$ is a cone, then $\cI_{\rm arb}\cap\strathat=\cI_{\rm arb}\cap\strat$. 
This implies that the two arbitrage concepts differ only in the presence of additional restrictions beyond conic (and, in particular, natural) constraints.
\end{rem}

\begin{rem}[On relative arbitrage]	\label{rem:rel_arb}
The arbitrage concepts introduced so far have been implicitly defined with respect to the riskless asset with constant price equal to one. More generally, in the spirit of \cite[Definition 6.1]{FK09}, a strategy $\pi\in\strat$ is said to be an {\em arbitrage opportunity relative to $\theta\in\strat$} if $P(V^{\pi}_1\geq V^{\theta}_1)=1$ and $P(V^{\pi}_1>V^{\theta}_1)>0$ or, equivalently, if $\pi-\theta\in\cI_{\rm arb}\cap(\strat-\theta)$.
If $\theta\in\strathat_{\rm c}$, then $\cI_{\rm arb}\cap(\strat-\theta)=\emptyset$ implies no classical arbitrage (i.e., $\cI_{\rm arb}\cap\strat=\emptyset$). Conversely, if $-\theta\in\strathat_{\rm c}$, then $\cI_{\rm arb}\cap\strat=\emptyset$ implies $\cI_{\rm arb}\cap(\strat-\theta)=\emptyset$. 
It follows that, for every $\theta\in\strathat_{\rm c}\cap(-\strathat_{\rm c})$, no classical arbitrage coincides with absence of arbitrage opportunities relative to $\theta$.\footnote{The condition $\theta\in\strathat_{\rm c}\cap(-\strathat_{\rm c})$ amounts to saying that arbitrary long and short positions in the portfolio $\theta$ are not precluded by the trading restrictions represented by $\strat_{\rm c}$. This condition is conceptually equivalent to the requirement appearing in the definition of num\'eraire adopted in \cite{KST16} (see Definition 10 therein).}
However, there is no general implication between the two conditions $\cI_{\rm arb}\cap\strat=\emptyset$ and $\cI_{\rm arb}\cap(\strat-\theta)=\emptyset$.
Observe that, unlike arbitrage opportunities, the notion of arbitrage of the first kind is universal, in the sense that it does not depend on a reference strategy $\theta$ (see Definition \ref{def:NA1}). 
The relation between NA$_1$ and relative arbitrage  is further discussed in Remark \ref{rem:NA1_relarb}.
\end{rem}

\subsection{Market viability and fundamental theorems}	\label{sec:ftaps}

The economic relevance of the NA$_1$ condition is explained by its equivalence with the solvability of optimal portfolio problems, as shown in the next theorem. 
We denote by $\cU$ the set of all {\em random utility functions}, consisting of all functions $U:\Omega\times\R_+\rightarrow\R\cup\{-\infty\}$ such that $U(\cdot,x)$ is $\cF$-measurable and bounded from below, for every $x>0$, and $U(\omega,\cdot)$ is continuous, strictly increasing and concave, for a.e. $\omega\in\Omega$.\footnote{For simplicity of notation, we shall omit to denote explicitly the dependence of $U$ on $\omega$ in the following.}
Besides allowing for the possibility of random endowments or state-dependent preferences, the extension to {\em random} utility functions will be needed in the proof of Theorem \ref{thm:numeraire} as well as for the solution of certain hedging and valuation problems (see the last part of Section \ref{sec:hedging}).

\begin{thm}	\label{thm:utility}
The following are equivalent:
\begin{enumerate}[(i)]
\item the {\em NA}$_1$ condition holds;
\item for every $U\in\cU$ such that $\sup_{\pi\in\strat}\EE[U^+(V^{\pi}_1)]<+\infty$, there exists an allowed strategy $\pi^*\in\strat\cap\cL$ such that
\be	\label{eq:opt_utility}
\EE\bigl[U(V^{\pi^*}_1)\bigr] 
= \underset{\pi\in\strat}{\sup}\,\EE\bigl[U(V^{\pi}_1)\bigr].
\ee
\end{enumerate}
\end{thm}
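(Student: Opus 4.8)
The plan is to prove the two implications separately, relying crucially on the characterization $(i)\Leftrightarrow(iv)$ in Proposition \ref{prop:NA1}, namely that NA$_1$ is equivalent to compactness of the set $\strat\cap\cL$.

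\emph{Step 1: $(ii)\Rightarrow(i)$ by contraposition.} Suppose NA$_1$ fails. By Proposition \ref{prop:NA1}, there exists a nonzero $\pi\in\strathat\setminus\cL^{\perp}$, equivalently $\pi\in\cI_{\rm arb}\cap\strathat$ with $\langle\pi,R\rangle\geq 0$ a.s.\ and $\PP(\langle\pi,R\rangle>0)>0$. I would then exhibit a bounded-from-below random utility function $U\in\cU$ (for instance $U(x)=\sqrt{x}$, or $U(x)=1-e^{-x}$ to also ensure boundedness of $U^+$, so that the finiteness hypothesis in $(ii)$ is satisfied) for which the supremum in \eqref{eq:opt_utility} is not attained. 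Since $\proj_\cL(\pi)\in\strat\cap\cL$ generates the same return and lies in the recession cone $\strathat\cap\cL$, for any candidate optimizer $\pi^*$ the strategies $\pi^*+\lambda\proj_\cL(\pi)$ remain in $\strat$ for all $\lambda\geq 0$ and strictly increase $\EE[U(V^{\pi}_1)]$; hence no maximizer exists, contradicting $(ii)$.

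\emph{Step 2: $(i)\Rightarrow(ii)$.} Assume NA$_1$, so $K:=\strat\cap\cL$ is compact, and fix $U\in\cU$ with $\sup_{\pi\in\strat}\EE[U^+(V^{\pi}_1)]<+\infty$. First I would reduce the optimization from $\strat$ to $K$: since $\proj_\cL(\strat)\subseteq\strat$ (noted in the text) and $V^{\proj_\cL(\pi)}_1=V^{\pi}_1$ a.s., the supremum over $\strat$ equals the supremum over $K$. Next, take a maximizing sequence $(\pi^n)\subset K$; by compactness, pass to a subsequence $\pi^n\to\pi^*\in K$. The remaining task is to pass to the limit inside the expectation, i.e.\ to show $\EE[U(V^{\pi^*}_1)]\geq\limsup_n\EE[U(V^{\pi^n}_1)]$. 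This is where the structure of $\cU$ enters: $U$ is concave and increasing, so $U^+$ composed with the affine-in-$\pi$ map $\pi\mapsto V^{\pi}_1$ is dominated, via concavity bounds, by integrable quantities uniformly over the (bounded) set $K$, allowing an application of a Fatou/reverse-Fatou argument on $U^+$; on the negative part $U^-$, lower semicontinuity comes for free from Fatou's lemma together with a.s.\ continuity of $U(\omega,\cdot)$ and a.s.\ convergence $V^{\pi^n}_1\to V^{\pi^*}_1$. Combining gives upper semicontinuity of $\pi\mapsto\EE[U(V^{\pi}_1)]$ on $K$, hence attainment of the supremum at $\pi^*$.

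\emph{Main obstacle.} The delicate point is the uniform integrability needed to pass to the limit in the positive part $\EE[U^+(V^{\pi^n}_1)]$: a priori the hypothesis only gives finiteness of $\sup_{\pi\in\strat}\EE[U^+(V^{\pi}_1)]$, not a uniform integrability bound. The key trick, following the approach of \cite{RasonyStettner06}, is to use concavity of $U$ to dominate $U^+(V^{\pi}_1)$ pointwise: for a fixed reference point, concavity yields $U(V^{\pi}_1)\leq U(V^{\pi^0}_1)+U'(V^{\pi^0}_1)(V^{\pi}_1-V^{\pi^0}_1)$ on $\{V^{\pi^0}_1>0\}$, and since $\{\langle\pi,R\rangle:\pi\in K\}$ is a bounded family of random variables sharing the finiteness assumption, this produces a single integrable envelope for $\{U^+(V^{\pi}_1):\pi\in K\}$. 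Establishing this envelope carefully — in particular handling the set where wealth hits zero and the measurability/boundedness-from-below requirements built into the definition of $\cU$ — is the technical heart of the argument; once it is in place, dominated convergence on $U^+$ and Fatou on $U^-$ close the proof routinely.
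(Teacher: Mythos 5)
Your Step 1 is correct and is essentially the paper's argument for $(ii)\Rightarrow(i)$: pick any bounded $U\in\cU$ (so the finiteness hypothesis holds), and observe that a failure of NA$_1$ yields $\theta\in\cI_{\rm arb}\cap\strathat$ with which any candidate maximizer can be strictly improved. Your Step 2 also has the right skeleton — reduce to the compact set $K=\strat\cap\cL$, take a maximizing sequence, and prove upper semicontinuity of $\pi\mapsto\EE[U(V^{\pi}_1)]$ via a reverse-Fatou argument — and you correctly identify the integrable envelope for $\{U^{+}(V^{\pi}_1):\pi\in K\}$ as the crux.

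However, the mechanism you propose for that envelope does not work. The tangent bound $U(V^{\pi}_1)\leq U(V^{\pi^0}_1)+U'(V^{\pi^0}_1)(V^{\pi}_1-V^{\pi^0}_1)$ produces a majorant that grows \emph{linearly} in $V^{\pi}_1$ (and involves the possibly unbounded slope $U'(V^{\pi^0}_1)$). The hypothesis $\sup_{\pi\in\strat}\EE[U^{+}(V^{\pi}_1)]<+\infty$ gives no control on $\EE[V^{\pi}_1]$ or on $\EE[U'(V^{\pi^0}_1)V^{\pi}_1]$ when $U$ is sublinear: for $U=\log$ (precisely the case needed later for the num\'eraire portfolio), $\EE[\log^{+}(V^{\pi}_1)]<+\infty$ does not imply $\EE[V^{\pi}_1]<+\infty$, and $U'(V^{\pi^0}_1)V^{\pi}_1=V^{\pi}_1/V^{\pi^0}_1$ is integrable only if $\pi^0$ already has the num\'eraire property, which would be circular. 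The paper's device is different and is what makes the step go through: enclose the bounded set $K$ in a bounded polytope $\cP$ with extreme points $p_1,\ldots,p_N$, so that $\langle\pi,R\rangle\leq\max_j\langle p_j,R\rangle$ and hence $U^{+}(V^{\pi}_1)\leq\sum_j U^{+}(1+\langle p_j,R\rangle)=:\zeta$ uniformly over $K$; then, since the $p_j$ need not lie in $\strat$ (so the finiteness hypothesis cannot be applied to them directly), control each term via the \emph{multiplicative} inequality $U^{+}(\lambda x)\leq 2\lambda\bigl(U^{+}(x)+U(2)\bigr)$, applied with a relative-interior point $\phi$ of $K$ and a factor $\varepsilon_j\in(0,1]$ chosen so that $\phi+\varepsilon_j(p_j-\phi)\in K$, where the hypothesis does apply. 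This inequality respects the growth of $U$ itself instead of linearizing it, which is exactly what the tangent bound loses. With $\zeta$ integrable, a single reverse-Fatou application to $U(V^{\pi^n}_1)\leq\zeta$ gives upper semicontinuity directly (no separate treatment of $U^{+}$ and $U^{-}$ is needed). You should replace your envelope argument with this polyhedral-plus-scaling construction; the rest of your outline then closes as you describe.
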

\begin{proof}
$(i)\Rightarrow(ii)$: note first that  \eqref{eq:opt_utility} can be equivalently stated by maximizing over $\strat\cap\cL$, since for every $\pi\in\strat$ it holds that $\langle\pi,R\rangle=\langle\proj_{\cL}(\pi),R\rangle$ a.s. 
By Proposition \ref{prop:NA1}, NA$_1$ implies that $\widehat{\strat\cap\cL}=\strathat\cap\cL=\{0\}$. Hence, in view of \cite[Theorem 27.3]{Rockafellar}, it suffices to show that the proper concave function $u:\strat\cap\cL\rightarrow\R$ defined by
$
u:\strat\cap\cL\ni\pi\mapsto 
u(\pi):=\EE[U(1+\langle\pi,R\rangle)]
$
is upper semi-continuous.
To this effect, we adapt some of the arguments of \cite[Lemma 2.3]{RasonyStettner06} (see also \cite[Lemma 2.8]{Nutz16}). 
Since the set $\strat\cap\cL$ is bounded under NA$_1$ (see Proposition \ref{prop:NA1}), there exists a bounded polyhedral set $\cP\subset{\rm span}(\strat\cap\cL)$  such that $\strat\cap\cL\subseteq\cP$ (see, e.g., \cite[Theorem 20.4]{Rockafellar}).
Denote by $\{p_1,\ldots,p_N\}$ the set of extreme points of $\cP$. Since a linear function defined on a polyhedral set attains its maximum on the set of extreme points, it holds that
\[
\langle\pi,R\rangle 
\leq \max_{j=1,\ldots,N}\langle p_j,R\rangle,
\qquad\text{ for all }\pi\in\strat\cap\cL.
\]
By monotonicity of $U$, this implies that
\[
U^+(1+\langle\pi,R\rangle)
\leq \sum_{j=1}^NU^+(1+\langle p_j,R\rangle) =: \zeta,
\qquad\text{ for all }\pi\in\strat\cap\cL.
\]
We proceed to show that $\EE[\zeta]<+\infty$. Since $U(1)$ is bounded from below, we can assume without loss of generality that $U(1)\geq0$. We recall from \cite[Lemma 2.2]{RasonyStettner06} the inequality 
\be	\label{eq:ineq}
U^+(\lambda x)\leq 2\lambda\bigl(U^+(x)+U(2)\bigr),
\qquad\text{ for all $x>0$ and $\lambda\geq1$.}
\ee
Let $\phi$ be an element of the relative interior of $\strat\cap\cL$ and $\varepsilon_j\in(0,1]$ such that $\phi+\varepsilon_j(p_j-\phi)\in\strat\cap\cL$, for all $j=1,\ldots,N$.
By inequality \eqref{eq:ineq} and monotonicity of $U$, together with the fact that $\phi\in\strat\cap\cL\subseteq\strat_{\rm adm}$, we obtain 
\be	\label{eq:bound}	\ba
U^+\bigl(1+\langle p_j,R\rangle\bigr)
&= U^+\bigl(1+\langle\phi,R\rangle+\langle p_j-\phi,R\rangle\bigr)	\\
&\leq \frac{2}{\varepsilon_j}\left(U^+\bigl(\varepsilon_j(1+\langle\phi,R\rangle)+\varepsilon_j\langle p_j-\phi,R\rangle\bigr)+U(2)\right)	\\
&\leq  \frac{2}{\varepsilon_j}\left(U^+\bigl(1+\langle\phi+\varepsilon_j(p_j-\phi),R\rangle\bigr)+U(2)\right).
\ea\ee
Due to the assumption that $\sup_{\pi\in\strat}\EE[U^+(V^{\pi}_1)]<+\infty$, the first term on the last line of \eqref{eq:bound} is integrable, for each $j=1,\ldots,N$. The same assumption implies that $\EE[U(1)]<+\infty$, from which $\EE[U(2)]<+\infty$ follows by concavity of $U$. This proves that the random variable $\zeta$ is integrable. 
Let now $(\pi^n)_{n\in\N}$ be a sequence in $\strat\cap\cL$ converging to some element $\pi^0\in\strat\cap\cL$. An application of Fatou's lemma, together with the continuity of $U$, yields that
\[
\limsup_{n\rightarrow+\infty}u(\pi^n)
\leq \EE\bigl[\limsup_{n\rightarrow+\infty}U(1+\langle\pi^n,R\rangle)\bigr]
= u(\pi^0),
\]
thus proving the upper semi-continuity of the function $u$.\\
$(ii)\Rightarrow(i)$: by way of contradiction, let $\pi^*\in\strat\cap\cL$ be the maximizer in \eqref{eq:opt_utility} and suppose that NA$_1$ fails to hold. By Proposition \ref{prop:NA1}, there exists $\theta\in\cI_{\rm arb}\cap\strathat$. It holds that $\pi^*+\theta\in\strat\cap\cL$ and $\EE[U(V^{\pi^*+\theta}_1)] > \EE[U(V^{\pi^*}_1)]$, thus contradicting the optimality of $\pi^*$.
\end{proof}

The above theorem asserts the equivalence between NA$_1$ and {\em market viability}, intended as the existence of an optimal strategy for every well-posed expected utility maximization problem.
In particular, the proof makes clear that one of the crucial consequences of NA$_1$ is the compactness of the set $\strat\cap\cL$ of non-redundant allowed strategies (see Proposition \ref{prop:NA1}).

\begin{rem}
The proof of Theorem \ref{thm:utility} relies on the fact that, under NA$_1$, the set $\strat\cap\cL$ and the function $u$ have no common directions of recession. The relevance of this property in expected utility maximization problems has been first recognized in the early work \cite{Bertsekas74}.
\end{rem}

\begin{rem}	\label{rem:ElsingerSummer}
In discrete-time models without trading constraints, it is well-known that market viability is equivalent to no classical arbitrage (see \cite{RasonyStettner05,RasonyStettner06} and \cite[Theorem 3.3]{FollmerSchied}). In view of Remark \ref{rem:compare_arb}, the same holds true in the case of conic  constraints. 
In the case of convex non-conic constraints, Theorem \ref{thm:utility} shows that market viability is equivalent to the weaker NA$_1$ condition. From an economic standpoint, this result implies that assuming no classical arbitrage may pose unnecessary restrictions on the model.
In the special case of a finite probability space, this insight already appeared in \cite{ElsingerSummer01}, where the authors proved the equivalence between the existence of a solution to optimal portfolio problems and the validity of a condition called by the authors {\em no unlimited arbitrage}.
Translated in our context, no unlimited arbitrage corresponds to the existence of a strategy $\theta\in\Theta$ such that there are no arbitrage opportunities relative to $\theta$, in the sense of Remark \ref{rem:rel_arb}. As shown in Remark \ref{rem:NA1_relarb} below, this is equivalent to NA$_1$ and, therefore, the results of \cite{ElsingerSummer01} can be recovered as a special case.\footnote{The finiteness condition appearing in part $(ii)$ of Theorem \ref{thm:utility} is always satisfied on a finite probability space.}
In continuous-time semimartingale models, the connection between NA$_1$ and the solvability of expected utility maximization problems is discussed in \cite{KaratzasKardaras07} and \cite{ChoulliDengMa15} (in an It\^o process setting, earlier results in this direction have been obtained in \cite{LoewensteinWillard00}).
\end{rem}

The NA$_1$ condition admits an equivalent characterization in terms of the existence of a (supermartingale) {\em deflator} or of a {\em num\'eraire portfolio}, defined as follows.

\begin{defn}	\label{def:numeraire}
A random variable $Z\in L^0_+$ with $P(Z>0)=1$ is said to be a {\em deflator} if
\be	\label{eq:deflator}
\EE[Z \,V^{\pi}_1]\leq 1,
\qquad\text{ for all }\pi\in\strat.
\ee
The set of all deflators is denoted by $\cD$.\\
An allowed trading strategy $\rho\in\strat$ is said to be a {\em num\'eraire portfolio} if $1/V^{\rho}_1\in\cD$, meaning that
\be	\label{eq:numeraire}
\EE[V^{\pi}_1/V^{\rho}_1]\leq 1,
\qquad\text{ for all }\pi\in\strat.
\ee
\end{defn}

It is well-known (see, e.g., \cite{Becherer01}) that a num\'eraire portfolio is unique in the sense that if $\rho^1$ and $\rho^2$ satisfy \eqref{eq:numeraire}, then $\rho^1-\rho^2\in\cL^{\perp}$. The num\'eraire portfolio is therefore uniquely defined on $\strat\cap\cL$.
The next theorem shows that NA$_1$ is necessary and sufficient for the existence of the num\'eraire portfolio. In a general semimartingale setting, the corresponding result has been proved in \cite{KaratzasKardaras07}. 
In the present context, Theorem \ref{thm:utility} enables us to give a short and simple proof based on log-utility maximization, thus highlighting the central role of market viability.
Besides simplifying  the techniques employed in \cite[Lemma 6.2 and Theorem 6.3]{KS09}, our proof can be easily generalized to the multi-period setting, as will be shown in Section \ref{sec:multiperiod}.

\begin{thm}	\label{thm:numeraire}
The following are equivalent:
\begin{enumerate}[(i)]
\item the {\em NA}$_1$ condition holds;
\item $\cD\neq\emptyset$;
\item there exists the num\'eraire portfolio.
\end{enumerate}
Moreover, $\rho\in\strat$ is the num\'eraire portfolio if and only if it is {\em relatively log-optimal}, in the sense that it satisfies
$\EE[\log(V^{\pi}_1/V^{\rho}_1)] \leq 0$, for all $\pi\in\strat$.
\end{thm}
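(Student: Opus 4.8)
The plan is to prove the cycle $(iii)\Rightarrow(ii)\Rightarrow(i)\Rightarrow(iii)$ together with the two implications of the ``moreover'' statement, of which only $(i)\Rightarrow(iii)$ and the direction ``relatively log-optimal $\Rightarrow$ num\'eraire'' require real work. The three short implications go as follows. If $\rho$ is a num\'eraire portfolio then $1/V^\rho_1\in\cD$ by Definition \ref{def:numeraire}, so $\cD\neq\emptyset$. If $Z\in\cD$ and $\xi\in L^0_+$ satisfies $v(\xi)=0$, pick for each $n$ a strategy $\pi^n\in\strat$ with $n^{-1}V^{\pi^n}_1\geq\xi$ a.s.; then $\EE[Z\xi]\leq n^{-1}\EE[ZV^{\pi^n}_1]\leq n^{-1}$, so $\EE[Z\xi]=0$, and since $Z>0$ a.s.\ this forces $\xi=0$ a.s., i.e.\ NA$_1$. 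Finally, if $\rho$ is the num\'eraire portfolio, Jensen's inequality for the concave function $\log$ gives $\EE[\log(V^\pi_1/V^\rho_1)]\leq\log\EE[V^\pi_1/V^\rho_1]\leq0$ for all $\pi\in\strat$.

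For $(i)\Rightarrow(iii)$ the natural idea is to apply Theorem \ref{thm:utility} with $U=\log$ and to read off the num\'eraire portfolio from the first-order condition, exploiting that $xU'(x)=1$ for $U=\log$; the obstacle is that the finiteness hypothesis $\sup_{\pi\in\strat}\EE[\log^+V^\pi_1]<+\infty$ in Theorem \ref{thm:utility}(ii) may fail, and I would remove it by approximating $\log$ from below by capped utilities. By Proposition \ref{prop:NA1}, NA$_1$ makes $C:=\strat\cap\cL$ compact, and since $V^\pi_1=V^{\proj_\cL(\pi)}_1$ a.s.\ it suffices to find $\rho\in C$ with $\EE[V^\pi_1/V^\rho_1]\leq1$ for all $\pi\in\strat$. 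For $k\in\N$ let $U_k\in\cU$ coincide with $\log$ on $(0,k]$ and be extended to a bounded, strictly increasing, concave, $C^1$ function on $(k,+\infty)$ with $xU_k'(x)\leq1$ throughout (e.g.\ $U_k(x)=\log k+\arctan(\log(x/k))$ for $x>k$). Then $U_k^+$ is bounded, so Theorem \ref{thm:utility} gives a maximizer $\rho_k\in C$ of $\pi\mapsto\EE[U_k(V^\pi_1)]$, necessarily with $V^{\rho_k}_1>0$ a.s. A standard first-order argument — perturb $\rho_k$ towards an arbitrary $\pi\in\strat$ along $(1-\varepsilon)\rho_k+\varepsilon\pi\in\strat$, use concavity to obtain monotone difference quotients bounded below by an integrable constant (here $xU_k'(x)\leq1$ enters), and let $\varepsilon\downarrow0$ by monotone convergence — yields $\EE[a_k(V^\pi_1-V^{\rho_k}_1)]\leq0$ for all $\pi\in\strat$, with $a_k:=U_k'(V^{\rho_k}_1)>0$; combined with $a_kV^{\rho_k}_1\leq1$ this gives $\EE[a_kV^\pi_1]\leq1$ for every $\pi\in\strat$ and every $k$.

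Passing to a subsequence with $\rho_k\to\rho\in C$ (so $V^{\rho_k}_1\to V^\rho_1$ a.s.), the crucial and, I expect, hardest step is to show $V^\rho_1>0$ a.s., i.e.\ that the candidate num\'eraire portfolio does not go bankrupt: on $\{V^\rho_1=0\}$ one has $V^{\rho_k}_1\to0$, hence eventually $a_k=1/V^{\rho_k}_1\to+\infty$ there, so if $P(V^\rho_1=0)>0$ then Fatou's lemma applied to the uniform bound $\EE[a_k]=\EE[a_kV^0_1]\leq1$ forces $+\infty\leq1$, which is absurd. Once $V^\rho_1>0$ a.s.\ is established, $a_k\to1/V^\rho_1$ a.s., and Fatou's lemma applied again to $\EE[a_kV^\pi_1]\leq1$ yields $\EE[V^\pi_1/V^\rho_1]\leq1$ for all $\pi\in\strat$; thus $\rho$ is the num\'eraire portfolio, which establishes $(iii)$ (and, as $1/V^\rho_1\in\cD$, closes the cycle).

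For the remaining direction of the ``moreover'' statement, let $\rho\in\strat$ be relatively log-optimal; taking $\pi=0$ forces $V^\rho_1>0$ a.s. For $\pi\in\strat$ and $\varepsilon\in(0,1)$, the strategy $\pi_\varepsilon:=(1-\varepsilon)\rho+\varepsilon\pi\in\strat$ satisfies $V^{\pi_\varepsilon}_1/V^\rho_1=1+\varepsilon(V^\pi_1/V^\rho_1-1)\geq1-\varepsilon>0$, so $\EE[\log(V^{\pi_\varepsilon}_1/V^\rho_1)]\leq0$; since the difference quotients $\varepsilon^{-1}\log(1+\varepsilon(V^\pi_1/V^\rho_1-1))$ increase to $V^\pi_1/V^\rho_1-1$ as $\varepsilon\downarrow0$ and are bounded below by the integrable constant $\varepsilon_0^{-1}\log(1-\varepsilon_0)$ for $\varepsilon\leq\varepsilon_0<1$, monotone convergence gives $\EE[V^\pi_1/V^\rho_1]\leq1$, i.e.\ $\rho$ is the num\'eraire portfolio. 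This closes all the implications.
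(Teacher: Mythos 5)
Your proof is correct, and its overall architecture is the same as the paper's: the three easy implications and the Jensen/perturbation arguments for the ``moreover'' part are essentially identical, and for $(i)\Rightarrow(iii)$ both proofs reduce to Theorem \ref{thm:utility} applied to a family of modified logarithmic utilities, extract a first-order condition from each approximate optimizer, and pass to the limit using the compactness of $\strat\cap\cL$ and Fatou. The one genuine difference is the choice of approximating family. The paper damps the utility in the $\omega$-variable, taking the random utilities $U_n(\omega,x)=\log(x)f_n(R(\omega))$ with $f_n\nearrow1$; this keeps $xU_n'(x)=f_n(R)$, so the first-order condition comes out directly as $\EE[f_n(R)\langle\pi-\rho^n,R\rangle/(1+\langle\rho^n,R\rangle)]\leq0$, a quantity bounded below by $-1$, and a single application of Fatou in $n$ finishes the argument (the non-degeneracy of $1+\langle\rho,R\rangle$ being absorbed into that same Fatou step). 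You instead cap the utility in the $x$-variable, keeping $U_k=\log$ on $(0,k]$ and arranging $xU_k'(x)\leq1$; this buys you deterministic utilities and the clean bound $\EE[a_kV^{\pi}_1]\leq1$ with $a_k=U_k'(V^{\rho_k}_1)$, but it costs you the extra non-bankruptcy step $P(V^{\rho}_1=0)=0$, which you correctly dispose of by applying Fatou to $\EE[a_k]=\EE[a_kV^0_1]\leq1$ (using $0\in\cL^{\perp}\subset\strat$). Both devices are legitimate ways of meeting the finiteness hypothesis of Theorem \ref{thm:utility}, and your lower bound $\varepsilon_0^{-1}\log(1-\varepsilon_0)$ for the difference quotients, resting on $U_k(\lambda x)-U_k(x)\geq\log\lambda$ via $xU_k'(x)\leq1$, is exactly the ingredient needed to justify the interchange of limit and expectation.
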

\begin{proof}
$(i)\Rightarrow(iii)$: as a preliminary, similarly as in \cite{Kardaras09,KS09}, let $(f_n)_n$ be a family of  functions such that $f_n:\R^d\rightarrow(0,1]$ and $\EE[\log(1+\|R\|)f_n(R)]<+\infty$, for each $n\in\N$, and $f_n\nearrow1$ as $n\rightarrow+\infty$. A specific choice is for instance given by $f_n(x)=\ind_{\{\|x\|\leq 1\}}+\ind_{\{\|x\|>1\}}\|x\|^{-1/n}$. For each $n\in\N$, define the function $(\omega,x)\mapsto U_n(\omega,x):=\log(x)f_n(R(\omega))$, for $(\omega,x)\in\Omega\times(0,+\infty)$, with $U_n(\omega,0):=\lim_{x\downarrow0}U_n(\omega,x)=-\infty$. For each $n\in\N$, it holds that $U_n\in\cU$ and
\be	\label{eq:ineq_log}
\EE\bigl[U^+_n(1+\langle\pi,R\rangle)\bigr]
\leq \|\pi\| + \EE\bigl[\log(1+\|R\|)f_n(R)\bigr] < +\infty,
\qquad\text{ for all }\pi\in\strat.
\ee
If NA$_1$ holds, Proposition \ref{prop:NA1} implies that $\strat\cap\cL$ is bounded and, therefore, it holds that $\sup_{\pi\in\strat}\EE[U^+_n(1+\langle\pi,R\rangle)]<+\infty$. 
For each $n\in\N$, Theorem \ref{thm:utility} gives then the existence of an element $\rho^n\in\strat\cap\cL$ which is the maximizer in  \eqref{eq:opt_utility} for $U=U_n$.
For an arbitrary element $\pi\in\strat$ and $\varepsilon\in(0,1)$, let $\pi^{\varepsilon}:=\varepsilon\pi+(1-\varepsilon)\rho^n\in\strat$.
The optimality of $\rho^n$ together with the elementary inequality $\log(x)\geq(x-1)/x$, for $x>0$, implies that
\begin{align}
0 &\geq \frac{1}{\varepsilon}\Bigl(\EE\bigl[U_n(1+\langle\pi^{\varepsilon},R\rangle)\bigr]-\EE\bigl[U_n(1+\langle\rho^n,R\rangle)\bigr]\Bigr)	\notag\\
&= \frac{1}{\varepsilon}\EE\bigl[\log(V^{\pi^{\varepsilon}}_1/V^{\rho^n}_1)f_n(R)\bigr]	
\geq \EE\left[\frac{\langle\pi-\rho^n,R\rangle}{1+\langle\rho^n,R\rangle+\varepsilon\langle\pi-\rho^n,R\rangle}f_n(R)\right].
\label{eq:rel_log_opt}
\end{align}
Noting that $\frac{x}{y+\varepsilon x}\geq\frac{x}{y+x/2}\geq-2$, for all $\varepsilon\in(0,1/2)$, $y>0$ and $x\geq-y$, we can let $\varepsilon\searrow0$ and apply Fatou's lemma, thus obtaining
\be	\label{eq:rel_log_opt_2}
\EE\left[\frac{\langle\pi-\rho^n,R\rangle}{1+\langle\rho^n,R\rangle}f_n(R)\right] \leq 0,
\qquad\text{ for all }\pi\in\strat\text{ and }n\in\N.
\ee
Since $\strat\cap\cL$ is compact (see Proposition \ref{prop:NA1}), we may assume that the sequence $(\rho^n)_{n\in\N}$ converges to some $\rho\in\strat\cap\cL$ as $n\rightarrow+\infty$. 
Therefore, since $\langle\pi-\rho^n,R\rangle/(1+\langle\rho^n,R\rangle)\geq-1$ a.s. and recalling that $f_n\nearrow1$ as $n\rightarrow+\infty$, another application of Fatou's lemma gives that
\[
\EE\left[\frac{\langle\pi-\rho,R\rangle}{1+\langle\rho,R\rangle}\right] \leq 0,
\qquad\text{ for all }\pi\in\strat.
\]
Equivalently, it holds that $\EE[V^{\pi}_1/V^{\rho}_1]\leq1$, for all $\pi\in\strat$. 
In view of Definition \ref{def:numeraire}, we have thus shown that NA$_1$ implies the existence of the num\'eraire portfolio.\\
$(iii)\Rightarrow(ii)$: this implication is immediate by Definition \ref{def:numeraire}.\\
$(ii)\Rightarrow(i)$: 
let $Z\in\cD$ and consider a random variable $\xi\in L^0_+$ with $P(\xi>0)>0$ such that, for every $n\in\N$, there exists $\pi^n\in\strat$ such that $V^{\pi^n}_1(1/n)\geq\xi$ a.s. 
Definition \eqref{def:numeraire} implies that
\[
\EE[Z\,\xi] \leq \EE\bigl[Z\,V^{\pi^n}_1(1/n)\bigr] = \frac{1}{n}\EE\bigl[Z\,V^{\pi^n}_1\bigr] \leq \frac{1}{n},
\qquad\text{ for all }n\in\N.
\]
Since $Z>0$ a.s., letting $n\rightarrow+\infty$ yields that $\xi=0$ a.s., thus proving the validity of NA$_1$.\\
It remains to prove the last assertion of the theorem. If $\rho\in\strat$ satisfies \eqref{eq:numeraire}, then its relative log-optimality is a direct consequence of Jensen's inequality. Conversely, if $\rho\in\strat$ is relatively log-optimal, then \eqref{eq:numeraire} follows by the same arguments used in \eqref{eq:rel_log_opt}-\eqref{eq:rel_log_opt_2}.
\end{proof}

\begin{rem}	\label{rem:log_optimal}
If there exists a {\em log-optimal portfolio}, i.e., an allowed strategy $\rho\in\strat$ satisfying $\EE[\log(V^{\pi}_1)]\leq\EE[\log(V^{\rho}_1)]<+\infty$, for all $\pi\in\strat$, then $\rho$ is also relatively log-optimal and, therefore, coincides with the num\'eraire portfolio. 
The num\'eraire property of the log-optimal portfolio can also be directly deduced from the proof of Theorem \ref{thm:numeraire}. 
In applications, computing the log-optimal portfolio typically represents a simple way to determine the num\'eraire portfolio (see for instance Examples \ref{ex:1dim} and \ref{ex:third_example}).
\end{rem}

\begin{rem}	\label{rem:NA1_relarb}
NA$_1$ is equivalent to the existence of a strategy $\theta\in\strat$ with $V^{\theta}_1>0$ a.s. such that there are no arbitrage opportunities relative to $\theta$, in the sense of Remark \ref{rem:rel_arb}. Indeed, suppose there exists $\theta\in\strat$ with $V^{\theta}_1>0$ a.s. and let $\pi\in\strathat$. Then $\pi+\theta$ is an arbitrage opportunity relative to $\theta$ if and only if $\pi\in\cI_{\rm arb}$.
Conversely, if NA$_1$ holds, then there do not exist arbitrage opportunities relative to the num\'eraire portfolio $\rho$, as a consequence of \eqref{eq:numeraire}.
However, absence of arbitrage opportunities relative to some strategy $\theta\in\strat$ with $V^{\theta}_1>0$ a.s. does not suffice to conclude that $\theta$ is the num\'eraire portfolio (see Example \ref{ex:second_example} for an explicit counterexample).
\end{rem}

Theorems \ref{thm:utility} and \ref{thm:numeraire} represent the central results of arbitrage theory based on NA$_1$. 
For completeness, we now state the fundamental theorem of asset pricing based on no classical arbitrage, in the general version of \cite[Theorem 4]{Rokhlin05} specialised for a one-period setting.
We give a simple proof inspired by \cite[Proposition 2.1.5]{KabSaf} and \cite[Theorem 3.7]{Kardaras09}, which in turn follow an original idea of \cite{Rogers94}.
Similarly to Theorem \ref{thm:numeraire}, the proof is based on utility maximization arguments.
For a set $A\subseteq\R^d$, we  denote by ${\rm cone}\,A$ its conic hull. 

\begin{thm}	\label{thm:arb}
Suppose that the set ${\rm cone}\,\strat$ is closed. Then  no classical arbitrage holds if and only if there exists a probability measure $Q\sim P$ such that $\EE^Q[V^{\pi}_1]\leq 1$, for all $\pi\in{\rm cone}\,\strat$. 
\end{thm}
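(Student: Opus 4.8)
\emph{Proof plan.} The plan is to establish the two implications separately: the ``if'' direction is elementary, while the ``only if'' direction rests on a utility-maximization argument over the cone ${\rm cone}\,\strat$, in the spirit of \cite{Rogers94}, \cite[Proposition 2.1.5]{KabSaf} and \cite[Theorem 3.7]{Kardaras09}. For the ``if'' direction, suppose such a $Q$ exists and, by contradiction, pick $\pi\in\cI_{\rm arb}\cap\strat$; then $\langle\pi,R\rangle\geq0$ a.s. with $\PP(\langle\pi,R\rangle>0)>0$, whereas $\pi\in\strat\subseteq{\rm cone}\,\strat$ forces $1+\EE^Q[\langle\pi,R\rangle]=\EE^Q[V^{\pi}_1]\leq1$, hence $\EE^Q[\langle\pi,R\rangle]\leq0$ and thus $\langle\pi,R\rangle=0$ $Q$-a.s., contradicting $Q\sim\PP$. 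So $\cI_{\rm arb}\cap\strat=\emptyset$.

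For the ``only if'' direction I would first note that the ``no classical arbitrage'' property, as well as $\cL$, $\cL^{\perp}$ and ${\rm cone}\,\strat$, depend only on the support $\cS$ of $R$, which is invariant under equivalent changes of probability; hence, replacing $\PP$ by the equivalent measure with density proportional to $e^{-\|R\|}$, I may assume without loss of generality that $\EE[\|R\|]<+\infty$ (a measure $Q$ equivalent to the new one being equivalent to the original $\PP$). I would also record that $\cI_{\rm arb}\cap{\rm cone}\,\strat=\emptyset$, which follows from $\cI_{\rm arb}\cap\strat=\emptyset$ since $\cI_{\rm arb}$ excludes $\cL^{\perp}$ and elements of ${\rm cone}\,\strat$ are nonnegative multiples of elements of $\strat$. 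Next I fix the bounded, strictly increasing, strictly concave function $U(x):=-\log(1+e^{-x})$ on $\R$, for which $U'(x)=(1+e^{x})^{-1}\in(0,1)$, $U\leq0$, $U$ is $1$-Lipschitz and $U(x)\geq(x\wedge0)-\log2$, and I consider the concave functional $\Phi(\pi):=\EE[U(\langle\pi,R\rangle)]$ on the closed convex cone $C:=({\rm cone}\,\strat)\cap\cL$ (closedness of ${\rm cone}\,\strat$ being used here). From $\EE[\|R\|]<+\infty$ together with the Lipschitz and linear-lower bounds on $U$, the functional $\Phi$ is finite-valued, bounded above by $0$ and, by dominated convergence, continuous on $C$.

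The hard part is to show that $\Phi$ attains its maximum on the unbounded cone $C$. To this end I would verify that $C$ and $\Phi$ have no common direction of recession: for $y\in C\setminus\{0\}$, a dominated-convergence computation based on $U'(-\infty)=1$ and $U'(+\infty)=0$ gives $\lim_{\lambda\to+\infty}\lambda^{-1}\bigl(\Phi(\pi_0+\lambda y)-\Phi(\pi_0)\bigr)=-\EE[\langle y,R\rangle^{-}]$, and this limit is strictly negative, for otherwise $\langle y,R\rangle\geq0$ a.s., whence (as $y\in\cL\setminus\{0\}$ implies $y\notin\cL^{\perp}$) $y\in\cI_{\rm arb}\cap{\rm cone}\,\strat$, contradicting no classical arbitrage. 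By \cite[Theorem 27.3]{Rockafellar} the maximum of $\Phi$ over the nonempty closed convex set $C$ is therefore attained at some $\hat\pi\in C$.

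It then remains to read off the dual element from first-order conditions. Since $C$ is a cone, differentiating $t\mapsto\Phi(t\hat\pi)$ at $t=1$ gives $\EE[U'(\langle\hat\pi,R\rangle)\langle\hat\pi,R\rangle]=0$, and differentiating $t\mapsto\Phi((1-t)\hat\pi+t\pi)$ at $t=0^{+}$ gives $\EE[U'(\langle\hat\pi,R\rangle)\langle\pi,R\rangle]\leq0$ for every $\pi\in C$, both differentiations being justified by $0<U'<1$ and $\|R\|\in L^1$. Setting $Z:=U'(\langle\hat\pi,R\rangle)\in(0,1)$ and $\ud Q/\ud\PP:=Z/\EE[Z]$ yields a probability $Q\sim\PP$ with $\|R\|\in L^1(Q)$ and $\EE^Q[V^{\pi}_1]=1+\EE^Q[\langle\pi,R\rangle]\leq1$ for all $\pi\in C$; since $\langle\pi,R\rangle=\langle\proj_{\cL}(\pi),R\rangle$ a.s. and $\proj_{\cL}({\rm cone}\,\strat)\subseteq({\rm cone}\,\strat)\cap\cL=C$, this extends to all $\pi\in{\rm cone}\,\strat$. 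The main obstacle throughout is the non-compactness of ${\rm cone}\,\strat$: it is overcome precisely by the reduction to $\EE[\|R\|]<+\infty$ and by the no-common-recession-direction argument — the step where the no-arbitrage hypothesis and the closedness of ${\rm cone}\,\strat$ enter — together with the choice of a utility whose derivative is bounded and strictly positive, which is what makes $\ud Q/\ud\PP$ a genuine equivalent density.
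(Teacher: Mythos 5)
Your proof is correct and follows essentially the same route as the paper's: a Rogers-style utility-maximization over the cone $C=({\rm cone}\,\strat)\cap\cL$, with existence of an optimizer obtained from Rockafellar's no-common-recession-direction criterion (which is exactly where no classical arbitrage and the closedness of ${\rm cone}\,\strat$ enter in the paper as well), and the equivalent supermartingale measure read off from the first-order conditions along the cone. The only differences are cosmetic: you pre-change the measure to make $R$ integrable and maximize the bounded utility $-\log(1+e^{-x})$, whereas the paper keeps the integrability weight $e^{-\|R\|^2}$ inside the functional and minimizes $\EE'[\exp(-1-\langle\pi,R\rangle)]$.
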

\begin{proof}
Observe first that $\cI_{\rm arb}\cap\strat=\emptyset$ if and only if $\cI_{\rm arb}\cap({\rm cone}\,\strat)=\emptyset$. In turn, this implies that no classical arbitrage holds if and only if $\cI_{\rm arb}\cap C=\emptyset$, where $C:=({\rm cone}\,\strat)\cap\cL$. Define the proper convex function $f:C\ni\pi\mapsto f(\pi):=\EE'[\exp(-1-\langle\pi,R\rangle)]$, where $\EE'$ denotes expectation with respect to the probability measure $\PP'$ defined by $\ud\PP'/\ud\PP=e^{-\|R\|^2}/\EE[e^{-\|R\|^2}]$. 
By Fatou's lemma, the function $f$ is lower semi-continuous. 
Since $C$ is closed by assumption, \cite[Theorem 27.3]{Rockafellar} implies that the function $f$ admits a minimizer $\pi^*\in C$ if it has no directions of recession in common with the cone $C$. 
By \cite[Theorem 8.5]{Rockafellar}, this amounts to verifying that 
\be	\label{eq:recession}
\hat{f}(\pi) := \lim_{\gamma\rightarrow+\infty}\frac{f(\gamma\pi)}{\gamma}>0,
\qquad\text{ for all }\pi\in C\setminus\{0\}.
\ee
We now show that \eqref{eq:recession} is always satisfied under no classical arbitrage. Arguing by contradiction, let $\pi\in C\setminus\{0\}$ such that $\hat{f}(\pi)\leq0$. In this case, by Fatou's lemma,  it holds that
\begin{align*}
0 &\geq \hat{f}(\pi) 
\geq \EE'\left[\liminf_{\gamma\rightarrow+\infty}\frac{e^{-1-\gamma\langle\pi,R\rangle}}{\gamma}\right]
\geq \EE'\left[\liminf_{\gamma\rightarrow+\infty}\frac{e^{-1-\gamma\langle\pi,R\rangle}}{\gamma}\ind_{\{\langle\pi,R\rangle<0\}}\right].
\end{align*}
This implies that necessarily $\langle\pi,R\rangle\geq0$ a.s. Since $\pi\in\cL$, this contradicts no classical arbitrage.
\cite[Theorem 27.3]{Rockafellar} then yields the existence of an element $\pi^*\in C$ such that $f(\pi^*)\leq f(\pi)$, for all $\pi\in C$.
The definition of $\PP'$ implies that differentiation and integration can be interchanged, so that the gradient of the function $f$ at $\pi^*$ is given by $\nabla f(\pi^*)=-\EE'[\exp(-1-\langle\pi^*,R\rangle)R]$.
Therefore, since $C$ is a cone and $f$ is finite on $C$, \cite[Theorem 27.4]{Rockafellar}  implies that
\[
0 \geq \bigl\langle\pi,-\nabla f(\pi^*)\bigr\rangle
= \EE'\bigl[e^{-1-\langle\pi^*,R\rangle}\langle\pi,R\rangle\bigr].
\]
Setting $\ud Q/\ud P=e^{-V^{\pi^*}_1-\|R\|^2}/\EE[e^{-V^{\pi^*}_1-\|R\|^2}]$ yields a probability measure $Q\sim P$ such that $\EE^Q[V^{\pi}_1]\leq1$, for all $\pi\in C$ and, hence, for all $\pi\in{\rm cone}\,\strat$.

Conversely, suppose there exists a probability measure $Q\sim P$ such that $\EE^{Q}[V^{\pi}_1]\leq 1$, for all $\pi\in{\rm cone}\,\strat$. Then, for every $\pi\in\strat$, it holds that $\EE^{Q}[\langle\pi,R\rangle]\leq0$. If $\pi\in\cI_{\rm arb}\cap\strat$, this implies that $\langle\pi,R\rangle\leq0$ $Q$-a.s. However, since $Q\sim P$, this contradicts the fact that $\pi\in\cI_{\rm arb}$.
\end{proof}

\begin{rem}
Theorem \ref{thm:arb} does not hold without the assumption of closedness of ${\rm cone}\,\strat$.\footnote{The same assumption is required in the fundamental theorem of asset pricing in the formulation of \cite{CPT01}.
\cite[Theorem 4]{Rokhlin05} requires the closedness of $\proj_{\cL}({\rm cone}\,\strat)$, the set of all vectors in $\R^d$ that are projections onto $\cL$ of elements of ${\rm cone}\,\strat$. 
In our setting, since $\cL^{\perp}\subseteq\strathat$, it holds that $\proj_{\cL}({\rm cone}\,\strat)=({\rm cone}\,\strat)\cap\cL$.
This implies that $\proj_{\cL}({\rm cone}\,\strat)$ is closed if and only if ${\rm cone}\,\strat$ is closed.}
Indeed, one can construct a counterexample along the lines of \cite[Example 1]{Rokhlin05} where no classical arbitrage holds but there does not exist a probability measure $Q\sim P$ such that $\EE^Q[V^{\pi}_1]\leq1$, for all $\pi\in{\rm cone}\,\strat$.
Observe that, in comparison to no classical arbitrage, NA$_1$ has the additional advantage of not requiring any extra technical condition on the model.
\end{rem}

The probability measure $Q$ appearing in Theorem \ref{thm:arb} represents an {\em equivalent supermartingale measure} (ESMM). If NA$_1$ holds and the num\'eraire portfolio $\rho$ satisfies $\EE[1/V^{\rho}_1]=1$, then an ESMM $Q$ can be defined by setting $\ud Q/\ud P=1/V^{\rho}_1$. However, this is not always possible, even when ${\rm cone}\,\strat$ is closed and no classical arbitrage holds, as the following simple example illustrates (see also \cite[Example 6]{Becherer01} for a related example in an unconstrained setting).

\begin{example}	\label{ex:1dim}
Let $d=1$ and suppose that $R=e^Y-1$, with $Y\sim\mathcal{N}(0,1)$. In this case, it holds that $\cS=[-1,+\infty)$ and $\strat_{\rm adm}=[0,1]$ (i.e., short-selling and borrowing from the riskless asset are prohibited). Suppose that $\strat_{\rm c}=[0,c]$, for some $c\in[0,1]$, so that $\strat=[0,c]$. Clearly, no classical arbitrage holds and, therefore, there exists an ESMM $Q$. For instance, it can be easily checked that $\ud Q/\ud P = \exp(\alpha Y-\alpha^2/2)$ defines an ESMM, for any $\alpha\leq-1/2$.
However, if $c<1/2$, the num\'eraire portfolio $\rho$ cannot be used to construct an ESMM, since $\EE[1/V^{\rho}_1]<1$. Indeed, it can be easily checked that the function $h:[0,1]\rightarrow\R$ defined by $h(\pi):=\EE[\log(V^{\pi}_1)]$ is finite-valued, strictly concave and achieves its maximum at $1/2$, so that $h'(\pi)>0$ for all $\pi<1/2$. Therefore, if $c<1/2$, the log-optimal portfolio and, therefore, the num\'eraire portfolio $\rho$ (see Remark \ref{rem:log_optimal}) are given by $\rho=c$ and it holds that $h'(\rho)>0$ or, equivalently, $\EE[1/V^{\rho}_1]<1$.
\end{example}

\subsection{Hedging and valuation of contingent claims}	\label{sec:hedging}

The pricing of contingent claims is traditionally based on the paradigm of no classical arbitrage. In this section, we show that the weaker NA$_1$ condition suffices to develop a general and effective theory for the hedging and valuation of contingent claims in the presence of convex constraints.
We first prove the fundamental super-hedging duality. 
Recall that for a random variable $\xi\in L^0_+$ (contingent claim) its super-hedging value $v(\xi)$ is defined as in \eqref{eq:superhedge_price}, with the usual convention $\inf\emptyset=+\infty$.

\begin{thm}\label{thm:duality}
Suppose that {\em NA}$_1$ holds and let $\xi\in L^0_+$. Then 
\be	\label{eq:duality}
v(\xi) = \sup_{Z\in\cD}\EE[Z\xi].
\ee
Moreover, there exists a pair $(v,\pi)\in\R_+\times\strat$ such that $\xi=V^{\pi}_1(v)$ a.s. and $\EE[Z\xi]=v$, for some $Z\in\cD$, 
if and only if there exists 
an element $Z^*\in\cD$ such that $\EE[Z^*\xi]=\sup_{Z\in\cD}\EE[Z\xi]<+\infty$.
\end{thm}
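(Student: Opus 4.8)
The plan is to prove the duality \eqref{eq:duality} by establishing two inequalities, then to treat the attainability statement via a compactness argument on the set of super-hedging strategies.

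For the easy inequality $v(\xi)\geq\sup_{Z\in\cD}\EE[Z\xi]$: take any $Z\in\cD$ and any $v>0$ admitting $\pi\in\strat$ with $v(1+\langle\pi,R\rangle)\geq\xi$ a.s.; then by \eqref{eq:deflator}, $\EE[Z\xi]\leq\EE[Z\,V^{\pi}_1(v)]=v\,\EE[Z\,V^{\pi}_1]\leq v$, and taking the infimum over such $v$ gives $\EE[Z\xi]\leq v(\xi)$; now take the supremum over $Z\in\cD$. For the reverse inequality $v(\xi)\leq\sup_{Z\in\cD}\EE[Z\xi]$, I would argue by contradiction along the standard separating-hyperplane route, but exploiting compactness of $\strat\cap\cL$ (Proposition \ref{prop:NA1}) rather than invoking a general closedness-of-the-cone-of-superhedgeable-claims argument. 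Suppose $c:=\sup_{Z\in\cD}\EE[Z\xi]<v(\xi)$ (if the right side is $+\infty$ there is nothing to prove). Then for every $\pi\in\strat\cap\cL$ and every $v\in(c,v(\xi))$ the set $\{c(1+\langle\pi,R\rangle)<\xi\}$ has positive probability — actually I want to produce a $Z\in\cD$ with $\EE[Z\xi]>c$, contradicting the definition of $c$. The cleanest implementation is: normalise by the num\'eraire portfolio $\rho$ of Theorem \ref{thm:numeraire}, so that deflators correspond to $\rho$-relative densities, and consider the auxiliary claim $\xi/V^{\rho}_1$; the set $\cD$ contains $Z_\rho:=1/V^{\rho}_1$ and is stable under passing to suitable perturbations. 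Concretely, following the finite-dimensionality of $\strat\cap\cL$, the set $\{V^{\pi}_1(v):\pi\in\strat\cap\cL,\ v\leq c\}-L^0_+$ is closed in probability (here the compactness of $\strat\cap\cL$ under NA$_1$ is exactly what gives closedness, as in the proof of Theorem \ref{thm:utility}); since $\xi$ does not belong to it, a Kreps–Yan type separation yields a strictly positive $Z\in L^0_+$ with $\EE[Z\,V^{\pi}_1]\leq1$ for all $\pi\in\strat$ and $\EE[Z\xi]>c$ — contradiction. I expect the closedness/separation step to be the main obstacle: one must ensure $Z$ can be taken strictly positive and bounded enough that all the expectations are finite, which is where an auxiliary equivalent change of measure (as in the proof of Theorem \ref{thm:arb}) and the boundedness of $\strat\cap\cL$ enter.

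For the attainability equivalence: the ``only if'' direction is immediate, since if $\xi=V^{\pi}_1(v)$ with $\EE[Z\xi]=v$ for some $Z\in\cD$, then by the easy inequality $v=\EE[Z\xi]\leq\sup_{Z'\in\cD}\EE[Z'\xi]\leq v(\xi)\leq v$, so all are equal and $Z^*:=Z$ attains the supremum, which is finite. For the ``if'' direction, suppose $Z^*\in\cD$ attains $\sup_{Z\in\cD}\EE[Z\xi]=v(\xi)=:v<+\infty$. By definition of $v(\xi)$ there is a sequence $v_n\downarrow v$ and $\pi^n\in\strat$ with $v_n(1+\langle\pi^n,R\rangle)\geq\xi$ a.s.; replacing $\pi^n$ by $\proj_{\cL}(\pi^n)\in\strat\cap\cL$ changes nothing in the inequality, and by compactness of $\strat\cap\cL$ we may pass to a subsequence $\pi^n\to\pi\in\strat\cap\cL$. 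Passing to the limit gives $v(1+\langle\pi,R\rangle)\geq\xi$ a.s., i.e. $V^{\pi}_1(v)\geq\xi$. It remains to upgrade this to an equality a.s. For this, use $Z^*$: from \eqref{eq:deflator}, $\EE[Z^*\,V^{\pi}_1(v)]\leq v=\EE[Z^*\xi]$, hence $\EE[Z^*(V^{\pi}_1(v)-\xi)]\leq0$; since $V^{\pi}_1(v)-\xi\geq0$ a.s. and $Z^*>0$ a.s., this forces $V^{\pi}_1(v)=\xi$ a.s., and we have exhibited the required pair $(v,\pi)$ together with $Z^*\in\cD$ satisfying $\EE[Z^*\xi]=v$. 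The one point needing care is the passage to the limit $v_n(1+\langle\pi^n,R\rangle)\geq\xi$ along the convergent subsequence: one argues $\PP$-a.s. pointwise on a common full-measure set, or invokes that a.s. convergence of $\langle\pi^n,R\rangle$ holds along a further subsequence and preserves the inequality.
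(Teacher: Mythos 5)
Your overall architecture is the one the paper uses: the easy inequality via the deflator property, the hard inequality via separation of $\xi$ from the set $\cC=\{V^{\pi}_1:\pi\in\strat\cap\cL\}-L^0_+$ (whose closedness in $L^0$ follows from compactness of $\strat\cap\cL$, exactly as you say), and attainability via the complementary-slackness argument with $Z^*$. The easy inequality and the whole attainability equivalence are complete and correct — your extraction of a super-replicating $\pi\in\strat\cap\cL$ by projecting onto $\cL$ and passing to a convergent subsequence is fine, and the step $\EE[Z^*(V^{\pi}_1(v)-\xi)]\leq0$ with $V^{\pi}_1(v)-\xi\geq0$ and $Z^*>0$ a.s. is exactly the right finish.

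The genuine gap is the separation step, which you flag as ``the main obstacle'' but do not resolve, and the tool you point to (the auxiliary change of measure from the proof of Theorem \ref{thm:arb}) is not the right device: that measure change serves to make an exponential integrable, not to produce a strictly positive separating functional. What is actually needed is the following. First, one cannot separate in $L^0$ (it is not locally convex); one divides by $V^{\rho}_1$, where $\rho$ is the num\'eraire portfolio, so that the set $v^*\cC/V^{\rho}_1$ is bounded in $L^1$ by $\EE[V^{\pi}_1/V^{\rho}_1]\leq1$; being also closed in $L^0$, it is closed in $L^1$, and Hahn--Banach yields a bounded $\bar{Z}\geq0$ with $\EE[\bar{Z}\xi/V^{\rho}_1]>v^*s$, where $s:=\sup_{X\in\cC}\EE[\bar{Z}X/V^{\rho}_1]\in(0,\infty)$. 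Second, this $\bar{Z}$ is neither strictly positive nor a deflator, and a Kreps--Yan exhaustion does not directly apply since $\cC$ is not a cone; the fix is the convex combination $Z^{\varepsilon}:=\bigl(\varepsilon+(1-\varepsilon)\bar{Z}/s\bigr)/V^{\rho}_1$, which is strictly positive, satisfies $\EE[Z^{\varepsilon}V^{\pi}_1]\leq\varepsilon+(1-\varepsilon)=1$ (again by the num\'eraire property), and for small $\varepsilon$ still gives $\EE[Z^{\varepsilon}\xi]>v^*$, the desired contradiction. Without this construction the hard inequality is not proved; with it, your proof coincides with the paper's.
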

\begin{proof}
Let $\cV(\xi):=\{v>0 : \exists\;\pi\in\strat \text{ such that }vV_1^{\pi}\geq\xi\text{ a.s.}\}$ and $\cC:=\{V^{\pi}_1 : \pi\in\strat\cap\cL\}-L^0_+$.
If $v\in\cV(\xi)$, there exists $\pi\in\strat$ such that $vV^{\pi}_1\geq\xi$ a.s. Then, for every $Z\in\cD$ it holds that
\[
\EE[Z\xi] \leq v\,\EE[Z V^{\pi}_1] \leq v.
\]
By taking the supremum over all $Z\in\cD$ and the infimum over all $v\in\cV(\xi)$, we obtain that $v(\xi)\geq\sup_{Z\in\cD}\EE[Z\xi]=:v^*$. 
The converse inequality is trivial if $v^*=+\infty$. Assuming therefore that $0<v^*<+\infty$, we will show that $v(\xi)>v^*$ cannot hold.
Indeed, if $v(\xi)>v^*$, then $\xi\notin v^*\cC$. 
Let $\rho$ be the num\'eraire portfolio (which exists by Theorem \ref{thm:numeraire}). Being closed in $L^0$ (see Lemma \ref{lem:closedness} below) and bounded in $L^1$, the set $v^*\cC/V^{\rho}_1$ is closed in $L^1$. 
Therefore, by the Hahn-Banach theorem (see, e.g., \cite[Theorem A.58]{FollmerSchied}), there exists  a bounded random variable $\bar{Z}$ such that
\be	\label{eq:HB}
+\infty > \frac{1}{v^*}\EE\left[\bar{Z}\frac{\xi}{V^{\rho}_1}\right] > \sup_{X\in\cC}\EE\left[\bar{Z} \frac{X}{V^{\rho}_1}\right] =: s.
\ee
Since $-n\ind_{\{\bar{Z}<0\}}\in\cC$, for all $n\geq0$, inequality \eqref{eq:HB} implies that $\bar{Z}\geq0$ a.s. and $P(\bar{Z}>0)>0$. Moreover, since $1\in\cC$, it holds that $s>0$. 
For $\varepsilon\in(0,1)$, we define
\be	\label{eq:Zeps}
Z^{\varepsilon} := \left(\varepsilon + (1-\varepsilon)\frac{\bar{Z}}{s}\right)\frac{1}{V^{\rho}_1}.
\ee
It holds that $P(Z^{\varepsilon}>0)=1$ and, for every $\pi\in\strat$,
\[
\EE[Z^{\varepsilon}V^{\pi}_1]
= \varepsilon\EE\left[\frac{V^{\pi}_1}{V^{\rho}_1}\right] + \frac{1-\varepsilon}{s}\EE\left[\bar{Z}\frac{V^{\pi}_1}{V^{\rho}_1}\right] \leq 1,
\]
thus showing that $Z^{\varepsilon}\in\cD$, for all $\varepsilon\in(0,1)$. Moreover, for a sufficiently small $\varepsilon$,  \eqref{eq:HB} together with \eqref{eq:Zeps} implies that $\EE[Z^{\varepsilon}\xi]>v^*=\sup_{Z\in\cD}\EE[Z\xi]$, which is absurd. Therefore, we must have $\xi\in v^*\cC$ , thus proving that $v(\xi)\leq v^*=\sup_{Z\in\cD}\EE[Z\xi]$.\\
To prove the last assertion of the theorem, observe that the first part of the proof yields that $v^*V^{\pi}_1\geq\xi$ a.s., for some $\pi\in\strat$. If there exists $Z^*\in\cD$ such that $v^*=\EE[Z^*\xi]$, then we have that
\[
v^* = \EE[Z^*\xi] \leq v^*\EE[Z^*V^{\pi}_1] \leq v^*.
\]
Since $Z^*>0$ a.s., this implies that $\xi=V^{\pi}_1(v^*)$ a.s.
Conversely, if $\xi=V^{\pi}_1(v)$ a.s. for some $(v,\pi)\in\R_+\times\strat$ with $v=\EE[Z^*\xi]$, for some $Z^*\in\cD$, then \eqref{eq:deflator} implies that $\EE[Z^*\xi]=\sup_{Z\in\cD}\EE[Z\xi]$.
\end{proof}

\begin{lem}	\label{lem:closedness}
If {\em NA}$_1$ holds, then the set $\cC:=\{V^{\pi}_1 :\pi\in\strat\cap\cL\}-L^0_+$ is closed in $L^0$.
\end{lem}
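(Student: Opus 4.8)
The plan is to exploit the compactness of $\strat\cap\cL$ under NA$_1$ (Proposition \ref{prop:NA1}(iv)) to pass to convergent subsequences of strategies. Suppose $(X_n)_{n\in\N}$ is a sequence in $\cC$ converging a.s.\ to some $X\in L^0$; write $X_n = V^{\pi_n}_1 - \eta_n$ with $\pi_n\in\strat\cap\cL$ and $\eta_n\in L^0_+$. First I would note that, since $\strat\cap\cL$ is compact, after passing to a subsequence we may assume $\pi_n\to\pi$ for some $\pi\in\strat\cap\cL$. Then $V^{\pi_n}_1 = 1+\langle\pi_n,R\rangle \to 1+\langle\pi,R\rangle = V^{\pi}_1$ a.s.\ (pointwise in $\omega$, since the scalar product is continuous in $\pi$). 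Consequently $\eta_n = V^{\pi_n}_1 - X_n \to V^{\pi}_1 - X =: \eta$ a.s., and $\eta\geq 0$ a.s.\ as an a.s.\ limit of nonnegative random variables, so $\eta\in L^0_+$. Hence $X = V^{\pi}_1 - \eta \in \cC$, which gives closedness of $\cC$ in $L^0$ (with respect to a.s.\ convergence, equivalently convergence in probability along subsequences).

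One subtlety is that "closed in $L^0$" should be read as closed under convergence in probability; the standard reduction is that if $X_n\to X$ in probability then some subsequence converges a.s., and it suffices to produce a further subsequence of that one whose limit lies in $\cC$ — which the argument above does (compactness lets us thin the subsequence again without losing a.s.\ convergence of the $X_n$'s). A second point worth spelling out is why one may assume the representation uses strategies in $\strat\cap\cL$ rather than all of $\strat$: for any $\pi'\in\strat$ one has $\proj_\cL(\pi')\in\strat\cap\cL$ (established in Section \ref{sec:constraints}) and $V^{\pi'}_1 = V^{\proj_\cL(\pi')}_1$ a.s., so the two descriptions of $\cC$ coincide.

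The main obstacle is the one already addressed: without NA$_1$, the set $\strat\cap\cL$ need not be bounded, so a sequence $(\pi_n)$ of strategies producing a convergent sequence of wealths $V^{\pi_n}_1$ could itself diverge, and there would be no reason for the limit to be attained by an allowed strategy — the excess terms $\eta_n$ could "absorb" an unbounded drift. Compactness of $\strat\cap\cL$ is exactly what rules this out, so the proof is genuinely a corollary of Proposition \ref{prop:NA1}. No further measure-theoretic work is needed beyond the elementary fact that a.s.\ limits of nonnegative random variables are nonnegative.
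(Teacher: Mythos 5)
Your proposal is correct and follows essentially the same route as the paper: decompose $X_n=V^{\pi_n}_1-\eta_n$, use the compactness of $\strat\cap\cL$ under NA$_1$ (Proposition \ref{prop:NA1}) to extract a convergent subsequence of strategies, and deduce that the nonnegative remainders converge to an element of $L^0_+$. The additional remarks on reducing convergence in probability to a.s.\ convergence along subsequences and on why one may restrict to $\strat\cap\cL$ are sound and merely make explicit what the paper leaves implicit.
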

\begin{proof}
Let $(X_n)_{n\in\N}\subseteq\cC$ be a sequence converging in $L^0$ to a random variable $X$ as $n\rightarrow+\infty$. For each $n\in\N$, it holds that $X_n=1+\langle\pi_n,R\rangle-A_n$, for $(\pi_n,A_n)\in(\strat\cap\cL)\times L^0_+$. By Proposition \ref{prop:NA1}, NA$_1$ implies that the set $\strat\cap\cL$ is compact and, therefore, there exists a subsequence $(\pi_{n_m})_{m\in\N}$ converging to an element $\pi\in\strat\cap\cL$. In turn, this implies that the sequence $(A_{n_m})_{m\in\N}$ converges in probability to a random variable $A\in L^0_+$, thus proving the closedness of $\cC$ in $L^0$.
\end{proof}

Whenever the quantity $\sup_{Z\in\cD}\EE[Z\xi]$ is finite, it provides the super-hedging value of $\xi$. 
In a general semimartingale setting, the duality relation \eqref{eq:duality} has been stated in \cite[Section 4.7]{KaratzasKardaras07}. We contribute by providing a transparent and self-contained proof in a one-period setting. In addition, Theorem \ref{thm:duality} provides a necessary and sufficient condition for the  attainability of a contingent claim $\xi$.
When perfect hedging is not possible, one may resort to several alternative hedging approaches, which are all feasible under NA$_1$ even if no classical arbitrage fails to hold.
A first possibility is represented by {\em hedging with minimal shortfall risk}, corresponding to
\be	\label{eq:shortfall}
\EE\bigl[\ell(\xi-vV^{\pi}_1)\bigr] = \min\,!
\qquad\text{ over all }(v,\pi)\in(0,v_0]\times\strat,
\ee
for some initial capital $v_0>0$, where $\ell:\R\rightarrow\R$ is an increasing convex loss function such that $\ell(x)=0$, for all $x\leq0$, and $\EE[\ell(\xi)]<+\infty$ (see \cite[Section 8.2]{FollmerSchied}). 
Problem \eqref{eq:shortfall} can be solved by first minimizing $\EE[\ell(\xi-Y)]$ over all random variables $Y\in L^0_+$ such that $\sup_{Z\in\cD}\EE[ZY]\leq v_0$ and then considering the pair $(v(Y^*),\pi^*)$ which super-replicates the minimizing random variable $Y^*$ (if $\ell$ is strictly increasing on $[0,+\infty)$, then $v(Y^*)=v_0$). As long as NA$_1$ holds, the feasibility of this approach is ensured by Theorem \ref{thm:duality}.

An alternative way to hedge and compute the value of a contingent claim $\xi$ is provided by {\em utility indifference valuation}. For a given utility function $u$ and an initial capital $v>0$, this corresponds to finding the solution $p=p(\xi)$ to the equation
\be	\label{eq:uip}
\sup_{\pi\in\strat}\EE\bigl[u(vV^{\pi}_1)\bigr] =
\sup_{\pi\in\strat}\EE\bigl[u((v-p)V^{\pi}_1+\xi)\bigr].
\ee
Defining $U_p^{\eta}(x,\omega):=u((v-\eta p)x+\eta\xi(\omega))$, for $\eta\in\{0,1\}$, Theorem \ref{thm:utility} with $U=U^{\eta}_p$ shows that NA$_1$ is sufficient for the solvability of the two maximization problems appearing in \eqref{eq:uip}. Whenever it exists, $p(\xi)$ represents a (buyer) value for $\xi$, while the strategy $\pi^*$ that achieves the supremum on the right-hand side of \eqref{eq:uip} with $p=p(\xi)$ provides a hedging strategy for $\xi$.

As a variant of the latter approach, one can consider {\em marginal} utility indifference valuation, in the sense of \cite{Davis97}. This corresponds to finding the value $p=p'(\xi)$ which solves
\[
\lim_{\eta\downarrow0}\frac{\EE[U_p^{\eta}(V^{\pi^*}_1)]-\EE[U_p^0(V^{\pi^*}_1)]}{\eta} = 0.
\]
where $U^{\eta}$ is defined as above, for $\eta\in[0,1]$, and $\pi^*\in\strat$ is the strategy solving problem \eqref{eq:opt_utility} with $U=u$. Similarly as in \cite{FontanaRunggaldier13}, if NA$_1$ holds and $u(x)=\log(x)$, it can be shown that
\be	\label{eq:rw_price}
p'(\xi) = \EE[\xi/V^{\rho}_1],
\ee
as long as the expectation is finite, where $\rho$ denotes the num\'eraire portfolio (see Theorem \ref{thm:numeraire}). 
In the context of the Benchmark Approach (see \cite{BuhlmannPlaten03,PlatenHeath}), formula \eqref{eq:rw_price} corresponds to the well-known {\em real-world pricing formula}, which is applicable as long as NA$_1$ is satisfied.

\section{Factor models with arbitrage under borrowing constraints}	\label{sec:arbitrage_model}

In this section, we study the arbitrage concepts discussed above in the context of a one-period factor model, under constraints on the fraction of wealth that can be borrowed/invested on the riskless asset. We start from a general model and then consider more specific cases.

\subsection{A general factor model}	\label{sec:gen_factor}
In the setting of Section \ref{sec:1p}, we assume that asset returns are generated by the factor model
\be	\label{eq:factors}
R = QY,
\ee
where $Q\in\R^{d\times\ell}$ and $Y=(Y_1,\ldots,Y_\ell)^{\top}$ is an $\ell$-dimensional random vector with independent components, for some $\ell\in\N$. A non-diagonal matrix $Q$ permits to introduce general correlation structures among the $d$ asset returns. 
Without loss of generality, we assume that ${\rm rank}(Q)=d$. Under this assumption, it holds that $\cL^{\perp}=\{0\}$.

\begin{rem}	\label{rem:factor_models}
Multi-factor models are widely employed in financial economics and econometrics, the {\em Arbitrage Pricing Theory} of \cite{Ross76} and its extensions representing some of the most notable instances (see \cite[Chapter 5]{BF17}, \cite{ConnorKorajczyk95} and \cite[Chapter 6]{CLMK} for overviews on the topic).
Multi-factor asset pricing models can always be written in the form \eqref{eq:factors}, modulo the assumption of independent factors.\footnote{Under this assumption, representation \eqref{eq:factors} enables us to reduce the analysis to $\ell$ independent sources of randomness. We stress that any correlation structure among the asset returns $R$ can be generated by a suitable specification of the matrix $Q$.}
To this effect, recall first that the random vector $R$ represents the excess returns of $d$ risky assets with respect to a baseline security, usually chosen as a riskless asset. 
Multi-factor models are typically stated in the form
\be	\label{eq:factor_model}
R = \EE[R] + BF + \epsilon,
\ee
where $\EE[R]$ is the vector of risk premia, $F$ is a $k$-dimensional random vector of common risk factors, for some $k<d$, $B\in\R^{d\times k}$ is the matrix of factor loadings and $\epsilon$ is a $d$-dimensional random vector of idiosyncratic (asset-specific) risk factors. 
Depending on the modeling choices, $F$ can represent a vector of economic factors or statistical factors.
In the standard formulation (see, e.g., \cite[Chapter 7]{Ingersoll87}), all components of $F$ and $\epsilon$ are assumed to be uncorrelated.
Notice now that factor model \eqref{eq:factor_model} can be written in the form \eqref{eq:factors} by setting $Q=(\EE[R],B,\mathbb{I}_d)$ and $Y=(1,F^{\top},\epsilon^{\top})^{\top}$, where $\mathbb{I}_d$ denotes the $(d\times d)$ identity matrix.
In the special case of absence of idiosyncratic risk, the vector $Y$ can be directly identified with $F$.
Equation \eqref{eq:factors} therefore provides the simplest unifying representation of multi-factor asset pricing models. 
\end{rem}

For $k=1,\ldots,\ell$, we denote by $\cY_k$ the support of $Y_k$ and let $y_k^{\inf}:=\inf\cY_k$ and $y^{\sup}_k:=\sup\cY_k$. 
In this section, we work under the following standing assumption:
\be	\label{eq:support}
y_1^{\inf}=0,
\quad y_1^{\sup}=+\infty
\qquad\text{ and }\qquad
y_k^{\inf}<0<y_k^{\sup},
\quad\text{ for all }k=2,\ldots,\ell.
\ee
As will become clear in the sequel, condition \eqref{eq:support} corresponds to viewing the first factor $Y_1$ as the driving force of possible arbitrage opportunities, while the remaining factors cannot be exploited to generate arbitrage.\footnote{The only requirement in order to allow for arbitrage opportunities is the existence of a linear combination of factors with positive support. The assumption that $Y^1$ has positive support is only made for convention.}
In the context of the factor model \eqref{eq:factors}-\eqref{eq:support}, the following lemma gives a necessary and sufficient condition to ensure positive asset prices.
For $i=1,\ldots,d$ and $k=1,\ldots,\ell$, we denote by $q_{i,k}$ the element on the $i$-th row and $k$-th column of $Q$.

\begin{lem}	\label{lem:pos_prices}
In the context of the model of this section, for each $i=1,\ldots,d$, it holds that $R^i\geq-1$ a.s. if and only if the following condition is satisfied:
\be	\label{eq:pos_prices}
q_{i,1}\geq0
\qquad\text{ and }\qquad
\sum_{k=2}^\ell\bigl(q^+_{i,k}y_k^{\inf}-q^-_{i,k}y^{\sup}_k\bigr) \geq -1,
\ee
with the convention $0\times (-\infty)=0$ and $0\times(+\infty)=0$.
\end{lem}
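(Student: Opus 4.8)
The claim is that $R^i = \sum_{k=1}^\ell q_{i,k} Y_k \geq -1$ a.s. if and only if \eqref{eq:pos_prices} holds. The natural approach is to compute the essential infimum of the linear combination $\sum_k q_{i,k} Y_k$ over the support of $Y$ and impose that it be at least $-1$. Since $Y_1,\ldots,Y_\ell$ are independent, the support of $(Y_1,\ldots,Y_\ell)$ is the product $\prod_{k=1}^\ell \cY_k$, and hence
\[
\essinf\Bigl(\sum_{k=1}^\ell q_{i,k} Y_k\Bigr) = \sum_{k=1}^\ell \inf_{y_k \in \cY_k} q_{i,k} y_k = \sum_{k=1}^\ell \bigl(q_{i,k}^+ y_k^{\inf} - q_{i,k}^- y_k^{\sup}\bigr),
\]
where for each term we used $\inf_{y\in\cY_k} q\, y = q^+ y_k^{\inf} - q^- y_k^{\sup}$ (if $q\geq 0$ the infimum is attained by pushing $y$ down to $y_k^{\inf}$; if $q<0$, by pushing $y$ up to $y_k^{\sup}$), with the stated conventions $0\cdot(\pm\infty)=0$ handling the case $q_{i,k}=0$.

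\textbf{Carrying it out.} First I would record the structural fact that independence of the $Y_k$ gives $\cS_Y = \prod_k \cY_k$ for the support, so that the infimum of the sum over the support decomposes into a sum of infima; this is where condition \eqref{eq:support} enters, since it pins down $\cY_1 \subseteq [0,+\infty)$ and $\cY_k \cap (-\infty,0) \neq \emptyset$, $\cY_k \cap (0,+\infty) \neq \emptyset$ for $k\geq 2$. For the $k=1$ term, since $y_1^{\inf}=0$ and $y_1^{\sup}=+\infty$, one has $q_{i,1}^+ y_1^{\inf} - q_{i,1}^- y_1^{\sup} = -q_{i,1}^- \cdot (+\infty)$, which equals $0$ if $q_{i,1}\geq 0$ and $-\infty$ if $q_{i,1}<0$; hence finiteness of the essential infimum already forces $q_{i,1}\geq 0$, which is the first half of \eqref{eq:pos_prices}. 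Granting $q_{i,1}\geq 0$, the $k=1$ contribution is $0$ and the essential infimum reduces to $\sum_{k=2}^\ell (q_{i,k}^+ y_k^{\inf} - q_{i,k}^- y_k^{\sup})$; requiring this to be $\geq -1$ is exactly the second half of \eqref{eq:pos_prices}. Finally I would note the equivalence runs both ways: if \eqref{eq:pos_prices} holds then $R^i \geq \essinf R^i \geq -1$ a.s.; conversely if $R^i\geq -1$ a.s. then $\essinf R^i \geq -1$, which (being finite) forces $q_{i,1}\geq 0$ and then the displayed bound.

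\textbf{Main obstacle.} The only real subtlety is justifying the interchange ``$\essinf$ of a sum of independent random variables $=$ sum of the $\essinf$'s,'' i.e. that the support of the product is the product of the supports — this is a standard measure-theoretic fact but should be invoked carefully, together with the bookkeeping of the $\pm\infty$ conventions when some $q_{i,k}$ vanishes or when $y_1^{\sup}=+\infty$. Everything else is the elementary one-variable identity $\inf_{y\in\cY_k} q\,y = q^+ y_k^{\inf} - q^- y_k^{\sup}$ applied coordinatewise. I do not anticipate needing anything beyond Proposition~1.45-type facts about supports already cited in the paper.
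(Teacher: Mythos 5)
Your proof is correct and follows essentially the same route as the paper: both reduce the necessity direction to evaluating the worst-case corner of the product support $\prod_{k}\cY_k$, which is exactly where the independence of the factors enters. The paper simply implements your appeal to ``the support of an independent vector is the product of the marginal supports'' by hand, using truncated levels $y_k^{\inf}(n)$, $y_k^{\sup}(n)$ and positive-probability corner events followed by a passage to the limit, which takes care of the unbounded endpoints that you handle via the $0\times(\pm\infty)$ conventions.
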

\begin{proof}
Condition \eqref{eq:pos_prices} is obviously sufficient to ensure that $R^i\geq-1$ a.s., for all $i=1,\ldots,d$. 
Conversely, let $i\in\{1,\ldots,d\}$ and suppose that $R^i\geq-1$ a.s. For all $n\in\N$ and $k=1,\ldots,\ell$, let 
\[
y^{\inf}_k(n) := \Bigl(y^{\inf}_k+\frac{1}{n}\Bigr)\vee(-n)
\qquad\text{ and }\qquad
y^{\sup}_k(n) := \Bigl(y^{\sup}_k-\frac{1}{n}\Bigr)\wedge n,
\]
where $x\vee z:=\max\{x,z\}$ and $x\wedge z:=\min\{x,z\}$, for any $(x,z)\in\R^2$.
With this notation, it holds that $\PP(Y_k \leq y^{\inf}_k(n)) > 0$ and $\PP(Y_k \geq y^{\sup}_k(n)) > 0$, for all $n\in\N$ and $k=1,\ldots,\ell$.
Let $K^+_i:=\{k\in\{1,\ldots,\ell\} : q_{i,k}\geq0\}$ and $K^-_i:=\{1,\ldots,\ell\}\setminus K^+_i$. Since $\sum_{k=1}^{\ell}q_{i,k}Y_k\geq-1$ a.s. and due to the independence of the factors $\{Y_1,\ldots,Y_{\ell}\}$, it holds that
\begin{align*}
0 &< \PP\biggl(Y_{k} \leq y^{\inf}_k(n)\text{ and }Y_{j} \geq y^{\sup}_j(n); \;\forall k\in K^+_i, \forall j\in K^-_i\biggr)	\\
&= \PP\Biggl(\sum_{k\in K^+_i}q_{i,k}Y_{k}\geq-1-\sum_{j\in K^-_i}q_{i,j}Y_j
\;\text{ and }
Y_{k} \leq y^{\inf}_k(n)\text{ and }Y_{j} \geq y^{\sup}_j(n); \;\forall k\in K^+_i, \forall j\in K^-_i\Biggr).
\end{align*}
In turn, this necessarily implies that
$
\sum_{k\in K^+_i}q_{i,k}y^{\inf}_k(n)\geq-1-\sum_{j\in K^-_i}q_{i,j}y^{\sup}_j(n),
$
for each $n\in\N$. Condition \eqref{eq:pos_prices} follows by letting $n\rightarrow+\infty$ and using condition \eqref{eq:support}.
\end{proof}

In particular, condition \eqref{eq:pos_prices} requires that $q_{i,k}\geq0$ if $y^{\sup}_k=+\infty$ and $q_{i,k}\leq0$ if $y^{\inf}_k=-\infty$, for all $i=1,\ldots,d$ and $k=1,\ldots,\ell$.
Observe that condition \eqref{eq:pos_prices} relates the support of the random factors to the dependence structure of the asset returns, represented by the off-diagonal elements of $Q$.
Arguing similarly as in Lemma \ref{lem:pos_prices}, it can be shown that the set $\strat_{\rm adm}$ of admissible strategies can be represented as follows:
\be	\label{eq:theta_adm}
\strat_{\rm adm} 
= \left\{\pi\in\R^d : \pi^{\top}Q_{\bullet,1}\geq0
\text{ and }
\sum_{k=2}^{\ell}\left((\pi^{\top}Q_{\bullet,k})^+y^{\inf}_k-(\pi^{\top}Q_{\bullet,k})^-y^{\sup}_k\right)\geq-1\right\},
\ee
where $Q_{\bullet,k}$ denotes the $k$-th column of the matrix $Q$, with the same convention as in \eqref{eq:pos_prices}.

We now introduce additional trading restrictions, as considered in Section \ref{sec:constraints}. 
More specifically, we assume the presence of {\em borrowing constraints}:
\be	\label{eq:theta_c}
\strat_{\rm c} := \{\pi\in\R^d : \langle\pi,\mathbf{1}\rangle \leq c\},
\ee
for some fixed $c>0$. If $c\in(0,1)$, this corresponds to requiring that at least a proportion $1-c$ of the initial wealth is invested in the riskless asset, while, if $c\geq1$, at most a proportion $c-1$ of the initial wealth can be borrowed from the riskless asset.
Note that, since the set $\strat_{\rm c}$ is not a cone, the notions of arbitrage opportunity and arbitrage of the first kind differ (see Remark \ref{rem:compare_arb}).
As in Section \ref{sec:constraints}, the set $\strat$ of allowed strategies is defined as $\strat:=\strat_{\rm adm}\cap\strat_{\rm c}$.

The following proposition summarizes the arbitrage properties of the factor model under consideration, in the presence of borrowing constraints. We denote by $\cR(Q^{\top})$ the range of the matrix $Q^{\top}$ and by ${\rm e}_k$ the $k$-th vector of the canonical basis of $\R^{\ell}$, for $k=1,\ldots,\ell$.

\begin{prop}	\label{prop:arb_factor}
In the context of the model of this section, the following hold:
\begin{enumerate}[(i)]
\item there are arbitrage opportunities if and only if ${\rm e}_1\in\cR(Q^{\top})$. In that case, it holds that
\be	\label{eq:arb_strategies}
\cI_{\rm arb}\cap\strat 
= \bigl\{\lambda(QQ^{\top})^{-1}Q_{\bullet,1} : \lambda>0\text{ and }\lambda\langle(QQ^{\top})^{-1}Q_{\bullet,1},\mathbf{1}\rangle\leq c\bigr\};
\ee
\item if ${\rm e}_1\in\cR(Q^{\top})$, then ${\rm NA}_1$ holds if and only if $\langle(QQ^{\top})^{-1}Q_{\bullet,1},\mathbf{1}\rangle>0$.
\end{enumerate}
\end{prop}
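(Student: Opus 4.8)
The plan is to exploit the characterization of $\cI_{\rm arb}$ together with the explicit description \eqref{eq:theta_adm} of $\strat_{\rm adm}$ and the standing assumption \eqref{eq:support}. Since ${\rm rank}(Q)=d$ we have $\cL^{\perp}=\{0\}$, so $\cL=\R^d$ and $\cI_{\rm arb}=\{\pi\in\R^d:\langle\pi,z\rangle\geq0\ \text{for all}\ z\in\cS\}\setminus\{0\}$. First I would identify the support $\cS$ of $R=QY$: writing $\pi^\top R=\sum_{k=1}^\ell(\pi^\top Q_{\bullet,k})Y_k$ and using independence of the $Y_k$ together with \eqref{eq:support}, a strategy $\pi$ satisfies $\langle\pi,R\rangle\geq0$ a.s. if and only if $\pi^\top Q_{\bullet,k}=0$ for $k=2,\ldots,\ell$ (because each such $Y_k$ takes both signs, any nonzero coefficient would force negative values with positive probability) and $\pi^\top Q_{\bullet,1}\geq0$ (since $Y_1\geq0$ with $y_1^{\sup}=+\infty$). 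Equivalently $Q^\top\pi\in\R_+{\rm e}_1$, i.e. $Q^\top\pi=\mu\,{\rm e}_1$ for some $\mu\geq0$, and $\pi\in\cI_{\rm arb}$ precisely when $\mu>0$ (the case $\mu=0$ gives $Q^\top\pi=0$, hence $\pi=0$ since ${\rm rank}(Q)=d$).

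For part (i): such a $\pi$ exists iff ${\rm e}_1\in\cR(Q^\top)$, which is the range of $Q^\top$; this gives the first claim. When ${\rm e}_1\in\cR(Q^\top)$, I would solve $Q^\top\pi=\mu\,{\rm e}_1$ using the full-row-rank right inverse: the (Moore--Penrose) solution is $\pi=\mu(QQ^\top)^{-1}Q_{\bullet,1}$, and since ${\rm rank}(Q^\top)=d$ the solution set $\{\pi:Q^\top\pi=\mu\,{\rm e}_1\}$ is exactly this single point (the null space of $Q^\top$ acting on $\R^d$ is trivial). Noting $Q_{\bullet,1}=Q{\rm e}_1$, every element of $\cI_{\rm arb}$ is therefore of the form $\mu(QQ^\top)^{-1}Q_{\bullet,1}$ with $\mu>0$; relabelling $\mu=\lambda$ and intersecting with $\strat_{\rm c}=\{\pi:\langle\pi,\mathbf 1\rangle\leq c\}$ yields \eqref{eq:arb_strategies} — but one must also check the admissibility constraint $\strat_{\rm adm}$ is automatically met: indeed these $\pi$ have $\pi^\top Q_{\bullet,k}=0$ for $k\geq2$ and $\pi^\top Q_{\bullet,1}=\lambda>0\geq0$, so \eqref{eq:theta_adm} holds. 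Hence $\cI_{\rm arb}\cap\strat=\cI_{\rm arb}\cap\strat_{\rm c}$, which is the displayed set.

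For part (ii): by Proposition \ref{prop:NA1}, NA$_1$ fails iff $\cI_{\rm arb}\cap\strathat\neq\emptyset$, where $\strathat$ is the recession cone of $\strat=\strat_{\rm adm}\cap\strat_{\rm c}$. Assume ${\rm e}_1\in\cR(Q^\top)$ and set $\pi_0:=(QQ^\top)^{-1}Q_{\bullet,1}$, so $\cI_{\rm arb}=\{\lambda\pi_0:\lambda>0\}$. I would show $\pi_0\in\strathat$ iff $\langle\pi_0,\mathbf 1\rangle\leq0$: the recession cone of $\strat_{\rm c}$ is $\{y:\langle y,\mathbf 1\rangle\leq0\}$, and since $\pi_0$ already satisfies $\pi_0^\top Q_{\bullet,k}=0$ for $k\geq2$ and $\pi_0^\top Q_{\bullet,1}\geq0$, scaling it up keeps it in $\strat_{\rm adm}$ (from \eqref{eq:theta_adm}, adding $\lambda\pi_0$ to any admissible $\pi$ only increases $\pi^\top Q_{\bullet,1}$ and leaves the $k\geq2$ terms unchanged), so $\pi_0\in\strathat_{\rm adm}$. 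Therefore $\cI_{\rm arb}\cap\strathat\neq\emptyset$ iff $\langle\pi_0,\mathbf 1\rangle\leq0$, i.e. NA$_1$ holds iff $\langle(QQ^\top)^{-1}Q_{\bullet,1},\mathbf 1\rangle>0$.

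The main obstacle is the careful support/independence argument in the first step — pinning down precisely that $\langle\pi,R\rangle\geq0$ a.s. forces $\pi^\top Q_{\bullet,k}=0$ for every $k\geq2$ while only requiring $\pi^\top Q_{\bullet,1}\geq0$ — which is exactly the place where assumption \eqref{eq:support} enters and where one should reuse the conditioning technique already deployed in the proof of Lemma \ref{lem:pos_prices}. The remaining linear-algebra bookkeeping (right inverse, triviality of the solution set under full rank, recession cones of the two constraint sets) is routine.
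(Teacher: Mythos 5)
Your proposal is correct and follows essentially the same route as the paper: characterize $\cI_{\rm arb}$ by showing (via the conditioning/support argument of Lemma \ref{lem:pos_prices} and condition \eqref{eq:support}) that $\langle\pi,QY\rangle\geq0$ a.s.\ is equivalent to $Q^{\top}\pi=\lambda\,{\rm e}_1$ with $\lambda\geq0$, solve this under ${\rm rank}(Q)=d$ to get the ray generated by $(QQ^{\top})^{-1}Q_{\bullet,1}$, and then apply Proposition \ref{prop:NA1} together with the recession cone of the borrowing constraint for part (ii). The only difference is that you spell out the recession-cone bookkeeping ($\pi_0\in\strathat_{\rm adm}$ automatically, so membership in $\strathat$ reduces to $\langle\pi_0,\mathbf{1}\rangle\leq0$) that the paper leaves implicit.
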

\begin{proof}
{\em (i)}: 
let $\pi\in\R^d$ such that $\langle\pi,QY\rangle\geq0$ a.s. 
The same argument used to prove Lemma \ref{lem:pos_prices} and representation \eqref{eq:theta_adm} implies that the vector $\pi$ satisfies $\pi^{\top}Q_{\bullet,1}\geq0$ and 
\[
\sum_{k=2}^{\ell}\left((\pi^{\top}Q_{\bullet,k})^+y^{\inf}_k-(\pi^{\top}Q_{\bullet,k})^-y^{\sup}_k\right)\geq0.
\]
Recalling condition \eqref{eq:support}, this implies that $\pi^{\top}Q_{\bullet,k}=0$, for all $k=2,\ldots,\ell$. 
It follows that $\langle\pi,QY\rangle\geq0$ a.s. if and only if $Q^{\top}\pi=\lambda{\rm e}_1$, for some $\lambda\geq0$. 
Since ${\rm rank}(Q)=d$, it holds that $\cI_{\rm arb}=\{\lambda(QQ^{\top})^{-1}Q{\rm e}_1 : \lambda>0\}$, from which representation \eqref{eq:arb_strategies} of the set $\cI_{\rm arb}\cap\strat$ follows directly from the definition of the set $\strat_{\rm c}$ in \eqref{eq:theta_c}.\\
{\em (ii)}:
by Proposition \ref{prop:NA1}, NA$_1$ holds if and only if $\cI_{\rm arb}\cap\strathat=\emptyset$. Representation \eqref{eq:arb_strategies} implies that $\cI_{\rm arb}\cap\strathat=\emptyset$ if and only if $\langle(QQ^{\top})^{-1}Q_{\bullet,1},\mathbf{1}\rangle>0$.
\end{proof}

\begin{rem}
The vector $(QQ^{\top})^{-1}Q_{\bullet,1}$ corresponds to the strategy replicating the factor $Y_1$. While exact replication of $Y_1$ may be precluded by borrowing constraints,  \eqref{eq:arb_strategies} shows that any allowed strategy that replicates a positive fraction of $Y_1$ is an arbitrage opportunity.
The factor $Y_1$ can be (super-)replicated at zero cost if $\langle(QQ^{\top})^{-1}Q_{\bullet,1},\mathbf{1}\rangle\leq0$, in which case NA$_1$ fails.
\end{rem}

\begin{rem}	\label{rem:arbitrage_line}
The proof of Proposition \ref{prop:arb_factor} shows that a strategy $\pi\in\cI_{\rm arb}\cap\strat$ necessarily satisfies $\pi^{\top}Q_{\bullet,k}=0$, for all $k=2,\ldots,\ell$. When $\ell=d$, this corresponds to a set of $d-1$ linear equations in $d$ variables. This set defines a line in $\R^d$, which we call {\em arbitrage line}. This concept will be illustrated in the two-dimensional model considered in Section \ref{sec:2dim}.
\end{rem}

In view of Theorem \ref{thm:utility}, NA$_1$ ensures the well-posedness of optimal portfolio problems. In the presence of arbitrage opportunities, the borrowing constraint \eqref{eq:theta_c} is binding for every optimal allowed strategy. This is a direct consequence of the following simple result.

\begin{lem}	\label{lem:opt_constraints}
In the context of the model of this section, suppose that ${\rm e}_1\in\cR(Q^{\top})$ and ${\rm NA}_1$ holds. Then, for every $\pi\in\strat$, there exists an element $\hat{\pi}\in\strat$ such that 
\[
\langle\hat{\pi},QY\rangle\geq\langle\pi,QY\rangle\text{ a.s.}
\qquad\text{ and }\qquad
\langle\hat{\pi},\mathbf{1}\rangle=c.
\]
Moreover, there exists a strategy $\pi^{\max}$, explicitly given by
\be	\label{eq:max_arb}
\pi^{\max} = \frac{c}{\langle(QQ^{\top})^{-1}Q_{\bullet,1},\mathbf{1}\rangle}(QQ^{\top})^{-1}Q_{\bullet,1},
\ee
such that $\langle\pi^{\max},\mathbf{1}\rangle=c$ and $\langle\pi^{\max},QY\rangle\geq\langle\pi,QY\rangle$ a.s., for all $\pi\in\cI_{\rm arb}\cap\strat$.
\end{lem}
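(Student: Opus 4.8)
The plan is to decompose an arbitrary allowed strategy into its ``arbitrage component'' (a multiple of the replicating strategy for $Y_1$) and a complementary component, and then push the arbitrage component as far as the borrowing constraint permits. Write $w:=(QQ^{\top})^{-1}Q_{\bullet,1}$, so that $Q^{\top}w={\rm e}_1$ and $\langle w,QY\rangle=Y_1\geq0$ a.s.; under the hypothesis ${\rm e}_1\in\cR(Q^{\top})$ this $w$ is well-defined, and ${\rm NA}_1$ gives $\langle w,\mathbf{1}\rangle>0$ by Proposition \ref{prop:arb_factor}(ii). First I would, given $\pi\in\strat$, look for $\hat\pi$ of the form $\hat\pi=\pi+\mu w$ with $\mu\geq0$ chosen so that $\langle\hat\pi,\mathbf{1}\rangle=c$, i.e. $\mu=(c-\langle\pi,\mathbf{1}\rangle)/\langle w,\mathbf{1}\rangle$, which is nonnegative precisely because $\pi\in\strat_{\rm c}$ means $\langle\pi,\mathbf{1}\rangle\leq c$ and $\langle w,\mathbf{1}\rangle>0$. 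Then $\langle\hat\pi,QY\rangle=\langle\pi,QY\rangle+\mu Y_1\geq\langle\pi,QY\rangle$ a.s., and $\langle\hat\pi,\mathbf{1}\rangle=c$ by construction.

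The one point requiring care — and the main obstacle — is verifying $\hat\pi\in\strat$, i.e. $\hat\pi\in\strat_{\rm adm}$: we clearly have $\hat\pi\in\strat_{\rm c}$ since $\langle\hat\pi,\mathbf{1}\rangle=c$, but admissibility needs $\langle\hat\pi,QY\rangle\geq-1$ a.s. This follows from the representation \eqref{eq:theta_adm}: since $Q^{\top}w={\rm e}_1$, we have $\hat\pi^{\top}Q_{\bullet,1}=\pi^{\top}Q_{\bullet,1}+\mu\geq0$ and $\hat\pi^{\top}Q_{\bullet,k}=\pi^{\top}Q_{\bullet,k}$ for all $k=2,\ldots,\ell$, so the defining inequalities of $\strat_{\rm adm}$ in \eqref{eq:theta_adm} are inherited from $\pi$. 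Hence $\hat\pi\in\strat_{\rm adm}\cap\strat_{\rm c}=\strat$, proving the first assertion.

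For the second assertion, I would take $\pi^{\max}:=\dfrac{c}{\langle w,\mathbf{1}\rangle}\,w$, which is exactly formula \eqref{eq:max_arb}. Then $\langle\pi^{\max},\mathbf{1}\rangle=c$ by direct computation, and $\langle\pi^{\max},QY\rangle=\dfrac{c}{\langle w,\mathbf{1}\rangle}Y_1\geq0$ a.s., with $\pi^{\max}\in\strat_{\rm adm}$ again by \eqref{eq:theta_adm} (its $k$-th coordinate $\pi^{\max\,\top}Q_{\bullet,k}$ vanishes for $k\geq2$ and its first coordinate is nonnegative), and $\pi^{\max}\in\strat_{\rm c}$ since $\langle\pi^{\max},\mathbf{1}\rangle=c$; thus $\pi^{\max}\in\strat$. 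Finally, for any $\pi\in\cI_{\rm arb}\cap\strat$, the description \eqref{eq:arb_strategies} gives $\pi=\lambda w$ for some $\lambda>0$ with $\lambda\langle w,\mathbf{1}\rangle\leq c$, hence $\lambda\leq c/\langle w,\mathbf{1}\rangle$, so that $\langle\pi^{\max},QY\rangle-\langle\pi,QY\rangle=\bigl(c/\langle w,\mathbf{1}\rangle-\lambda\bigr)Y_1\geq0$ a.s. This completes the argument; the only genuinely non-routine step is the admissibility check via \eqref{eq:theta_adm}, everything else being bookkeeping with the identity $Q^{\top}w={\rm e}_1$.
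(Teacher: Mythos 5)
Your argument is correct and follows essentially the same route as the paper: both define $\hat\pi=\pi+\mu\,(QQ^{\top})^{-1}Q_{\bullet,1}$ with $\mu=(c-\langle\pi,\mathbf{1}\rangle)/\langle(QQ^{\top})^{-1}Q_{\bullet,1},\mathbf{1}\rangle\geq0$ and deduce the second assertion directly from the characterization \eqref{eq:arb_strategies}. The only difference is that you spell out the admissibility check $\hat\pi\in\strat_{\rm adm}$ via \eqref{eq:theta_adm}, a step the paper leaves implicit; this is a worthwhile addition but not a change of method.
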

\begin{proof}
Let $\pi$ be an arbitrary allowed strategy. Letting $\lambda:=(c-\langle\pi,\mathbf{1}\rangle)\langle(QQ^{\top})^{-1}Q_{\bullet,1},\mathbf{1}\rangle^{-1}\geq0$, define the strategy $\hat{\pi}:=\pi+\lambda(QQ^{\top})^{-1}Q_{\bullet,1}$. Clearly, it holds that $\langle\hat{\pi},\mathbf{1}\rangle=c$ and, in addition, $\langle\hat{\pi},QY\rangle=\langle\pi,QY\rangle+\lambda {\rm e}_1^{\top}Q^{\top}(QQ^{\top})^{-1}QY=\langle\pi,QY\rangle+\lambda Y_1\geq\langle\pi,QY\rangle$ a.s.
The second part of the lemma follows as a direct consequence of the  characterization \eqref{eq:arb_strategies} of the set $\cI_{\rm arb}\cap\strat$.
\end{proof}

We call {\em maximal arbitrage strategy} the strategy $\pi^{\max}$  given in \eqref{eq:max_arb}. 
Whenever NA$_1$ fails to hold (i.e.,  $\langle(QQ^{\top})^{-1}Q_{\bullet,1},\mathbf{1}\rangle\leq0$), a maximal arbitrage strategy does not exist, because arbitrage opportunities can be arbitrarily scaled.
Note that $\pi^{\max}$ is not necessarily the optimal strategy in an expected utility maximization problem of type \eqref{eq:opt_utility}. Similarly, $\pi^{\max}$ does not necessarily coincide with the num\'eraire portfolio $\rho$. This will be explicitly illustrated in Examples \ref{ex:first_example}--\ref{ex:third_example}.

\begin{rem}[On relative arbitrage]	\label{rem:rel_arb_factor}
(1)
In the context of the model of this section, let us assume that ${\rm NA}_1$ holds and ${\rm e}_1\in\cR(Q^{\top})$. Then, for $\theta\in\strat$, there exist arbitrage opportunities relative to $\theta$ if and only if $\langle\theta,\mathbf{1}\rangle<c$. Indeed, if $\langle\theta,\mathbf{1}\rangle<c$, then the existence of an arbitrage opportunity relative to $\theta$ follows from Lemma \ref{lem:opt_constraints}. Conversely, suppose that $\langle\theta,\mathbf{1}\rangle=c$ and let $\pi\in\R^d$ with $\pi-\theta\in\cI_{\rm arb}$. By Proposition \ref{prop:arb_factor}, this holds if and only if $\pi-\theta=\eta(QQ^{\top})^{-1}Q_{\bullet,1}$, for some $\eta>0$. However, since 
$
\langle\pi,\mathbf{1}\rangle
= \langle\theta,\mathbf{1}\rangle + \eta\langle(QQ^{\top})^{-1}Q_{\bullet,1},\mathbf{1}\rangle > c,
$
the strategy $\pi$ is not an allowed trading strategy.  This shows that there cannot exist arbitrage opportunities relative to $\theta$ if $\langle\theta,\mathbf{1}\rangle=c$.
In particular, there do not exist arbitrage opportunities relative to $\pi^{\max}$. 

(2)
One can also study the existence of arbitrage opportunities relative to the {\em market portfolio} $\pi^{\rm mkt}$ defined by $\pi^{\rm mkt}_i:=S^i_0/\langle S_0,\mathbf{1}\rangle$, for $i=1,\ldots,d$ (see \cite[Section 2]{FK09}). As a consequence of part (1) of this remark, arbitrage opportunities relative to the market exist if and only if $c>1$. The financial intuition is that, if $c>1$, then it is possible to invest the whole initial capital $v$ in the market portfolio, borrow an amount $v(c-1)$ from the riskless asset and invest that amount in the strategy $\pi^{\max}$, thus improving the performance of the market portfolio. The strategy $\pi^*\in\strat$ which best outperforms the market portfolio is given by $\pi^*=\pi^{\rm mkt}+\frac{c-1}{c}\pi^{\max}$. 
\end{rem}

\subsection{The case of a unit triangular matrix $Q$}	\label{sec:triang}

Let us consider the special case where $Q$ is a $(d\times d)$ upper triangular matrix with $q_{i,i}=1$, for all $i=1,\ldots,d$. In this case, the results presented in Section \ref{sec:gen_factor} can be stated explicitly in terms of the elements of $Q$. First, condition \eqref{eq:pos_prices} ensuring the positivity of asset prices can be rewritten in the following recursive form:
\be	\label{eq:pos_prices_triang}
y^{\inf}_d \geq -1
\qquad\text{ and }\qquad
y^{\inf}_i \geq -1 - \sum_{k=i+1}^d\bigl(q^+_{i,k}y^{\inf}_k-q^-_{i,k}y^{\sup}_k\bigr),
\quad\text{ for all }i=1,\ldots,d-1.
\ee
In view of \eqref{eq:theta_adm}, the set $\strat_{\rm adm}$ of admissible strategies takes the form
\be	\label{eq:theta_adm_triang}
\strat_{\rm adm} = \left\{\pi\in\R^d : \pi_1\geq0 \text{ and }
\sum_{k=2}^d\left(\Biggl(\sum_{i=1}^{k-1}\pi_iq_{i,k}+\pi_k\Biggr)^+y^{\inf}_k-\Biggl(\sum_{i=1}^{k-1}\pi_iq_{i,k}+\pi_k\Biggr)^-y^{\sup}_k\right)\geq-1\right\}.
\ee

Since ${\rm rank}(Q)=d$, the condition ${\rm e}_1\in\cR(Q^{\top})$  is automatically satisfied and, therefore, there exist arbitrage opportunities (see Proposition \ref{prop:arb_factor}).
More specifically, it holds that
\be	\label{eq:arb_strategies_triang}
\cI_{\rm arb}\cap\strat
= \bigl\{\lambda Q^{-1}_{1,\bullet} : \lambda>0 \text{ and }\lambda\langle Q^{-1}_{1,\bullet},\mathbf{1}\rangle\leq c\bigr\},
\ee
where $Q^{-1}_{1,\bullet}$ denotes the first row of the matrix $Q^{-1}$, written as a column vector. 
The following lemma gives an explicit representation of the vector $Q^{-1}_{1,\bullet}$, which determines all the arbitrage properties of the model under consideration. 

\begin{lem}	\label{lem:inverse}
In the context of the model of this section, suppose that $Q$ is a unit triangular matrix. Then,  for all $k=1,\ldots,d$, it holds that $Q^{-1}_{1,k}=\alpha_k$, where $\alpha_k$ is defined by
\[
\alpha_1 := 1
\qquad\text{and}\qquad
\alpha_k := \sum_{J\in A(k)}(-1)^{|J|-1}\prod_{l=1}^{|J|-1}q_{j_l,j_{l+1}},
\quad\text{for }k=2,\ldots,d,
\]
where $A(k)$ denotes the family of all subsets $J=\{j_1,\ldots,j_r\}\subseteq\{1,\ldots,k\}$, with $r\leq k$, such that $j_1=1$, $j_r=k$ and $j_l<j_{l+1}$, for all $l=1,\ldots,r-1$, and $|J|$ denotes the cardinality of $J$.
\end{lem}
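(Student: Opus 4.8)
The plan is to prove the formula $Q^{-1}_{1,k}=\alpha_k$ by exploiting the fact that $Q$ is unit upper triangular, so that $Q=\mathbb{I}_d+N$ with $N$ strictly upper triangular (hence nilpotent, $N^d=0$). Then $Q^{-1}=\sum_{m=0}^{d-1}(-1)^mN^m$, and the first row of $Q^{-1}$ is obtained by reading off the $(1,k)$ entries of the powers $N^m$. Since $N$ is strictly upper triangular, its powers admit the standard combinatorial description: $(N^m)_{1,k}=\sum q_{1,i_1}q_{i_1,i_2}\cdots q_{i_{m-1},k}$, where the sum runs over strictly increasing chains $1<i_1<\cdots<i_{m-1}<k$ of length $m+1$ from $1$ to $k$ (and for $m=0$ the entry is $\delta_{1,k}$). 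Grouping these chains by their underlying set $J=\{1,i_1,\ldots,i_{m-1},k\}$, each chain of length $m+1$ from $1$ to $k$ corresponds bijectively to a set $J\in A(k)$ with $|J|=m+1$, and contributes the sign $(-1)^m=(-1)^{|J|-1}$ together with the product $\prod_{l=1}^{|J|-1}q_{j_l,j_{l+1}}$ along the (unique, since $J$ is ordered) chain through $J$. Summing over all $m$ and all $J$ then yields exactly $\alpha_k$.

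Concretely, the steps I would carry out are: (1) write $Q=\mathbb{I}_d+N$, note $N$ is strictly upper triangular so $N^m=0$ for $m\ge d$, and invoke the Neumann-type identity $Q^{-1}=\sum_{m=0}^{d-1}(-1)^mN^m$ (verified by telescoping $(\mathbb{I}_d+N)\sum_{m=0}^{d-1}(-1)^mN^m=\mathbb{I}_d-(-1)^dN^d=\mathbb{I}_d$). (2) Observe $N_{i,k}=q_{i,k}$ for $i<k$ and $N_{i,k}=0$ otherwise, and expand $(N^m)_{1,k}$ as a sum over length-$(m+1)$ index chains; strict upper triangularity forces every such chain to be strictly increasing. (3) Set up the bijection between strictly increasing chains from $1$ to $k$ and subsets $J\in A(k)$: a set $J=\{j_1<\cdots<j_r\}$ with $j_1=1$, $j_r=k$ determines a unique chain and has length $r=|J|$, corresponding to the power $m=|J|-1$. (4) Assemble: $Q^{-1}_{1,k}=\sum_{m=0}^{d-1}(-1)^m(N^m)_{1,k}=\sum_{J\in A(k)}(-1)^{|J|-1}\prod_{l=1}^{|J|-1}q_{j_l,j_{l+1}}=\alpha_k$, checking the base case $k=1$ (only $J=\{1\}$, giving $\alpha_1=1$) separately.

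The main obstacle is purely bookkeeping rather than conceptual: carefully matching the combinatorial index set $A(k)$ in the statement with the chains arising from matrix multiplication, and making sure the sign conventions and the empty-product conventions ($\prod_{l=1}^{0}=1$ for $|J|=1$) line up. One should also be mindful that chains through $J$ are automatically increasing precisely because $J$ is listed in increasing order, so there is exactly one term per $J$ — no overcounting. Beyond this, the argument is a direct consequence of the nilpotent expansion of the inverse of a unipotent upper triangular matrix.
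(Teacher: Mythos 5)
Your proof is correct, but it takes a genuinely different route from the paper. The paper characterizes $Q^{-1}_{1,\bullet}$ as the unique solution $\pi$ of the linear system $Q^{\top}\pi={\rm e}_1$, which (by unit triangularity) reduces to the recursion $\pi_1=1$ and $\pi_k=-\sum_{i=1}^{k-1}\pi_i q_{i,k}$, and then verifies that the vector $\alpha$ satisfies this recursion by splitting each chain $J\in A(k)$ according to its penultimate element. You instead write $Q=\mathbb{I}_d+N$ with $N$ strictly upper triangular and nilpotent, invoke the terminating Neumann series $Q^{-1}=\sum_{m=0}^{d-1}(-1)^mN^m$, and identify the $(1,k)$ entry of $N^m$ with the sum over strictly increasing chains of length $m+1$ from $1$ to $k$; the bijection between such chains and the sets $J\in A(k)$ with $|J|=m+1$ is exactly right, the signs $(-1)^m=(-1)^{|J|-1}$ match, and the edge cases ($k=1$ via the $m=0$ term, and $|J|\geq 2$ forced for $k\geq 2$) are handled consistently with the empty-product convention. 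What your approach buys is a fully explicit, non-recursive derivation that makes the combinatorial structure of $\alpha_k$ (sums over chains with alternating signs) transparent from the outset; what the paper's approach buys is brevity and the fact that the recursion \eqref{eq:recursive} is itself reused immediately afterwards to show that all arbitrage strategies lie on the arbitrage line. Either proof is acceptable.
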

\begin{proof}
The vector $Q^{-1}_{1,\bullet}$ is the unique solution $\pi\in\R^d$ to the linear system $Q^{\top}\pi={\rm e}_1$. Since $Q$ is a unit triangular matrix, the solution $\pi$ is characterized by $\pi_1=1$ and by the recursive relation
\be	\label{eq:recursive}
\pi_k = -\sum_{i=1}^{k-1}\pi_iq_{i,k},
\qquad\text{ for all }k=2,\ldots,d.
\ee
To prove the lemma, it suffices to show that the vector $\alpha=(\alpha_1,\ldots,\alpha_d)^{\top}$ satisfies \eqref{eq:recursive}.
To this effect, notice that, for every $k=2,\ldots,d$,
\[
-\sum_{i=1}^{k-1}\alpha_iq_{i,k}
= -q_{1,k}-\sum_{i=2}^{k-1}\sum_{J\in A(i)}(-1)^{|J|-1}\prod_{l=1}^{|J|-1}q_{j_l,j_{l+1}}q_{i,k}
= \alpha_k.
\]
This shows that $\alpha=(\alpha_1,\ldots,\alpha_d)^{\top}$ satisfies \eqref{eq:recursive} and, therefore, it holds that $Q^{-1}_{1,\bullet}=\alpha$.
\end{proof}

In view of \eqref{eq:arb_strategies_triang}, the vector $\alpha$ introduced in Lemma \ref{lem:inverse} generates all arbitrage strategies, up to a multiplicative factor depending on the borrowing constraint $c$. 
More precisely, every arbitrage strategy $\pi$ is necessarily of the form $\pi=\lambda\alpha$, with $\lambda>0$ satisfying $\lambda\langle\alpha,\mathbf{1}\rangle\leq c$, and is such that $V_1^{\pi}=1+\lambda Y_1$. Furthermore, by \eqref{eq:recursive}, all such strategies $\pi$ belong to the {\em arbitrage line} (see Remark \ref{rem:arbitrage_line}).
As an example, for $d=4$, all arbitrage strategies are proportional to
\[
\alpha 
= \begin{pmatrix}
1\\
-q_{1,2}\\
-q_{1,3}+q_{1,2}\,q_{2,3}\\
-q_{1,4}+q_{1,2}\,q_{2,4}+q_{1,3}\,q_{3,4}-q_{1,2}\,q_{2,3}\,q_{3,4}
\end{pmatrix}.
\]

In the model considered in this subsection, the condition characterizing the validity of NA$_1$ takes the simple form $\langle Q^{-1}_{1,\bullet},\mathbf{1}\rangle>0$ (see Proposition \ref{prop:arb_factor}).
As a consequence of Lemma \ref{lem:inverse}, this implies the following explicit characterization of NA$_1$:
\be	\label{eq:alpha_NA1}
\text{NA$_1$ holds }
\qquad\Longleftrightarrow\qquad
1+\sum_{J\subseteq\{1,\ldots,d\}}(-1)^{|J|-1}\prod_{l=1}^{|J|-1}q_{j_l,j_{l+1}}>0,
\ee
where the summation is taken over all sets $J=\{j_1,\ldots,j_r\}$, with $2\leq r\leq d$, such that $j_1=1$ and $j_l<j_{l+1}$, for all $l=1,\ldots,r-1$.
In view of \eqref{eq:max_arb}, the same quantity appearing on the right of \eqref{eq:alpha_NA1} represents the denominator of the maximal arbitrage strategy $\pi^{\max}$.

\subsection{A two-dimensional example with arbitrage}	\label{sec:2dim}

We now present a two-dimensional model that allows for a geometric visualization of the concepts introduced above.
Let $d=2$ and consider a pair $(Y_1,Y_2)$ of independent random variables such that $\cY_1=[0,+\infty)$ and $y^{\inf}_2<0<y^{\sup}_2$. Let
\[
Q = \begin{pmatrix}
1 & \gamma	\\
0 & 1
\end{pmatrix},
\]
with $\gamma\in\R$, and suppose that the asset returns $(R_1,R_2)$ are generated as in \eqref{eq:factors}. To ensure positive asset prices, condition \eqref{eq:pos_prices_triang} needs to be satisfied. In this example, the largest possible support of the distribution of the random factor $Y_2$ is given by
\[
\begin{cases}
y^{\inf}_2=-1 \text{ and }y^{\sup}_2=+\infty, & \text{ if }\gamma\in[0,1);\\
y^{\inf}_2=-1/\gamma \text{ and }y^{\sup}_2=+\infty, & \text{ if }\gamma\geq1;\\
y^{\inf}_2=-1 \text{ and }y^{\sup}_2=-1/\gamma, & \text{ if }\gamma<0. 
\end{cases}
\]

In view of \eqref{eq:theta_adm_triang}, a strategy $\pi=(\pi_1,\pi_2)$ is admissible if and only if
\be	\label{eq:example_adm}
\begin{cases}
\pi_1\geq0 \text{ and } -\gamma\pi_1\leq\pi_2\leq1-\gamma\pi_1, & \text{ if }\gamma\in[0,1);\\
\pi_1\geq0 \text{ and } -\gamma\pi_1\leq\pi_2\leq\gamma-\gamma\pi_1,& \text{ if }\gamma\geq1;\\
\pi_1\geq0 \text{ and } \gamma-\gamma\pi_1\leq\pi_2\leq1-\gamma\pi_1,& \text{ if }\gamma<0.
\end{cases}
\ee

In this two-dimensional setting, the borrowing constraint \eqref{eq:theta_c} takes the form $\pi_1+\pi_2\leq c$. Together with \eqref{eq:example_adm}, this constraint determines the set $\strat$ of allowed strategies.
Regardless of the values of $\gamma$ and $c$, {\em arbitrage opportunities always exist}. More specifically, it holds that
\be	\label{eq:example_arb_set}
\cI_{\rm arb}\cap\strat = \bigl\{\pi\in\R^2 : \pi_1>0,\, \pi_2=-\gamma\pi_1\text{ and }\pi_1(1-\gamma)\leq c\bigr\} \neq \emptyset.
\ee
The {\em arbitrage line} (see Remark  \ref{rem:arbitrage_line}) is described by the equation $\pi_2=-\gamma\pi_1$.
Figure \ref{fig} provides a visualization of the set $\strat$, with the arbitrage line highlighted in red.

\begin{figure}[h]	
\begin{tikzpicture}
\draw[thick,->] (-2,0) -- (6,0) node[anchor=north east] 
{$\pi_1$};
\draw[yellow, fill=yellow] (0,1) -- (3,-0.5) -- (5,-2.5) -- (0,0);
\draw[thick,->] (0,-3.5) -- (0,4) node[anchor=north east] 
{$\pi_2$};
\draw[thick] (0,1) -- (6,-2) node[anchor=south west]  
{$\pi_2=1-\gamma\pi_1$};
\draw[thick] (0,2.5) -- (6,-3.5);
\node[above right] at (0.8,1.5) {$\pi_2=c-\pi_1$};
\node[above left] at (0,1) {1};
\node[above left] at (0,2.5) {$c$};
\node[left] at (5.4,-2.8) {$\quad(\frac{c}{1-\gamma},-\frac{c\gamma}{1-\gamma})\quad$};
\draw[very thick, red] (0,0) -- (6,-3) node[anchor=south west] 
{$\textcolor{black}{\pi_2=-\gamma\pi_1}$};
\end{tikzpicture}
\caption{Geometric illustration of the set $\strat$ (yellow area), for $c=2.5$ and $\gamma=0.5$.}
\label{fig}
\end{figure}
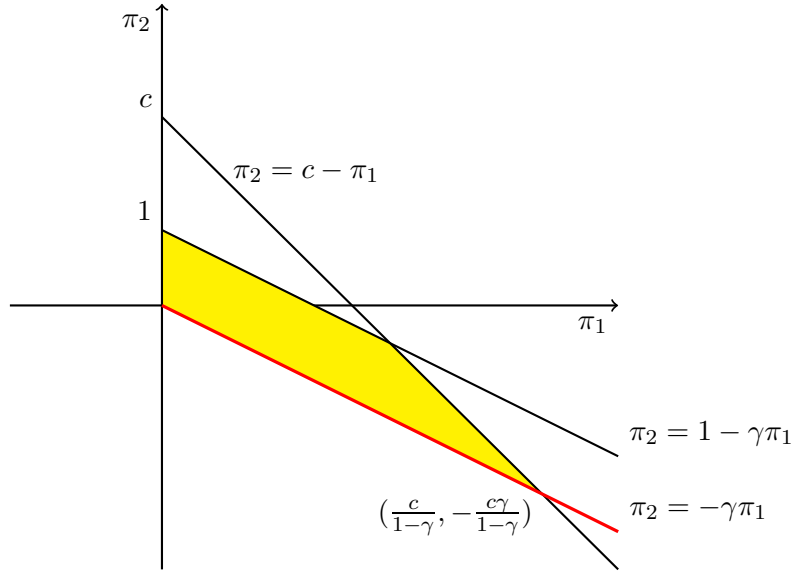

The NA$_1$ condition is satisfied if and only if $\langle Q^{-1}_{1,\bullet},\mathbf{1}\rangle>0$.  Therefore, we have that 
\[
\text{ NA$_1$ holds }
\qquad\Longleftrightarrow\qquad
\gamma<1.
\]
Indeed, from \eqref{eq:example_arb_set} we have that $\cI_{\rm arb}\cap\strathat=\emptyset$ if and only if $\gamma<1$.
Graphically, this condition corresponds to requesting that the arbitrage line intersects the borrowing constraint line (see Figure \ref{fig}), i.e., the line of equation $\pi_2=c-\pi_1$. Observe also that the set $\strat$ is compact if and only if such an intersection occurs (compare with condition {\em (iv)} in Proposition \ref{prop:NA1}).

For $\gamma<1$, all arbitrage strategies are contained in the line segment passing through the origin and the point $(\pi^{\max}_1,\pi^{\max}_2)$ characterizing the maximal arbitrage strategy and given by
\be	\label{eq:pimax_ex}
\pi^{\max}_1 = \frac{c}{1-\gamma}
\qquad\text{ and }\qquad
\pi^{\max}_2 = -\frac{c\gamma}{1-\gamma},
\ee
as follows from \eqref{eq:max_arb}.
Graphically, the strategy $\pi^{\max}$ corresponds to the point of intersection between the arbitrage line and the borrowing constraint line.
If the two lines do not intersect, then every arbitrage opportunity can be arbitrarily scaled (i.e., NA$_1$ fails to hold).

In view of Theorem \ref{thm:numeraire}, the num\'eraire portfolio $\rho$ exists if and only if $\gamma<1$. The num\'eraire portfolio may or may not coincide with the maximal arbitrage strategy $\pi^{\max}$, depending on the distributional properties of $Y_1$ and $Y_2$. For illustration, we present three simple examples.

\begin{example}	\label{ex:first_example}
Let $\gamma\in[0,1)$ and suppose that $\EE[Y_2]=0$. In this case, it holds that $\rho=\pi^{\max}$.
Indeed, let $\pi=(\pi_1,\pi_2)$ be an arbitrary strategy satisfying \eqref{eq:example_adm} and $\pi_1+\pi_2\leq c$. By Lemma \ref{lem:opt_constraints}, there exists a strategy of the form $\hat{\pi}=(\hat{\pi}_1,c-\hat{\pi}_1)$ such that $V^{\hat{\pi}}_1\geq V^{\pi}_1$ a.s. Due to \eqref{eq:example_adm}, it necessarily holds that $0\leq\hat{\pi}_1\leq c/(1-\gamma)$. Therefore, using the independence of $Y_1$ and $Y_2$ and the fact that $\EE[Y_2]=0$, we have that
\[
\EE\left[\frac{V^{\pi}_1}{V^{\pi^{\max}}_1}\right]
\leq \EE\left[\frac{V^{\hat{\pi}}_1}{V^{\pi^{\max}}_1}\right]
= \EE\left[\frac{1+\hat{\pi}_1Y_1}{1+\frac{c}{1-\gamma}Y_1}\right]
\leq 1,
\]
where the last inequality follows from the fact that $Y_1\geq0$ a.s.
This shows that the num\'eraire portfolio $\rho$ coincides with the maximal arbitrage strategy $\pi^{\max}$ given in \eqref{eq:pimax_ex}.
\end{example}

\begin{example}	\label{ex:second_example}
Let $\gamma=1/2$ and $c=1$. Suppose that $Y_1\sim{\rm Exp}(1)$ and $1+Y_2\sim{\rm Exp}(\beta)$, with $\beta>0$. In this case, for suitable values of $\beta$, the maximal arbitrage strategy is not the num\'eraire portfolio. Indeed, considering the strategy $(0,1)\in\strat$, we have that
\[
\EE\left[\frac{V^{(0,1)}_1}{V^{\pi^{\max}}_1}\right]
= \EE\left[\frac{1+Y_2}{1+2Y_1}\right]
= \frac{1}{\beta}\EE\left[\frac{1}{1+2Y_1}\right]
= \frac{\sqrt{e}}{2\beta}\int_{1/2}^{+\infty}\frac{e^{-x}}{x}\ud x 
\approx \frac{0.461}{\beta}.
\]
For any sufficiently small value of $\beta$, it holds that $\EE[V^{(0,1)}_1/V^{\pi^{\max}}_1]>1$ and, therefore,  the strategy $\pi^{\max}$ cannot be the num\'eraire portfolio in that case. Furthermore, since $V^{\pi^{\max}}_1\geq V^{\pi}_1$ a.s. for all $\pi\in\cI_{\rm arb}\cap\strat$ (see Lemma \ref{lem:opt_constraints}), the num\'eraire portfolio $\rho$ does not belong to the set of arbitrage opportunities (i.e., $\rho\notin\cI_{\rm arb}\cap\strat$).\footnote{Taking for instance $\beta=0.3$ in the example under consideration, the num\'eraire portfolio $\rho$ can be numerically computed as $\rho\approx(1.335,-0.335)\neq(2,-1)=\pi^{\max}$.}
In view of Remark \ref{rem:log_optimal}, the log-optimal strategy $\rho$ is therefore not an arbitrage strategy. Moreover, since the trading constraint \eqref{eq:theta_c} is binding for $\rho$ (as a consequence of Lemma \ref{lem:opt_constraints}), it is not allowed to improve the strategy $\rho$ by adding to it a fraction of any arbitrage strategy.

This example shows that, even in the presence of arbitrage, it is not necessarily optimal to invest in an arbitrage opportunity.
The financial intuition is that, for a logarithmic investor and sufficiently small $\beta$, the risk-reward profile of the strategy $\rho$ is more attractive than any arbitrage opportunity. Indeed, in the present example every allowed strategy $\pi=(\pi_1,\pi_2)$ satisfies
\[
V^{\pi}_1 = 1+\pi_1Y^1+(\pi_1/2+\pi_2)Y^2.
\]
Since $\pi_1/2+\pi_2\geq0$ by \eqref{eq:example_adm}, losses can only occur on the event $\{Y^2<0\}$, which happens with probability $1-\exp(-\beta)$. In view of \eqref{eq:example_arb_set}, arbitrage strategies satisfy $\pi_1/2+\pi_2=0$ and therefore eliminate the influence of the risk factor $Y^2$, with consequently no risk of losses. On the contrary, for sufficiently small $\beta$ the log-optimal strategy $\rho$ does not belong to the set $\cI_{\rm arb}$, thus implying a positive exposure to the factor $Y^2$. The financial explanation is  that the log-optimal strategy can tolerate the risk of losses in order to profit from potentially large values of $Y^2$, which are most likely for small values of $\beta$.
\end{example}

\begin{example}	\label{ex:third_example}
Let $\gamma<0$ and suppose that $\EE[Y_1]<+\infty$ and $\EE[Y_2]<+\infty$. Under these assumptions, the log-optimal portfolio $\pi^*$ exists and, therefore, it coincides with the num\'eraire portfolio $\rho$. 
Lemma \ref{lem:opt_constraints} together with \eqref{eq:example_adm} implies that $\pi^*$ is of the form $(\pi^*_1,c-\pi^*_1)$, with $\pi^*_1\in D(c,\gamma):=[\frac{(c-1)^+}{1-\gamma},\frac{c-\gamma}{1-\gamma}]$. Consider the function $g:D(c,\gamma)\rightarrow\R$ defined by
\[
g(\pi_1) := 
\EE\bigl[\log\bigl(V^{(\pi_1,c-\pi_1)}_1\bigr)\bigr] =
\EE\bigl[\log\bigl(1+\pi_1\bigl(Y_1+(\gamma-1)Y_2\bigr)+cY_2\bigr)\bigr],
\]
for $\pi_1\in D(c,\gamma)$.
Since the function $g$ is concave and $\pi^{\max}_1=c/(1-\gamma)$  belongs to the interior of the interval $D(c,\gamma)$, the log-optimal portfolio $\pi^*$ is given by $\pi^{\max}$ if and only if $g'(\pi^{\max}_1)=0$. The latter condition is equivalent to
\be	\label{eq:example_log_opt}
\EE\left[\frac{Y_1}{1+\frac{c}{1-\gamma}Y_1}\right]
= (1-\gamma) \EE\left[\frac{Y_2}{1+\frac{c}{1-\gamma}Y_1}\right].
\ee
In the present example, $\rho=\pi^{\max}$ holds if and only if condition \eqref{eq:example_log_opt} is satisfied. In particular, unlike in Example \ref{ex:first_example} where $\gamma\in[0,1)$, note that \eqref{eq:example_log_opt} cannot be satisfied if $\EE[Y_2]=0$. 
\end{example}

\section{The multi-period setting}	\label{sec:multiperiod}

In this section, we extend the analysis of Section \ref{sec:1p} to the multi-period case. We allow for convex trading constraints evolving randomly over time and prove that NA$_1$ holds in a dynamic setting if and only if it holds in each single trading period. This fundamental fact enables us to address the multi-period case by relying on arguments similar to those employed in Section \ref{sec:1p}. 
For brevity of presentation, we prove multi-period versions of only the central results characterizing market viability and NA$_1$, the remaining results and remarks admitting analogous extensions.

\subsection{Setting and trading restrictions}

Let $(\Omega,\cF,\bF,P)$ be a filtered probability space, where $\bF=(\cF_t)_{t=0,1,\ldots,T}$ and $\cF_0$ is the trivial $\sigma$-field completed by the $P$-nullsets of $\cF$, for a fixed time horizon $T\in\N$.
Similarly to Section \ref{sec:1p}, we consider $d$ risky assets and a riskless asset with constant price equal to one. The discounted prices of the $d$ risky assets are represented by the $d$-dimensional adapted process $S=(S_t)_{t=0,1,\ldots,T}$. 
For each $i=1,\ldots,d$, we assume that
\[
S^i_t = S^i_{t-1}(1+R^i_t),
\qquad\text{ for all }t=1,\ldots,T,
\]
where each random variable $R^i_t$ is $\cF_t$-measurable, satisfies $R^i_t\geq-1$ a.s. and represents the return of asset $i$ on the period $[t-1,t]$. 
For each $t=1,\ldots,T$, we denote by $\cS_t$ the $\cF_{t-1}$-conditional support of the random vector $R_t=(R^1_t,\ldots,R^d_t)^{\top}$ (i.e., the support of a regular version of the $\cF_{t-1}$-conditional distribution of $R_t$, see \cite[Definition 2.2]{BCL19}). We also denote by $\cL_t$ the smallest linear subspace of $\R^d$ containing $\cS_t$ and by $\cL_t^{\perp}$ its orthogonal complement. 
Conditional expectations are to be understood in the generalized sense (see, e.g., \cite[Section 1.4]{HWY}).

A set-valued process $A=(A_t)_{t=1,\ldots,T}$ is said to be {\em predictable} if, for each $t=1,\ldots,T$, the correspondence (set-valued mapping) $A_t$ from $\Omega$ to $\R^d$ is $\cF_{t-1}$-measurable.\footnote{We recall that a correspondence $A_t$ from $\Omega$ to $\R^d$ is $\cF_{t-1}$-measurable if, for every open subset $G\subset\R^d$, it holds that $\{\omega\in\Omega : A_t(\omega)\cap G\neq\emptyset\}\in\cF_{t-1}$, see \cite[Definition 14.1]{RW}.}
The processes $\cS=(\cS_t)_{t=1,\ldots,T}$, $\cL=(\cL_t)_{t=1,\ldots,T}$ and $\cL^{\perp}=(\cL_t^{\perp})_{t=1,\ldots,T}$ are all predictable (see \cite[Lemma 2.4]{BCL19} and \cite[Exercise 14.12-(d)]{RW}).
For each $t=1,\ldots,T$, the orthogonal projection of a vector $x\in\R^d$ on $\cL_t$ is denoted by ${\rm p}_{\cL_t}(x)$ and it is $\cF_{t-1}$-measurable (see \cite[Exercise 14.17]{RW}).

We describe trading strategies via predictable processes $\pi=(\pi_t)_{t=1,\ldots,T}$, with $\pi_t=(\pi^1_t,\ldots,\pi^d_t)^{\top}$ representing fractions of wealth held in the $d$ risky assets between time $t-1$ and time $t$. 
We denote by $V^{\pi}_t(v)$ the wealth at time $t$ generated by strategy $\pi$ starting from capital $v>0$, with
\[
V^{\pi}_0(v) = v
\qquad\text{ and }\qquad
V^{\pi}_t(v) = v\prod_{k=1}^t(1+\langle\pi_k,R_k\rangle),
\quad\text{for }t=1,\ldots,T.
\]
As in Section \ref{sec:constraints}, we define $V^{\pi}_t:=V^{\pi}_t(1)$.
A predictable strategy $\pi$ is said to be {\em admissible} if $V^{\pi}_t\geq0$ a.s., for all $t=1,\ldots,T$. Equivalently, introducing the random set
\be	\label{eq:adm_constraints}
\strat_{{\rm adm},t} 
:= \{\pi\in\R^d : \langle\pi,z\rangle\geq-1\text{ for all }z\in\cS_t\},
\qquad \text{ for }t=1,\ldots,T,
\ee
a predictable strategy $\pi$ is admissible if and only if $\pi_t\in\strat_{{\rm adm},t}$ holds a.s. for all $t=1,\ldots,T$. 
Note that, for every $(\omega,t)\in\Omega\times\{1,\ldots,T\}$, the set $\strat_{{\rm adm},t}(\omega)$ is a non-empty, closed and convex subset of $\R^d$. 
Arguing similarly as in \cite[Exercise 14.12-(e)]{RW}, it can be shown that the predictability of $\cS$ implies that the set-valued process $\strat_{\rm adm}=(\strat_{{\rm adm},t})_{t=1,\ldots,T}$ is predictable.

Trading constraints are modelled through a set-valued predictable process $\strat_{\rm c}=(\strat_{{\rm c},t})_{t=1,\ldots,T}$ such that $\strat_{{\rm c},t}(\omega)$ is a convex closed subset of $\R^d$, for all $(\omega,t)\in\Omega\times\{1,\ldots,T\}$.
Similarly as in Section \ref{sec:constraints}, we assume that $\cL^{\perp}_t(\omega)\subset\strat_{{\rm c},t}(\omega)$, for all $(\omega,t)\in\Omega\times\{1,\ldots,T\}$.
The family of {\em allowed strategies} is given by all $\R^d$-valued predictable processes $\pi=(\pi_t)_{t=1,\ldots,T}$ such that $\pi_t$ belongs a.s. to $\strat_t:=\strat_{{\rm adm},t}\cap\strat_{{\rm c},t}$, for all $t=1,\ldots,T$. 
Note that, as a consequence of \cite[Proposition 14.11]{RW}, the set-valued process $\strat=(\strat_t)_{t=1,\ldots,T}$ is predictable.
For brevity of notation, we shall simply write $\pi\in\strat$ to denote that a trading strategy $\pi$ is allowed.
For each $(\omega,t)\in\Omega\times\{1,\ldots,T\}$, the set $\strathat_t(\omega)$ is defined as the recession cone of $\strat_t(\omega)$. The set-valued process $\strathat=(\strathat_t)_{t=1,\ldots,T}$ is predictable, as a consequence of the predictability of $\strat$ together with \cite[Exercise 14.21]{RW}, and admits the same financial interpretation as the recession cone $\strathat$ introduced in a single-period setting  in Section \ref{sec:constraints}.

\begin{rem}	\label{rem:random_constraints}
Trading constraints evolving {\em randomly} over time arise naturally as a consequence of the admissibility requirement and are not  purely motivated by mathematical generality, as pointed out also in \cite{KaratzasKardaras07}. 
Indeed, admissibility requires that, for each $t=1\ldots,T$, the strategy $\pi_t$ is chosen at time $t-1$ in such a way that $\langle\pi_t,R_t\rangle\geq-1$ a.s., conditionally on the information available up to time $t-1$. 
Therefore, as becomes apparent from \eqref{eq:adm_constraints}, the randomness of $\strat_{{\rm adm},t}$ is due to the fact that the $\cF_{t-1}$-conditional support $\cS_t$ of $R_t$ and, therefore, the set of admissible strategies $\pi_t$ may depend on the realizations of the asset returns $(R_1,\ldots,R_{t-1})$.
Consequently, even in the presence of deterministic trading constraints $\strat_{\rm c}$, the set-valued process $\strat$ of allowed strategies is a deterministic process only in the special case where the asset returns $(R_t)_{t=1,\ldots,T}$ form a sequence of serially independent random vectors.
\end{rem}

\subsection{Arbitrage concepts}

An allowed strategy $\pi\in\strat$ is said to be an {\em arbitrage opportunity} if 
\be	\label{eq:arb_multi_per}
P(V^{\pi}_T\geq 1)=1
\qquad\text{ and }\qquad
P(V^{\pi}_T>1)>0.
\ee
We say that {\em no classical arbitrage} holds if there does not exist a strategy $\pi\in\strat$ satisfying \eqref{eq:arb_multi_per}. 
For $t=1,\ldots,T$, we denote by $L^0_+(\cF_t)$  the family of non-negative $\cF_t$-measurable random variables.
Definition \ref{def:NA1} can be naturally extended to a multi-period setting as follows.

\begin{defn}	\label{def:NA1_mp}
A random variable $\xi\in L^0_+(\cF_T)$ with $P(\xi>0)>0$ is said to be an {\em arbitrage of the first kind} if $v(\xi)=0$, where
$
v(\xi) := \inf\{v>0 : \exists\;\pi\in\strat \text{ such that }V^{\pi}_T(v)\geq\xi\text{ a.s.}\}
$.\\
{\em No arbitrage of the first kind} (NA$_1$) holds if, for every $\xi\in L^0_+(\cF_T)$, $v(\xi)=0$ implies $\xi=0$ a.s.
\end{defn}

As a preliminary to the statement of the next proposition, we define, for each $t=1,\ldots,T$,
\[
\cI_{{\rm arb},t} := \{\pi\in\R^d : \langle\pi,z\rangle\geq0\text{ for all }z\in\cS_t\}\setminus\cL_t^{\perp}.
\]
By \cite[Exercise 14.12-(e)]{RW}, the random set $\cI_{{\rm arb},t}$ is $\cF_{t-1}$-measurable, for all $t=1,\ldots,T$. 
For a random variable $\zeta\in L^0_+(\cF_t)$, we define its super-hedging value at time $t-1$ by
\[
{\rm v}_{t-1}(\zeta) := \essinf\bigl\{x\in L^0_+(\cF_{t-1}) : \exists\;h\in L^0(\cF_{t-1};\strat_t) \text{ such that }x(1+\langle h,R_t\rangle)\geq\zeta\text{ a.s.}\bigr\},
\]
where $L^0(\cF_{t-1};\strat_t)$ denotes the family of $\cF_{t-1}$-measurable random vectors $h:\Omega\rightarrow\R^d$ 
such that $P(h\in\strat_t)=1$.

For the usual concept of no classical arbitrage, it is well-known that absence of arbitrage  in a multi-period setting is equivalent to  absence of arbitrage opportunities in each single trading period (see, e.g., \cite[Proposition 5.11]{FollmerSchied}). In the next proposition, we prove that an analogous property holds for NA$_1$ and we also provide several equivalent characterizations.

\begin{prop}	\label{prop:NA1_mp}
The following are equivalent:
\begin{enumerate}[(i)]
\item the {\rm NA$_1$} condition holds;
\item there does not exist a strategy $\pi\in\strathat$ satisfying \eqref{eq:arb_multi_per};
\item for every $t=1,\ldots,T$ and $\zeta\in L^0_+(\cF_t)$, ${\rm v}_{t-1}(\zeta)=0$ a.s. implies $\zeta=0$ a.s.;
\item $\cI_{{\rm arb},t}\cap\strathat_t=\emptyset$ a.s., for all $t=1,\ldots,T$;
\item $\strathat_t=\cL^{\perp}_t$ a.s., for all $t=1,\ldots,T$;
\item the set $\strat_t\cap\cL_t$ is a.s. bounded (and, hence, compact), for all $t=1,\ldots,T$.
\end{enumerate}
\end{prop}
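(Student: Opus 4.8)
The strategy is to establish the chain of implications in a cycle that exploits the one-period results already proven (Proposition \ref{prop:NA1}) together with a measurable-selection argument to pass between the global condition and the single-period conditions. I would organize it as $(i)\Rightarrow(ii)\Rightarrow(iii)\Rightarrow(iv)\Rightarrow(v)\Rightarrow(vi)\Rightarrow(i)$, or some convenient permutation. Several of the arrows are purely pointwise and follow from Proposition \ref{prop:NA1} applied $\omega$ by $\omega$: the equivalences $(iv)\Leftrightarrow(v)\Leftrightarrow(vi)$ are just the one-period equivalences $\cI_{\rm arb}\cap\strathat=\emptyset \Leftrightarrow \strathat=\cL^{\perp} \Leftrightarrow \strat\cap\cL$ bounded, read off for each fixed $(\omega,t)$, with the only extra care needed being that the relevant sets depend $\cF_{t-1}$-measurably on $\omega$ (which has already been recorded in the setup: $\cS_t$, $\cL_t$, $\cL_t^\perp$, $\strat_t$, $\strathat_t$, $\cI_{{\rm arb},t}$ are all $\cF_{t-1}$-measurable correspondences), so that "a.s." statements are meaningful and the exceptional null sets can be collected across the finitely many times $t=1,\dots,T$.

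The genuinely multi-period content is in relating the global NA$_1$ (conditions $(i)$, $(ii)$, $(iii)$) to the single-period conditions $(iv)$--$(vi)$. For $(i)\Rightarrow(ii)$, I would argue by contraposition: if $\pi\in\strathat$ satisfies \eqref{eq:arb_multi_per}, then $\xi:=V^\pi_T$ has $P(\xi\ge1)=1$ and $P(\xi>1)>0$, and for any $v>0$ the scaled strategy (using that $\strathat_t$ is a cone, so $\pi/v\in\strat$, and that terminal wealth is multilinear in the per-period positions along a recession direction) super-replicates $\xi-1\in L^0_+(\cF_T)$, which is non-null, starting from capital $v$; hence $v(\xi-1)=0$, contradicting NA$_1$. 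The implication $(ii)\Rightarrow(iii)$ and the reverse route back to $(i)$ should be handled by a one-period-at-a-time reduction: $(vi)\Rightarrow(iii)$ follows because ${\rm v}_{t-1}(\zeta)=0$ a.s. together with compactness of $\strat_t\cap\cL_t$ forces $\zeta=0$ by exactly the $(iv)\Rightarrow(i)$ argument of Proposition \ref{prop:NA1} applied conditionally (projecting the selected positions onto $\cL_t$, scaling, and using boundedness); and $(iii)\Rightarrow(i)$ follows by backward induction on $t$, writing $v(\xi)$ as an iterated super-hedging value $v(\xi)={\rm v}_0({\rm v}_1(\cdots{\rm v}_{T-1}(\xi)))$ and showing that if the outermost value is $0$ then, peeling off one period at a time, each intermediate $\cF_t$-measurable super-hedging value must vanish, forcing $\xi=0$.

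The step I expect to be the main obstacle is the dynamic-programming decomposition of the global super-hedging value into the composition of the one-period operators ${\rm v}_{t-1}$, i.e., rigorously justifying $v(\xi)=0 \iff {\rm v}_{T-1}(\xi)=0$ a.s. and then iterating. The subtlety is measurable selection: given an allowed global strategy $\pi$ with $V^\pi_T(v)\ge\xi$, one needs to extract, $\cF_{T-1}$-measurably, the "last-period" position $\pi_T$ and interpret the requirement as $V^\pi_{T-1}(v)\cdot(1+\langle\pi_T,R_T\rangle)\ge\xi$, which says $V^\pi_{T-1}(v)\ge{\rm v}_{T-1}(\xi)$ a.s.; conversely, from a family of $\cF_{t-1}$-measurable near-optimal one-period hedges one must paste together a predictable global strategy, which requires invoking a measurable selection theorem (the paper's toolkit of \cite{RW}, Chapter 14, is presumably what is intended). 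Care is also needed because the one-period operator ${\rm v}_{t-1}$ is defined with an essential infimum and the infimum in the definition of $v(\xi)$ may not be attained; I would handle this by working with $\varepsilon$-optimal strategies, passing to the limit, and using the compactness of $\strat_t\cap\cL_t$ (available once we are in the NA$_1$ regime) to secure attainment where needed. Everything else — the pointwise equivalences among $(iv)$, $(v)$, $(vi)$, and the contrapositive argument for $(i)\Rightarrow(ii)$ — is routine given Proposition \ref{prop:NA1}.
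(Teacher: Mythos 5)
There is a genuine gap: the implications you actually describe do not close the equivalence. What you establish is $(iv)\Leftrightarrow(v)\Leftrightarrow(vi)$ (pointwise), $(i)\Rightarrow(ii)$ (your scaling argument along a recession direction is correct, since $\pi_t\in\strathat_t$ forces $\langle\pi_t,R_t\rangle\geq0$ a.s.\ and $v\prod_t(1+\langle\pi_t,R_t\rangle/v)\geq v-1+V^{\pi}_T$ for $v\leq1$), $(vi)\Rightarrow(iii)$, and $(iii)\Rightarrow(i)$. This yields $(iv)\Leftrightarrow(v)\Leftrightarrow(vi)\Rightarrow(iii)\Rightarrow(i)\Rightarrow(ii)$, but no arrow ever leads from any of the global conditions $(i)$, $(ii)$, $(iii)$ back into the local conditions $(iv)$--$(vi)$, and $(ii)$ is a sink with no outgoing implication. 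The missing arrow (the paper's $(iii)\Rightarrow(iv)$, together with $(ii)\Leftrightarrow(iv)$) is precisely the non-routine localization step: given that $\cI_{{\rm arb},t}\cap\strathat_t\neq\emptyset$ on a non-null $\cF_{t-1}$-measurable event, one must produce a \emph{predictable} one-period strategy realizing this arbitrage. This requires a measurable selection from a closed-valued measurable correspondence, and since $\cI_{{\rm arb},t}$ itself is not closed-valued, the paper first truncates to the sets $\cI^n_{{\rm arb},t}$ (following \cite{KaratzasKardaras07}) before applying \cite[Corollary 14.6]{RW}. You invoke measurable selection only for the dynamic-programming decomposition of $v(\xi)$, not here, so the key construction is absent and the cycle does not close.

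Two secondary remarks. First, your $(iii)\Rightarrow(i)$ via the iterated super-hedging value is workable but heavier than necessary: for that direction one only needs the easy inequality ${\rm v}_0({\rm v}_1(\cdots{\rm v}_{T-1}(\xi)))\leq v(\xi)$, obtained by reading off the intermediate wealths $V^{\pi^n}_{t}(1/n)$ of any global super-hedge as feasible one-period initial values; no measurable selection or attainment is needed there, so the obstacle you flag is attached to the wrong step. The paper avoids this altogether by proving $(i)\Rightarrow(iii)$ trivially (extend a one-period hedge by zero to a global strategy) and $(vi)\Rightarrow(i)$ directly (project the positions onto $\cL_t$ and use the a.s.\ boundedness of $\strat_t\cap\cL_t$), which is the cleaner route. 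Second, your one-period argument for $(vi)\Rightarrow(iii)$ tacitly extracts a sequence of $\cF_{t-1}$-measurable hedges from the essential infimum being zero; this needs the downward-directedness of the relevant family (pasting hedges on $\cF_{t-1}$-measurable events), which should be stated.
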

\begin{proof}
$(i)\Rightarrow(iii)$: by way of contradiction, assume that NA$_1$ holds and suppose that, for some $t=1,\ldots,T$, there exists $\zeta\in L^0_+(\cF_t)$ such that ${\rm v}_{t-1}(\zeta)=0$ a.s. and $P(\zeta>0)>0$. In this case, for every $v>0$, one can find  $h\in L^0(\cF_{t-1};\strat_t)$ such that $v(1+\langle h,R_t\rangle)\geq\zeta$ a.s.
Define then the strategy $\pi=(\pi_s)_{s=1,\ldots,T}$ by $\pi_s:=h$ if $s=t$ and $\pi_s:=0$ otherwise. With this definition, it holds that $\pi\in\strat$ and  $V^{\pi}_T(v)=v(1+\langle h,R_t\rangle)\geq\zeta$ a.s., contradicting the validity of NA$_1$.\\
$(iii)\Rightarrow(iv)$: we adapt to the present setting the arguments of \cite[Section 5]{KaratzasKardaras07}.
By way of contradiction, assume that $(iii)$ holds and let $P(\cI_{{\rm arb},t}\cap\strathat_t\neq\emptyset)>0$, for some $t=1,\ldots,T$. For each $n\in\N$, define the $\cF_{t-1}$-measurable random set
\[
\cI^n_{{\rm arb},t} :=
\left\{\pi\in\R^d : \langle\pi,z\rangle\geq0 \text{ for all $z\in\cS_t$ and }\EE\left[\frac{\langle\pi,R_t\rangle}{1+\langle\pi,R_t\rangle}\bigg|\cF_{t-1}\right]\geq 1/n\right\} \subset \cI_{{\rm arb},t}.
\]
We have that $\cI_{{\rm arb},t}\cap\strathat_t\neq\emptyset$ if and only if $\cI^n_{{\rm arb},t}\cap\strathat_t\neq\emptyset$ for all large enough $n\in\N$ (see \cite[Lemma 5.1]{KaratzasKardaras07}). Hence, there exists a sufficiently large $n\in\N$ such that $P(\cI^n_{{\rm arb},t}\cap\strathat_t\neq\emptyset)>0$.
It can be easily checked that the set $\cI^n_{{\rm arb},t}(\omega)\cap\strathat_t(\omega)$ is closed and convex, for all $\omega\in\Omega$. Therefore, by \cite[Corollary 14.6]{RW}, there exists an $\cF_{t-1}$-measurable random vector $\pi^n_t:\Omega\rightarrow\R^d$ such that $\pi^n_t(\omega)\in\cI^n_{{\rm arb},t}(\omega)\cap\strathat_t(\omega)$ when $\cI^n_{{\rm arb},t}(\omega)\cap\strathat_t(\omega)\neq\emptyset$ and $\pi^n_t(\omega)=0$ when $\cI^n_{{\rm arb},t}(\omega)\cap\strathat_t(\omega)=\emptyset$. 
The random variable $\zeta:=\langle\pi^n_t,R_t\rangle$ belongs to $L^0_+(\cF_t)$ and satisfies $P(\zeta>0)>0$. Moreover, since $\pi^n_t\in\strathat_t$ a.s., it holds that $\pi^n_t/v\in\strat_t$ a.s., for all $v>0$. Noting that $v(1+\langle\pi^n_t/v,R_t\rangle)>\zeta$ a.s., this implies that ${\rm v}_{t-1}(\zeta)=0$ a.s., thus contradicting property $(iii)$.\\
$(ii)\Leftrightarrow(iv)$: this equivalence follows by the same arguments used in \cite[Proposition 5.11]{FollmerSchied}, together with the construction of $\pi^n_t$ performed in the previous step of the proof.\\
$(iv)\Rightarrow(v)\Rightarrow(vi)$: these implications can be proved  as in Proposition \ref{prop:NA1}. \\
$(vi)\Rightarrow(i)$: by way of contradiction, let $\xi\in L^0_+(\cF_T)$ with $P(\xi>0)>0$ and suppose that, for all $n\in\N$, there exists an allowed strategy $\pi^n\in\strat$ such that $V^{\pi^n}_T(1/n)\geq\xi$ a.s. Then, it holds that $1+\prod_{t=1}^T\langle{\rm p}_{\cL_t}(\pi^n_t),R_t\rangle\geq n\xi$ a.s., for all $n\in\N$. 
Similarly as in the proof of Proposition \ref{prop:NA1}, the fact that $P(\xi>0)>0$ contradicts the a.s. boundedness of the sets $\strat_t\cap\cL_t$, for $t=1,\ldots,T$.
\end{proof}

Proposition \ref{prop:NA1_mp} shows that, in a multi-period setting, NA$_1$ is equivalent to the absence of arbitrarily scalable arbitrage opportunities (property $(ii)$) as well as to the absence of arbitrage of the first kind in each single trading period (property $(iii)$). Properties $(iv)$--$(vi)$ can be interpreted similarly to the analogous properties discussed in Section \ref{sec:arbitrage_concepts}.
Note also that NA$_1$ is equivalent to no classical arbitrage if the constraint process $\strat_{{\rm c}}$ is cone-valued (see Remark \ref{rem:compare_arb}).

\begin{rem}	\label{rem:bound_meas}
Property $(vi)$ in Proposition \ref{prop:NA1_mp} implies that, for each $t=1,\ldots,T$, there exists an $\cF_{t-1}$-measurable random variable $H_t$ such that $\|\pi\|\leq H_t$ a.s., for all $\pi\in L^0(\cF_{t-1};\strat_t\cap\cL_t)$.
The $\cF_{t-1}$-measurability of $H_t$ follows from the closedness and $\cF_{t-1}$-measurability of $\strat_t\cap\cL_t$.
\end{rem}

\subsection{Market viability and fundamental theorems}

We proceed to characterize NA$_1$ in terms of the solvability of portfolio optimization problems, extending Theorem \ref{thm:utility} to the multi-period setting. 
In view of Proposition \ref{prop:NA1_mp}, the NA$_1$ condition admits a local description. 
By employing a dynamic programming approach, this allows reducing a portfolio optimization problem to a sequence of one-period problems, to which we can apply techniques analogous to those used in the proof of Theorem \ref{thm:utility}.
This approach is inspired by \cite{RasonyStettner06}, where the implication $(i)\Rightarrow(ii)$ of the following theorem has been proved under no classical arbitrage for an unconstrained market.
In comparison to \cite{RasonyStettner06}, we allow for convex trading constraints and base our analysis on the minimal NA$_1$ condition.
Similarly as in Section \ref{sec:ftaps}, we denote by $\cU$ the set of all functions $U:\Omega\times\R_+\rightarrow\R\cup\{-\infty\}$ such that $U(\cdot,x)$ is $\cF_T$-measurable and bounded from below, for every $x>0$, and $U(\omega,\cdot)$ is continuous, strictly increasing and concave, for a.e. $\omega\in\Omega$. 

\begin{thm}	\label{thm:utility_mp}
The following are equivalent:
\begin{enumerate}[(i)]
\item the {\em NA}$_1$ condition holds;
\item for every $U\in\cU$ such that $\sup_{\pi\in\strat}\EE[U^+(V^{\pi}_T)]<+\infty$, there exists an allowed strategy $\pi^*\in\strat\cap\cL$ such that
\[
\EE\bigl[U(V^{\pi^*}_T)\bigr] 
= \underset{\pi\in\strat}{\sup}\,\EE\bigl[U(V^{\pi}_T)\bigr].
\]
\end{enumerate}
\end{thm}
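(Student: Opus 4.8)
The plan is to reduce the multi-period problem to a backward sequence of one-period problems by dynamic programming, treating each one-period problem with the techniques from the proof of Theorem~\ref{thm:utility}; the point that makes this work is that, under NA$_1$, property $(vi)$ in Proposition~\ref{prop:NA1_mp} guarantees that each set $\strat_t\cap\cL_t$ is a.s.\ compact. For the implication $(i)\Rightarrow(ii)$, set $U_T:=U$ and define recursively, for $t=T,\dots,1$,
\[
U_{t-1}(\omega,x) := \esssup_{h\in L^0(\cF_{t-1};\strat_t)}\EE\bigl[U_t\bigl(x(1+\langle h,R_t\rangle)\bigr)\,\big|\,\cF_{t-1}\bigr](\omega),\qquad x>0,
\]
with $U_{t-1}(\omega,0):=\lim_{x\downarrow0}U_{t-1}(\omega,x)$. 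The core of the argument is a backward induction showing that, for $t=T,\dots,0$, one has $U_t\in\cU$ and $\sup_{\pi\in\strat}\EE[U_t^+(V^{\pi}_t)]<+\infty$; the base case $t=T$ is the standing hypothesis on $U$.

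In the inductive step one checks the four defining properties of a random utility function for $U_{t-1}$. Measurability of $(\omega,x)\mapsto U_{t-1}(\omega,x)$ and attainment of the conditional essential supremum follow because $(\omega,h)\mapsto\EE[U_t(x(1+\langle h,R_t\rangle))\mid\cF_{t-1}]$ is an $\cF_{t-1}$-normal integrand and $\strat_t$ is an $\cF_{t-1}$-measurable correspondence, via the measurable selection results of \cite[Chapter~14]{RW}. Monotonicity is inherited from $U_t$; boundedness from below of $U_{t-1}(\cdot,x)$ follows from the admissible choice $h=0$, giving $U_{t-1}(\cdot,x)\ge\EE[U_t(\cdot,x)\mid\cF_{t-1}]$; and concavity in $x$ follows from the usual argument (for $x=\lambda x_1+(1-\lambda)x_2$ and near-optimal $h_1,h_2\in\strat_t$, the vector $h:=(\lambda x_1 h_1+(1-\lambda)x_2 h_2)/x$ is a convex combination of $h_1,h_2$, hence lies in $\strat_t$, and satisfies $x(1+\langle h,R_t\rangle)=\lambda x_1(1+\langle h_1,R_t\rangle)+(1-\lambda)x_2(1+\langle h_2,R_t\rangle)$). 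Finiteness of $U_{t-1}(\cdot,x)$ a.s., hence continuity in $x$, is where compactness enters: reducing to $h\in\strat_t\cap\cL_t$ via the conditional projection and arguing exactly as in the proof of Theorem~\ref{thm:utility}, one dominates $U_t^+(x(1+\langle h,R_t\rangle))$ by a sum of $U_t^+$ at the finitely many extreme points of a bounded polyhedral superset of $\strat_t\cap\cL_t$, and applies inequality~\eqref{eq:ineq} together with a relative interior point of $\strat_t\cap\cL_t$ to pass from the induction hypothesis $\sup_{\pi}\EE[U_t^+(V^{\pi}_t)]<+\infty$ to $\cF_{t-1}$-conditional integrability. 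With the recursion in place, the one-period argument of Theorem~\ref{thm:utility}, applied $\omega$-by-$\omega$, yields an $\cF_{t-1}$-measurable maximizer $\pi^*_t$ with values in $\strat_t\cap\cL_t$ (measurable selection via \cite[Corollary~14.6]{RW}); concatenating $\pi^*:=(\pi^*_1,\dots,\pi^*_T)\in\strat\cap\cL$ and unwinding the recursion with the tower property gives $\EE[U(V^{\pi^*}_T)]=U_0(1)=\sup_{\pi\in\strat}\EE[U(V^{\pi}_T)]$.

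For $(ii)\Rightarrow(i)$ I would argue as in Theorem~\ref{thm:utility}: if NA$_1$ fails, Proposition~\ref{prop:NA1_mp} yields a period $t$ with $P(\cI_{{\rm arb},t}\cap\strathat_t\neq\emptyset)>0$, and the measurable selection of the proof of that proposition (together with \cite[Lemma~5.1]{KaratzasKardaras07}, to guarantee a strictly positive conditional gain on a set of positive probability) produces, upon setting $\theta_s:=0$ for $s\neq t$, a strategy $\theta\in\strathat$ satisfying \eqref{eq:arb_multi_per}. For any well-posed $U\in\cU$ with maximizer $\pi^*$ as in $(ii)$, the allowed strategy $\pi^*+\theta$ then satisfies $\EE[U(V^{\pi^*+\theta}_T)]>\EE[U(V^{\pi^*}_T)]$ by strict monotonicity of $U$, contradicting optimality.

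I expect the main obstacle to be the measurability bookkeeping rather than anything conceptual: one must verify carefully that each value function $U_t$ genuinely belongs to $\cU$ — in particular that it is finite a.s., which is exactly where the backward propagation of the finiteness condition and the compactness of $\strat_t\cap\cL_t$ interlock — that the conditional-expectation functionals are normal integrands, and that the one-period maximizers can be chosen $\cF_{t-1}$-measurably so as to concatenate into a bona fide predictable allowed strategy. Checking that the recursion indeed computes the value of the multi-period problem, while routine, also needs care since the $U_t$ are only known to be finite almost surely.
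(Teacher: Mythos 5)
Your proposal follows essentially the same route as the paper: a backward dynamic-programming recursion defining value functions $U_t$, an inductive verification that each $U_t\in\cU$ with the finiteness condition propagating backward, the crucial use of the a.s.\ compactness of $\strat_t\cap\cL_t$ from Proposition \ref{prop:NA1_mp} to dominate $U_t^+$ via the polyhedral/extreme-point argument of Theorem \ref{thm:utility}, measurable selection of the one-period optimizers, and the same reverse implication via a scalable arbitrage added to the optimizer. The only differences are cosmetic (you attain the essential supremum via normal integrands and \cite[Corollary 14.6]{RW}, whereas the paper uses a directed-upward maximizing sequence plus compactness and Fatou; and the joint $\cF_{t-1}\otimes\cB(\R_+)$-measurability of $\hat{\pi}_t(x)$ needed for concatenation, which you flag as a point of care, is handled in the paper by citing \cite[Lemma 2.5]{RasonyStettner06}).
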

\begin{proof}
$(i)\Rightarrow(ii)$: 
suppose that NA$_1$ holds and let $U\in\cU$ be such that $\sup_{\pi\in\strat}\EE[U^+(V^{\pi}_T)]<+\infty$. 
Since $U\in\cU$, it holds that $\sup_{\pi\in\strat}\EE[U^+(xV^{\pi}_T)]<+\infty$ for all $x\geq0$.
The existence of an optimal strategy $\pi^*\in\strat\cap\cL$ will be shown in a constructive way by applying dynamic programming.
For all $(\omega,x)\in\Omega\times\R_+$, define $U_T(\omega,x):=U(\omega,x)$ and, for $t=0,1,\ldots,T-1$,
\be	\label{eq:dyn_prog}
U_t(\omega,x) := \underset{\pi_{t+1}\in L^0(\cF_t;\strat_{t+1}\cap\cL_{t+1})}{\esssup}\EE\left[U_{t+1}\bigl(\omega,x(1+\langle\pi_{t+1},R_{t+1}(\omega)\rangle)\bigr)\big|\cF_t\right](\omega),
\ee
taking a regular version of the conditional expectation (the existence of the conditional expectation will follow from the proof below).\footnote{In the following, for simplicity of notation, we shall omit to denote explicitly the dependence on $\omega$ in $U_t(\omega,x)$.}
Proceeding by backward induction, let $t<T$ and suppose that $U_{t+1}\in\cU$ and
\be	\label{eq:ass_mp_finite}
\sup_{\pi_{t+1}\in L^0(\cF_t;\strat_{t+1}\cap\cL_{t+1})}\EE\left[U^+_{t+1}\bigl(x(1+\langle\pi_{t+1},R_{t+1}\rangle)\bigr)\right] < +\infty,
\qquad\text{ for all }x\geq0.
\ee
These hypotheses are satisfied by assumption for $t=T-1$ and will be shown inductively for all $t<T-1$.
Since the family $\{\EE[U_{t+1}(x(1+\langle\pi_{t+1},R_{t+1}\rangle))|\cF_t];\pi_{t+1}\in L^0(\cF_t;\strat_{t+1}\cap\cL_{t+1})\}$ is directed upward, for all $x>0$ there exists a sequence $(\pi_{t+1}^n(x))_{n\in\N}$ with values in $\strat_{t+1}\cap\cL_{t+1}$ such that
\be	\label{eq:sequence_esssup}
\lim_{n\rightarrow+\infty}\EE\left[U_{t+1}\bigl(x(1+\langle\pi^n_{t+1}(x),R_{t+1}\rangle)\bigr)\big|\cF_t\right] = U_t(x)
\text{ a.s.} 
\ee
As a consequence of NA$_1$, the set $\strat_{t+1}\cap\cL_{t+1}$ is closed and a.s. bounded (see Proposition \ref{prop:NA1_mp}). Therefore, by \cite[Lemma 1.64]{FollmerSchied}, there exists a subsequence $(\pi^{n_k}_{t+1}(x))_{k\in\N}$ converging a.s. to an element $\hat{\pi}_{t+1}(x)\in L^0(\cF_t;\strat_{t+1}\cap\cL_{t+1})$.
By the same arguments used in the proof of the implication $(i)\Rightarrow(ii)$ of Theorem \ref{thm:utility} (but carried out conditionally on $\cF_t$, see also \cite[Lemma 2.3]{RasonyStettner06}), the boundedness of $\strat_{t+1}\cap\cL_{t+1}$ (see Remark \ref{rem:bound_meas}), the properties of $U_{t+1}$ and \eqref{eq:ass_mp_finite} together imply the existence of an $\cF_{t+1}$-measurable integrable random variable $\zeta_{t+1}$ such that
\be	\label{eq:upperbound_mp}
U^+_{t+1}\bigl(x(1+\langle\pi_{t+1},R_{t+1}\rangle)\bigr) \leq \zeta_{t+1},
\qquad\text{ for all }\pi_{t+1}\in L^0(\cF_t;\strat_{t+1}\cap\cL_{t+1}).
\ee
Therefore, an application of Fatou's lemma, together with the continuity of $U_{t+1}$, yields that
\begin{align*}
\limsup_{k\rightarrow+\infty}\EE\bigl[U_{t+1}\bigl(x(1+\langle\pi^{n_k}_{t+1}(x),R_{t+1}\rangle)\bigr)\big|\cF_t\bigr]
&\leq \EE\Bigl[\limsup_{k\rightarrow+\infty}U_{t+1}\bigl(x(1+\langle\pi^{n_k}_{t+1}(x),R_{t+1}\rangle)\bigr)\Big|\cF_t\Bigr]	\\
&= \EE\left[U_{t+1}\bigl(x(1+\langle\hat{\pi}_{t+1}(x),R_{t+1}\rangle)\bigr)\big|\cF_t\right].
\end{align*}
Together with \eqref{eq:sequence_esssup}, this shows that 
\be	\label{eq:optimizer_mp}
U_t(x)=\EE\bigl[U_{t+1}\bigl(x(1+\langle\hat{\pi}_{t+1}(x),R_{t+1}\rangle)\bigr)\big|\cF_t\bigr].
\ee
Condition \eqref{eq:ass_mp_finite} implies that $U_t(x)<+\infty$ a.s., for all $x\geq0$, thus proving the well-posedness of \eqref{eq:dyn_prog}.
Moreover, the same arguments employed in \cite[Lemma 2.5]{RasonyStettner06} allow to show that the optimizer $\hat{\pi}_{t+1}(x)$ can be chosen $\cF_t\otimes\cB(\R_+)$-measurable.\footnote{While \cite{RasonyStettner06} work under no classical  arbitrage and do not consider trading constraints, an inspection of the proof of their Lemma 2.5 shows that only the a.s. boundedness of the set of allowed strategies is needed. In our context, the latter property holds under NA$_1$ as a consequence of Proposition \ref{prop:NA1_mp}.}
Since the set $\strat_{t+1}\cap\cL_{t+1}$ is convex and we assumed that $U_{t+1}\in\cU$, the function $U_t(\omega,\cdot)$ inherits the strict increasingness and concavity of $U_{t+1}(\omega,\cdot)$, for a.e. $\omega\in\Omega$.
Furthermore, $U_t(x)\geq\EE[U_{t+1}(x)|\cF_t]$ and, therefore, $U_t(x)$ is a.s. bounded from below, for every $x>0$.
In particular, this implies that $U_t(x)$ is a.s. finite valued for all $x>0$ and, by concavity, continuous on $(0,+\infty)$. 
To prove continuity at $x=0$, note that $U_t(0)\leq\liminf_{n\rightarrow+\infty}U_t(1/n)$. On the other hand, using \eqref{eq:optimizer_mp}, it holds that
\begin{align*}
\limsup_{n\rightarrow+\infty}U_t(1/n)
&= \limsup_{n\rightarrow+\infty}\EE\bigl[U_{t+1}\bigl((1/n)(1+\langle\hat{\pi}_{t+1}(1/n),R_{t+1}\rangle)\bigr)\big|\cF_t\bigr]	\\
&\leq \EE\Bigl[\limsup_{n\rightarrow+\infty}U_{t+1}\bigl((1/n)(1+\langle\hat{\pi}_{t+1}(1/n),R_{t+1}\rangle)\bigr)\Big|\cF_t\Bigr]
= \EE[U_{t+1}(0)|\cF_t] = U_t(0),
\end{align*}
where, similarly as above, the inequality follows from Fatou's lemma using \eqref{eq:upperbound_mp} and the second equality follows from the continuity of $U_{t+1}$ together with the a.s. boundedness of $\strat_{t+1}\cap\cL_{t+1}$.
We have thus shown that $U_t\in\cU$. 
To complete the proof of the inductive hypothesis, it remains to show that \eqref{eq:ass_mp_finite} holds true for each $t<T-1$.
For every $x>0$ and $\pi_t\in L^0(\cF_{t-1};\strat_t\cap\cL_t)$, using repeatedly \eqref{eq:optimizer_mp} and iterated conditioning, we have that
\be	\label{eq:proof_ass_mp_finite}
\EE\bigl[U_t^+\bigl(x(1+\langle\pi_t,R_t\rangle)\bigr)\bigr]
\leq \EE\biggl[U^+\biggl(x(1+\langle\pi_t,R_t\rangle)\prod_{k=1}^{T-t}(1+\langle\hat{\pi}_{t+k}(V_{t+k-1}),R_{t+k}\rangle)\biggr)\biggr],
\ee
with $V_t:=x(1+\langle\pi_t,R_t\rangle)$ and $V_{t+k}:=V_{t+k-1}(1+\langle\hat{\pi}_{t+k}(V_{t+k-1}),R_{t+k}\rangle)$, for $k=1,\ldots,T-t$.
Since $\sup_{\pi\in\strat}\EE[U^+(xV^{\pi}_T)]<+\infty$, inequality  \eqref{eq:proof_ass_mp_finite} implies the validity of \eqref{eq:ass_mp_finite}, for all $t=0,1,\ldots,T-2$.
Finally, the optimal strategy $\pi^*=(\pi^*_t)_{t=1,\ldots,T}\in\strat\cap\cL$ is defined recursively by 
\[
\pi^*_t:=\hat{\pi}_t(V^{\pi^*}_{t-1}),
\qquad\text{where }\;
V^{\pi^*}_t=V^{\pi^*}_{t-1}(1+\langle\pi^*_t,R_t\rangle),
\text{ for all }t=1,\ldots,T, 
\,\text{ and }\,V^{\pi^*}_0=1.
\]
The optimality of $\pi^*$ follows by noting that, for any strategy $\pi\in\strat$, 
\[
\EE[U(V^{\pi}_T)]
\leq \EE[U_{T-1}(V^{\pi}_{T-1})]
\leq \ldots \leq U_0(1)
= \EE[U_1(V^{\pi^*}_1)]
= \ldots = \EE[U(V^{\pi^*}_T)].
\]
$(ii)\Rightarrow(i)$: 
in view of Proposition \ref{prop:NA1_mp}, this implication follows by the same argument used for proving the implication $(ii)\Rightarrow(i)$ in Theorem \ref{thm:utility}.
\end{proof}

To the best of our knowledge, Theorem \ref{thm:utility_mp} provides the most general characterization of market viability for discrete-time models under random convex constraints.

In the following definition, for $\pi\in\strat$, we denote by $V^{\pi}$ the stochastic process $(V^{\pi}_t)_{t=0,1,\ldots,T}$.

\begin{defn}	\label{def:deflator_mp}
An adapted stochastic process $Z=(Z_t)_{t=0,1,\ldots,T}$ satisfying $Z_t>0$ a.s. for all $t=1,\ldots,T$ and $Z_0=1$ is said to be a {\em supermartingale deflator} if $ZV^{\pi}$ is a supermartingale, for all $\pi\in\strat$. 
The set of all supermartingale deflators is denoted by $\cD$. 
An allowed strategy $\rho\in\strat$ is said to be a {\em num\'eraire portfolio} if $1/V^{\rho}\in\cD$, i.e., if $V^{\pi}/V^{\rho}$ is a supermartingale.
\end{defn}

We now prove a version of the fundamental theorem of asset pricing based on NA$_1$ in the presence of convex constraints, extending Theorem \ref{thm:numeraire} to the multi-period case. In a continuous-time semimartingale setting, the general version of this result is given in \cite[Theorem 4.12]{KaratzasKardaras07}.  
By relying on the same approach adopted in the proof of Theorem \ref{thm:numeraire}, we can give a simple and short proof in a general discrete-time setting.

\begin{thm}	\label{thm:numeraire_mp}
The following are equivalent:
\begin{enumerate}[(i)]
\item the {\em NA}$_1$ condition holds;
\item $\cD\neq\emptyset$;
\item there exists the num\'eraire portfolio.
\end{enumerate}
\end{thm}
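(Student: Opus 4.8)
The plan is to mirror the proof of Theorem \ref{thm:numeraire}, exploiting that, by Proposition \ref{prop:NA1_mp}, NA$_1$ is a local (period-by-period) condition and that both the supermartingale deflator property and the num\'eraire property are themselves local, being sequences of $\cF_{t-1}$-conditional one-period conditions. The implication $(iii)\Rightarrow(ii)$ is immediate: if $\rho$ is the num\'eraire portfolio, then $Z:=1/V^{\rho}$ is by definition a supermartingale deflator, so $\cD\neq\emptyset$. For $(ii)\Rightarrow(i)$ I would argue exactly as in the proof of $(ii)\Rightarrow(i)$ of Theorem \ref{thm:numeraire}: let $Z\in\cD$ and suppose $\xi\in L^0_+(\cF_T)$ with $P(\xi>0)>0$ is such that for every $n\in\N$ there is $\pi^n\in\strat$ with $V^{\pi^n}_T(1/n)\geq\xi$ a.s.; since $ZV^{\pi^n}$ is a supermartingale with $Z_0=1$, one gets $\EE[Z_T\xi]\leq\frac{1}{n}\EE[Z_TV^{\pi^n}_T]\leq\frac{1}{n}$ for all $n$, and letting $n\to\infty$ together with $Z_T>0$ a.s. forces $\xi=0$ a.s., which is NA$_1$.

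The substantive implication is $(i)\Rightarrow(iii)$, which I would carry out by constructing the num\'eraire portfolio one period at a time. Fix $t\in\{1,\ldots,T\}$. Under NA$_1$, Proposition \ref{prop:NA1_mp} gives that $\strat_t\cap\cL_t$ is $\cF_{t-1}$-measurable, closed and a.s. compact, and since $\langle\pi_t,R_t\rangle=\langle\proj_{\cL_t}(\pi_t),R_t\rangle$ a.s. it is no loss of generality to restrict the per-period optimization to $\strat_t\cap\cL_t$. I would then run the argument of the proof of Theorem \ref{thm:numeraire} conditionally on $\cF_{t-1}$: pick truncation functions $(f_n)_n$ with $f_n:\R^d\to(0,1]$, $f_n\nearrow1$ and $\EE[\log(1+\|R_t\|)f_n(R_t)\,|\,\cF_{t-1}]<+\infty$ a.s.; for each $n$, consider $\esssup\{\EE[\log(1+\langle\pi,R_t\rangle)f_n(R_t)\,|\,\cF_{t-1}]:\pi\in L^0(\cF_{t-1};\strat_t\cap\cL_t)\}$ and obtain an $\cF_{t-1}$-measurable maximizer $\rho^n_t$ by the same measurable-selection machinery used to produce the optimizers in Theorem \ref{thm:utility_mp} (directedness of the family, extraction of an a.s. convergent sequence via \cite[Lemma 1.64]{FollmerSchied} using compactness of $\strat_t\cap\cL_t$, and the conditional integrable upper bound coming from the one-period estimate \eqref{eq:bound} carried out conditionally); the first-order condition, derived as in \eqref{eq:rel_log_opt}--\eqref{eq:rel_log_opt_2} but with every expectation replaced by an $\cF_{t-1}$-conditional one, yields $\EE[\langle\pi-\rho^n_t,R_t\rangle(1+\langle\rho^n_t,R_t\rangle)^{-1}f_n(R_t)\,|\,\cF_{t-1}]\leq0$ a.s. for all $\pi\in L^0(\cF_{t-1};\strat_t)$; finally, using compactness one extracts an a.s. limit $\rho_t\in L^0(\cF_{t-1};\strat_t\cap\cL_t)$ of $(\rho^n_t)_n$ and lets $n\to\infty$ with conditional Fatou and $f_n\nearrow1$ to obtain
\be
\EE\left[\frac{1+\langle\pi,R_t\rangle}{1+\langle\rho_t,R_t\rangle}\,\bigg|\,\cF_{t-1}\right]\leq1
\qquad\text{a.s., for all }\pi\in L^0(\cF_{t-1};\strat_t).
\ee
Setting $\rho:=(\rho_t)_{t=1,\ldots,T}$, which is predictable by construction, and observing that $V^{\pi}_{t-1}$ and $V^{\rho}_{t-1}$ are $\cF_{t-1}$-measurable, the displayed inequality says precisely that $\EE[V^{\pi}_t/V^{\rho}_t\,|\,\cF_{t-1}]\leq V^{\pi}_{t-1}/V^{\rho}_{t-1}$ for every $t$ and every $\pi\in\strat$, i.e., $V^{\pi}/V^{\rho}$ is a supermartingale; hence $\rho$ is the num\'eraire portfolio.

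The main obstacle I anticipate is the measurability bookkeeping in this construction: ensuring that the period-$t$ conditional maximizers $\rho^n_t$, and then their limit $\rho_t$, can genuinely be chosen $\cF_{t-1}$-measurably, and that the conditional analogues of the upper-semicontinuity estimate \eqref{eq:bound} and of the Fatou passages hold uniformly in $\omega$. These are of the same nature as the measurable-selection steps already used in the proof of Proposition \ref{prop:NA1_mp} (via \cite[Corollary 14.6]{RW}) and in the dynamic-programming construction in Theorem \ref{thm:utility_mp}, so I would simply import those tools; I do not expect any genuinely new difficulty beyond this careful bookkeeping. One should also note that a direct appeal to Theorem \ref{thm:utility_mp} with a single log-type terminal utility does \emph{not} reproduce the one-period argument verbatim, because in a multi-period market the terminal wealth $V^{\pi}_T=\prod_{k=1}^T(1+\langle\pi_k,R_k\rangle)$ is not linear in the strategy, so the convex-combination perturbation used in the proof of Theorem \ref{thm:numeraire} must instead be performed period by period, which is exactly what the construction above does.
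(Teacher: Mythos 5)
Your proposal is correct and follows essentially the same route as the paper: $(iii)\Rightarrow(ii)$ by definition, $(ii)\Rightarrow(i)$ via the supermartingale bound $\EE[Z_T\xi]\leq 1/n$, and $(i)\Rightarrow(iii)$ by running the one-period truncated-log-utility argument of Theorem \ref{thm:numeraire} conditionally on $\cF_{t-1}$ in each period, using the a.s.\ compactness of $\strat_t\cap\cL_t$ from Proposition \ref{prop:NA1_mp} for measurable selection and extraction of limits, and then assembling the per-period conditional inequalities into the supermartingale property of $V^{\pi}/V^{\rho}$. Your closing observation that the perturbation argument must be performed period by period (rather than via a single terminal log-utility problem) matches exactly what the paper does.
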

\begin{proof}
$(i)\Rightarrow(iii)$: 
let $t\in\{1,\ldots,T\}$ and consider a family $(f_n)_{n\in\N}$ of measurable functions such that $f_n:\R^d\rightarrow(0,1]$ and $\EE[\log(1+\|R_t\|)f_n(R_t)]<+\infty$, for each $n\in\N$, and $f_n\nearrow1$ as $n\rightarrow+\infty$ (see the proof of Theorem \ref{thm:numeraire}). 
For each $n\in\N$, let $U_{t,n}(\omega,x):=\log(x)f_n(R_t(\omega))$, for all $(\omega,x)\in\Omega\times(0,+\infty)$.
For each $n\in\N$, it holds that $U_{t,n}\in\cU$.
By Proposition \ref{prop:NA1_mp}, NA$_1$ implies that $\strat_t\cap\cL_t$ is a.s. bounded and, therefore, inequality \eqref{eq:ineq_log} conditionally on $\cF_{t-1}$ implies that
$\esssup_{\pi_t\in L^0(\cF_{t-1};\strat_t\cap\cL_t)}\EE[U^+_{t,n}(1+\langle\pi_t,R_t\rangle)|\cF_{t-1}]<+\infty$ a.s.
Using again the boundedness of $\strat_t\cap\cL_t$, this can be shown to imply the existence of an element $\rho^n_t\in L^0(\cF_{t-1};\strat_t\cap\cL_t)$ such that
\[
\EE\bigl[U_{t,n}(1+\langle\rho_t^n,R_t\rangle)\big|\cF_{t-1}\bigr]
= \underset{\pi_t\in L^0(\cF_{t-1};\strat_t\cap\cL_t)}{\esssup}\EE\bigl[U_{t,n}(1+\langle\pi_t,R_t\rangle)\big|\cF_{t-1}\bigr]
\text{ a.s.}
\]
By the same reasoning as in \eqref{eq:rel_log_opt}-\eqref{eq:rel_log_opt_2} (now conditionally on $\cF_{t-1}$), we obtain that
\[
\EE\left[\frac{\langle\pi_t-\rho_t^n,R_t\rangle}{1+\langle\rho_t^n,R_t\rangle}f_n(R_t)\bigg|\cF_{t-1}\right] \leq 0
\text{ a.s.},
\qquad\text{ for all }\pi_t\in\strat_t\text{ and }n\in\N.
\]
Since $\strat_t\cap\cL_t$ is bounded and closed, we can assume that $(\rho_t^n)_{n\in\N}$ converges a.s. to an element $\rho_t\in L^0(\cF_{t-1};\strat_t\cap\cL_t)$ as $n\rightarrow+\infty$ (up to passing to a suitable subsequence, see \cite[Lemma 1.64]{FollmerSchied}). Since $f_n\nearrow1$ as $n\rightarrow+\infty$, an application of Fatou's lemma gives that
\[
\EE\left[\frac{\langle\pi_t-\rho_t,R_t\rangle}{1+\langle\rho_t,R_t\rangle}\bigg|\cF_{t-1}\right] \leq 0
\text{ a.s.},
\qquad\text{ for all }\pi_t\in L^0(\cF_{t-1};\strat_t).
\]
Let $\pi=(\pi_t)_{t=1,\ldots,T}\in\strat$. Then, for each $t\in\{1,\ldots,T-1\}$, the last inequality implies that
\[
\EE\left[\frac{V^{\pi}_t}{V^{\rho}_t}\bigg|\cF_{t-1}\right]
= \frac{V^{\pi}_{t-1}}{V^{\rho}_{t-1}}\,\EE\left[\frac{1+\langle\pi_t,R_t\rangle}{1+\langle\rho_t,R_t\rangle}\bigg|\cF_{t-1}\right]
\leq \frac{V^{\pi}_{t-1}}{V^{\rho}_{t-1}}
\text{ a.s.},
\]
thus proving that the strategy $\rho=(\rho_t)_{t=1,\ldots,T}$ corresponds to the num\'eraire portfolio.
\\
$(iii)\Rightarrow(ii)$: 
this implication is immediate by Definition \ref{def:deflator_mp}.\\
$(ii)\Rightarrow(i)$: 
this implication follows by the same argument used in the proof of  Theorem \ref{thm:numeraire}.
\end{proof}

Finally, we mention that the proof of Theorem \ref{thm:arb} can be similarly extended to the multi-period case, thus providing a utility maximization proof of the fundamental theorem of asset pricing for no classical arbitrage, in the spirit of \cite{Rogers94} (see also \cite[Section 2.1.4]{KabSaf}). 
Theorem \ref{thm:duality} also admits a direct extension to the multi-period setting, with an identical statement.

\bibliographystyle{alpha-abbrvsort}
\bibliography{bib_FR}

\end{document}